\newcolumntype{L}[1]{>{\raggedright\let\newline\\\arraybackslash\hspace{0pt}}m{#1}}
\newcolumntype{C}[1]{>{\centering\let\newline\\\arraybackslash\hspace{0pt}}m{#1}}
\newcolumntype{R}[1]{>{\raggedleft\let\newline\\\arraybackslash\hspace{0pt}}m{#1}}
\def\titlerunning#1{\gdef\titrun{#1}}
\def\author#1{\gdef\autrun{\def\and{\unskip, }#1}\gdef\@author{#1}}
\def\address#1{{\def\and{\\\hspace*{18pt}}\renewcommand{\thefootnote}{}%
\footnote {#1}}%
\markboth{\autrun}{\titrun}}
\def\email#1{e-mail: #1}
\def\subjclass#1{{\renewcommand{\thefootnote}{}%
\footnote{\emph{Mathematics Subject Classification (2010):} #1}}}
\def\keywords#1{\par\medskip
\noindent\textbf{Keywords.} #1}
\newtheorem{thm}{Theorem}[section]
\newtheorem{cor}[thm]{Corollary}
\newtheorem{lem}[thm]{Lemma}
\newtheorem{prop}[thm]{Proposition}
\theoremstyle{definition}
\newtheorem{defn}[thm]{Definition}
\newtheorem{rem}[thm]{Remark}
\numberwithin{equation}{section}
\newcommand{\mb}{\mathbf}
\newcommand{\edt}[1]{#1}  
\DeclareMathOperator*{\rmd}{d}
\DeclareMathOperator*{\rmds}{d^{\star}}
\begin{document}

\title{\vskip-1cm%
\Large\bf 
Second-Order Converses via Reverse Hypercontractivity\thanks{%
This work was supported in part by
NSF Grants CCF-1016625, CCF-0939370, DMS-1148711, by
ARO Grants W911NF-15-1-0479, W911NF-14-1-0094, and by the
Center for Science of Information. The main results of this paper were
initially announced in the conference paper \cite{lvv2017}, and form a 
part of the first author's Ph.D.\ thesis at Princeton University
\cite{liuthesis}.}}

\titlerunning{}

\author{Jingbo Liu \and Ramon van Handel \and Sergio Verd\'u}

\date{\vspace{-1.6cm}}

\maketitle

\address{Jingbo Liu, Institute for Data, Systems, and Society, 
Massachusetts Institute of Technology, 77 Massachusetts Avenue, Bldg.\ 
E18-421, Cambridge, MA 02139, USA; \email{jingbo@mit.edu}
\and
Ramon van Handel, Fine Hall 207, Princeton University, Princeton, NJ
08544, USA; \email{rvan@princeton.edu}
\and
Sergio Verd\'u:
\email{verdu@informationtheory.org}}

\subjclass{94A15; 94A24; 68P30; 62B10; 47D07; 60E15}

\begin{abstract} {\small
\edt{A strong converse shows that no procedure can beat the asymptotic (as 
blocklength $n\to\infty$) fundamental limit of a given 
information-theoretic problem for any fixed error probability. A 
second-order converse strengthens this conclusion by showing that the 
asymptotic fundamental limit cannot be exceeded by more than 
$O(\tfrac{1}{\sqrt{n}})$. While strong converses are achieved in a broad 
range of information-theoretic problems by virtue of the ``blowing-up 
method''---a powerful methodology due to Ahlswede, G\'acs and K\"orner 
(1976) based on concentration of measure---this method is fundamentally 
unable to attain second-order converses and is restricted 
to finite-alphabet settings.} Capitalizing on reverse hypercontractivity 
of Markov semigroups and functional inequalities, this paper develops the 
``smoothing-out'' method, an alternative to the blowing-up approach that
does not rely on finite alphabets and that leads to second-order 
converses in a variety of information-theoretic problems that were out of 
reach of previous methods.}
\keywords{Strong converse; information-theoretic inequalities; reverse 
hypercontractivity; blowing-up lemma; concentration of measure.}
\end{abstract}

\vfill

\setcounter{tocdepth}{2}
\tableofcontents

\section{Introduction}\label{sec:1}

\subsection{Overview}

What are the fundamental limits of a given data science problem? The 
investigation of such questions typically follows a two-sided analysis. In 
the \emph{achievability} part, one shows the existence of a procedure 
achieving a certain performance, e.g., error probability. In the 
\emph{converse} part, one shows that no procedure can accomplish better 
performance than a certain lower bound. This basic structure is common to 
a wide range of problems that span information theory, statistics, and 
computer science. While the study of achievability generally depends 
strongly on the special features of the problem at hand, many converse 
bounds that arise in different areas rely essentially on 
information-theoretic methods.

Frequently, the upper and lower bounds on optimal performance provided by 
the achievability and converse analyses do not coincide. Starting with 
Shannon (1948) \cite{CES48}, the emphasis in information theory has been 
on the analysis of the fundamental limits in the \emph{asymptotic} regime 
of long blocklengths. In that regime, the gap between many existing 
achievability and converse bounds does vanish and sharp answers can be 
found to questions such as the maximal transmission rate over noisy 
channels and the minimal data compression rate subject to a fidelity 
constraint. The last decade has witnessed a number of 
information-theoretic results (e.g., 
\cite{polyanskiy2010channel,YKSV,kostina2012fixed}), which are applicable 
in the \emph{non-asymptotic} regime. These results are strongly motivated 
by the need to understand the relevance of asymptotic limits to practical 
systems that may be subject to severe delay constraints, or scenarios 
where the alphabet size or the number of users is large compared to the 
number of channels or sources 
\cite{polyanskiy2010channel,orlitsky2006,kelly2011,polyanskiy2017}.

The development of a non-asymptotic information theory has required new 
and improved methods for investigating fundamental limits. \edt{In 
principle, it is not clear that the quantities that determine the 
fundamental limits in the asymptotic regime can accurately describe the 
performance at blocklengths of interest in realistic applications (e.g., 
1000 bits). This has led to the development of nonasymptotic bounds 
that capture more sophisticated distributional information on the relevant 
information quantities. Unfortunately, however, such bounds are often 
difficult to compute. A less accurate but more tractable approach to 
understanding performance at smaller blocklengths is to focus attention on 
so-called \emph{second-order} analysis, originally pioneered by Wolfowitz 
\cite{JW:1E} and Strassen \cite{strassen:64b} in the 1960s and 
significantly refined in recent years. In such bounds, the fundamental 
limits in the first-order (linear in the blocklength) asymptotics are 
sharpened by investigating the deviation from this asymptotic behavior to 
second order (square-root of the blocklength). In particular, in those 
situations where the first- and second-order asymptotics can be 
established precisely, the resulting bounds have often proven to be quite 
accurate except for very short blocklengths.

While precise second-order results are available in various basic
information-theoretic problems, more complicated setups, particularly 
those arising in multiuser information theory, have so far eluded 
second-order analysis. One of the challenges that emerged from this line 
of work is the development of second-order converses. Several existing 
approaches to obtaining second-order converses are briefly reviewed in 
Section \ref{sec:exist}; however, to date, a variety of 
information-theoretic problems have remained out of reach of such methods. 
On the other hand, the powerful general methodology introduced in 1976 by 
Ahlswede, G\'acs, and K\"orner \cite{ahlswede_bounds_cond1976} and 
exploited extensively in the classical book \cite{csiszar2011information} 
(see also the recent survey \cite{raginsky2014concentration}) has proven 
instrumental for proving converses in network information theory. Although 
this widely used technique yields converse results in a broad range of 
problems almost as a black box, it is fundamentally unable to yield 
second-order converses and is restricted to finite-alphabet settings.}

Inspired by a result by Margulis \cite{margulis1974probabilistic}, the 
method of Ahlswede, G\'acs, and K\"orner is based on a remarkable 
application of the concentration of measure phenomenon on the Hamming cube 
\cite{ledoux,boucheron2013}, which is known in information theory as the 
``blowing-up lemma''. Historically, this is probably the very first 
application of modern measure concentration to a data science problem. 
One of the main messages of this paper is that, surprisingly, measure 
concentration turns out not to be the right approach after all in this 
original application. Instead, we will revisit the theory of Ahlswede, 
G\'acs, and K\"orner based not on the violent ``blowing-up'' operation, 
but on a new and more pacifist ``smoothing out'' principle that exploits 
reverse hypercontractivity of Markov semigroups and functional 
inequalities. With this gentler touch, we are able to eliminate the 
inefficiencies of the blowing-up method and obtain second-order converses, 
essentially for free, in many information-theoretic problems that were out 
of reach of previous methods.

\subsection{Weak, strong, and second-order converses}
\label{sec:exist}

As a concrete basis for discussion, let us consider the basic setup of 
single-user data transmission through noisy channels, in which there is a 
three-way tradeoff between code size, blocklength, and error probability. 
Suppose we wish to transmit an equiprobable message $W\in\{1,\ldots,M\}$ 
through a noisy channel with given blocklength $n$.\footnote{%
	A \emph{channel} is a sequence of random transformations $\{ 
	P_{Y^n|X^n } \}$ indexed by blocklength $n$.}
We encode each possible message using a codebook $c_1,\dots,c_M\in 
\mathcal{X}^n$. What is the largest possible size $M$ of the codebook that 
can be decoded with error probability (averaged over equiprobable 
codewords and channel randomness) at most $\epsilon$?
\edt{For memoryless channels and in various more general situations,}
the maximum code size $M^*(n,\epsilon)$ satisfies \cite{polyanskiy2010channel}
\begin{align}
	\ln M^*(n, \epsilon) =n C+{\rm
	Q}^{-1}(1-\epsilon)\sqrt{nV} +o_{\epsilon}(\sqrt{n}),
	\label{e_channel}
\end{align}
where \edt{the capacity $C$ and dispersion $V$ determine the
precise first- and second-order asymptotics,} and ${\rm Q}^{-1}(\cdot)$ is 
the inverse Gaussian tail probability function. 
For memoryless channels ($P_{Y^n|X^n} = P^{\otimes n}_{Y|X}$),
channel capacity and dispersion are given, respectively, by the quantities
\cite{CES48,polyanskiy2010channel,hayashi2006}
\begin{align}
C &= \max_{P_X} I (X;Y), \label{eq:capacity} \\
V &= \mathrm{Var} \left[ \imath_{X;Y} (X;Y) \right], \label{eq:dispersion}
\end{align}
where $\imath_{X;Y} (a;b) = \log \frac{\mathrm{d} P_{Y|X=a}}{\mathrm{d} 
P_Y}(b)$, $I(X;Y) = \mathbb{E} [ \imath_{X;Y} (X;Y)] $ is the mutual 
information, and \eqref{eq:dispersion} is evaluated for a $P_X$
that attains the maximum in \eqref{eq:capacity}.

To prove a result such as \eqref{e_channel}, we must address two separate 
questions. The achievability part (that is, the inequality $\ge$) requires 
us to show existence of a codebook $c_1,\ldots,c_M$ that attains the 
prescribed error probability. This is usually accomplished using the 
\emph{probabilistic method} due to Shannon \cite{CES48} which analyzes the 
error probability not of a particular code, but rather its average when 
the codebooks are randomly drawn from an auxiliary distribution. For many 
problems in information theory with known first-order asymptotics, an 
achievability bound with $\sim\sqrt{n}$ second-order term can be derived 
using random coding in conjunction with other techniques 
\cite{verduAllerton2012,yassaee2013technique,watanabe2015}.

In contrast, the converse part (that is, the inequality $\le$) claims that
\emph{no} code can exceed the size given in \eqref{e_channel} for the
given error probability $\epsilon$. The simplest and most widely used tool in converse analyses  is  Fano's inequality
\cite{fano1952}, which yields, in the memoryless case, the following
estimate:
\begin{align}
	\ln M^*(n, \epsilon) \le \frac{n}{1-\epsilon} C+ \frac{h(\epsilon)}{1- \epsilon}, \quad \epsilon \in (0,1).
\label{eq_weakconv}
\end{align}
Such a bound is called a \emph{weak converse}: it yields the correct
first-order asymptotics in the limit of vanishing error probability, namely,
\begin{align}
	\lim_{\epsilon \downarrow 0} \lim_{n \to \infty} \frac1n \ln M^*(n, \epsilon) \le C.
\label{eq_weakconv-2}
\end{align}
 However, \eqref{eq_weakconv} does not rule out that transmission rates exceeding $C$ 
 may be feasible for any given nonzero error probability $\epsilon$.
To that end we need a more powerful result known as the \emph{strong converse}, namely,
\begin{align}
	\ln M^*(n, \epsilon) \le nC + o_\epsilon(n), \quad \epsilon \in (0,1),
\label{e_strongchannel}
\end{align}
which can be proved by a variety of methods (e.g.,
\cite{JW59,shannon1967,wolfowitz1968,ahlswede_bounds_cond1976,verdu1994,polyanskiy2010channel,polyanskiy2010arimoto}).
Our interest in this paper is on the even stronger notion of a
\emph{second-order converse}
\begin{align}
	 \ln M^*(n, \epsilon) \le nC + O_\epsilon(\sqrt{n}),
	 \quad \epsilon \in (0,1),
\label{e_secondorderconverse}
\end{align}
which not only implies a strong converse \eqref{e_strongchannel} but 
yields the $\sqrt{n}$ behavior in \eqref{e_channel}.
\edt{We caution that second-order converses do not always yield the 
sharpest possible constant in the $\sqrt{n}$-term; however, they aim to 
capture at least qualitatively various features of the sharp second-order 
asymptotics illustrated by \eqref{e_channel}.}

Of course, the basic data transmission problem that we have discussed here 
for sake of illustration is particularly simple, as the exact second-order 
asymptotics \eqref{e_channel} are known. This is not the case in more 
complicated information-theoretic problems. In particular, there are many 
problems in multiuser information theory for which either only weak 
converses are known, or, at most, strong converses have been obtained 
using the blowing-up method. It is precisely in such situations that new 
powerful and robust methods for obtaining second-order converses are 
needed. Next, we
briefly discuss the three main approaches that have been developed in the
literature for addressing such problems.
\begin{enumerate}[1.]
%
\item The goal of the \emph{single-shot method} (e.g., 
\cite{CES57,verdu1994,polyanskiy2010channel,YKSV,kostina2012fixed,polyanskiy2010arimoto,CIT-086}) 
is to obtain non-asymptotic achievability and converse bounds without 
imposing any probabilistic structure on either sources or channels. 
Therefore, only the essential features of the problem come into play. 
Those non-asymptotic bounds are expressed not in terms of average 
quantities such as entropy or mutual information, but in terms of 
\emph{information spectra}, namely the distribution function of 
information densities such as $\imath_{X;Y} (X;Y)$. When coupled with the 
law of large numbers or the ergodic theorem and with central-limit 
theorem tools, the bounds become second-order tight. The non-asymptotic 
converse bounds for single-user data transmission boil down to the 
derivation of lower bounds on the error probability of Bayesian $M$-ary 
hypothesis testing followed by anonymization of the actual codebook. 
Often, those lower bounds are obtained by recourse to the analysis of an 
associated auxiliary binary hypothesis testing problem. This converse 
approach has been successfully applied to some problems of multiuser 
information theory such as Slepian-Wolf coding, multiple access 
channels \cite{han_information_spectrum_methods}, and broadcast channels 
\cite{oohama2015}. Its application to other network setups is however a 
work in progress. 
\item 
\edt{\emph{Type class analysis} has been used extensively since 
\cite{shannon1967} and was popularized by 
\cite{csiszar2011information} mainly in the context of error exponents; 
however, it applies also to second-order analysis (see, e.g., 
\cite{CIT-086}). The idea behind this method is that to obtain lower 
bounds, we may consider a situation where the decoder is artificially 
given access to the \emph{type} (empirical distribution) of 
the source or channel sequences. Conditioned on each type, the 
distribution is equiprobable on the type class, so the evaluation of the 
conditional error probability is reduced to a combinatorial problem (this 
has been referred to as the ``skeleton'' or ``combinatorial kernel'' of the 
information-theoretic problem \cite{ahlswede1979coloring}).
However, this combinatorial problem is not easily 
solved in side information problems (without additional ideas such as 
the blowing-up lemma). Moreover, by its nature, the method of types is 
restricted to finite alphabets and memoryless channels.}
\item The method using the \emph{blowing-up lemma (BUL)} of 
Ahlswede-G\'acs-K\"orner 
\cite{ahlswede_bounds_cond1976,csiszar2011information,raginsky2014concentration} 
uses a completely different idea to attain converse bounds: rather than 
try to reduce the given converse problem to a simpler one (e.g., to a 
binary hypothesis testing problem or a codebook that uses a single type), 
the BUL method is in essence a general technique for bootstrapping a 
strong converse from a weak converse. Even when the error probability 
$\epsilon$ is fixed, the concentration of measure phenomenon implies that 
all sequences except those in a set of vanishing probability differ in at 
most a fraction $o(1)$ of coordinates from a correctly decoded sequence. 
One can therefore effectively reduce the regime of fixed error probability 
to one of vanishing error probability, where a weak converse suffices, 
with negligible cost. The advantage of this method is that it is very 
broadly applicable. However, as will be discussed below, quantitative 
bounds obtained from this method are always suboptimal and second-order 
converses are fundamentally outside its reach. Moreover, the perturbation 
argument used in this approach is restricted to finite alphabets. 
\end{enumerate}
\edt{The single-shot and type class analysis 
methods yield second-order converses, but there are various problems in 
network information theory that have remained so far outside their reach. 
In contrast,} the BUL method has been successful in establishing strong 
converses for a wide range of problems, including all settings in 
\cite{csiszar2011information} with known single-letter rate region; see 
\cite[Ch.~16]{csiszar2011information}. For some problems in network 
information theory, such as source coding with compressed side information 
\cite{csiszar2011information}, BUL remained hitherto the only method for 
establishing a strong converse \cite[Section~9.2]{CIT-086}. However, the 
generality of the method comes at the cost of an inherent inefficiency, 
which prevents it from attaining second-order converses and prevents its 
application beyond the finite alphabet setting.

In this paper, we will show that one can have essentially the best of both
worlds: the inefficiency of the blowing-up method can be almost entirely
overcome by revisiting the foundation on which it is based. The resulting
theory provides a canonical approach for proving second-order asymptotic converses and is applicable to a wide
range of information-theoretic problems (including problems with general
alphabets) for which no such results were known.

\subsection{``Blowing up'' vs ``smoothing out''}

In order to describe the core ingredients of our approach, let us begin by
delving into the main elements of the blowing-up method of
Ahlswede, G\'acs and K\"orner (a detailed treatment in a toy example will
be given in Section \ref{sec_pre}).

The concentration of measure phenomenon is one of the most important ideas
in modern probability \cite{ledoux,boucheron2013}. It states that for many
high-dimensional probability measures, almost all points in the space are
within a small distance of any set of fixed probability. This basic
principle may be developed in different settings and has numerous
important consequences; for example, it implies that Lipschitz functions
on high-dimensional spaces are sharply concentrated around their median, a
fact that will not be used in the sequel (but is crucial in many other
contexts). The following modern incarnation of the concentration property
used in the work of Ahlswede-G\'acs-K\"orner is due to Marton
\cite{marton1986}; see
also~\cite[Lemma~3.6.2]{raginsky2014concentration}.
\begin{lem}[\bf Blowing-up lemma] \label{bluplem}
Denote the
$r$-\textit{blowup} of $\mathcal{A}\subseteq\mathcal{Y}^n$ by
\begin{align}
	\mathcal{A}_r := \{ v^n\in\mathcal{Y}^n:
	d_n(v^n,\mathcal{A})\le r\},\label{e_BlowUp}
\end{align}
where $d_n$ is the Hamming distance on $\mathcal{Y}^n$. Then
\begin{align}
	P^{\otimes n}[\mathcal{A}_r] \ge
	1-e^{-c^2}\quad\mbox{for}\quad
	r=\sqrt{\frac{n}{2}}\bigg(
	\sqrt{\ln\frac{1}{P^{\otimes n}[\mathcal{A}]}} + c\bigg),
	\label{e_BUL}
\end{align}
for any $c>0$ and any probability measure $P$ on $\mathcal{Y}$.
\end{lem}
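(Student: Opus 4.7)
The plan is to deduce the statement from Marton's $L^1$ transportation-cost inequality: for every probability measure $\nu$ on $\mathcal{Y}^n$,
\begin{align*}
	W_1(\nu, P^{\otimes n}) \le \sqrt{\tfrac{n}{2}\, D(\nu \,\|\, P^{\otimes n})},
\end{align*}
where $W_1$ denotes the Wasserstein-$1$ distance induced by the Hamming metric $d_n$.

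Granted this inequality, I would argue as follows. For any measurable $B \subseteq \mathcal{Y}^n$ with $P^{\otimes n}[B] > 0$, let $\nu_B := P^{\otimes n}(\cdot \mid B)$; a direct computation gives $D(\nu_B \,\|\, P^{\otimes n}) = \ln(1/P^{\otimes n}[B])$. Applying Marton's inequality to $\nu_{\mathcal{A}}$ and to $\nu_{\mathcal{A}_r^c}$ and invoking the triangle inequality for $W_1$ yields
\begin{align*}
	W_1(\nu_{\mathcal{A}}, \nu_{\mathcal{A}_r^c})
	\le \sqrt{\tfrac{n}{2}\, \ln \tfrac{1}{P^{\otimes n}[\mathcal{A}]}}
	+ \sqrt{\tfrac{n}{2}\, \ln \tfrac{1}{P^{\otimes n}[\mathcal{A}_r^c]}}.
\end{align*}
On the other hand, every coupling of $\nu_{\mathcal{A}}$ and $\nu_{\mathcal{A}_r^c}$ is supported on pairs $(x, y)$ with $d_n(x, y) > r$ by the definition of $\mathcal{A}_r$, hence $W_1(\nu_{\mathcal{A}}, \nu_{\mathcal{A}_r^c}) \ge r$. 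Combining these two bounds and substituting the prescribed value of $r$ forces $\sqrt{\ln(1/P^{\otimes n}[\mathcal{A}_r^c])} \ge c$, which is exactly $P^{\otimes n}[\mathcal{A}_r] \ge 1 - e^{-c^2}$.

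For Marton's inequality itself I would use tensorization. The $n=1$ case is Pinsker's inequality, since with respect to the discrete metric $\mathbf{1}\{y \neq y'\}$ one has $W_1 = d_{\mathrm{TV}}$. For $n \ge 2$, I would disintegrate $\nu(dy^n) = \prod_{i=1}^n \nu_i(dy_i \mid y^{i-1})$ and build a coupling of $\nu$ with $P^{\otimes n}$ one coordinate at a time, using at step $i$ an optimal coupling of $\nu_i(\cdot \mid y^{i-1})$ with $P$ for the single-coordinate Hamming cost. The resulting expected Hamming distance satisfies
\begin{align*}
	\sum_{i=1}^n \mathbb{E}\, d_{\mathrm{TV}}(\nu_i(\cdot \mid Y^{i-1}), P)
	\le \sum_{i=1}^n \mathbb{E} \sqrt{\tfrac{1}{2}\, D(\nu_i(\cdot \mid Y^{i-1}) \,\|\, P)},
\end{align*}
and Cauchy--Schwarz combined with the chain rule $\sum_i \mathbb{E}\, D(\nu_i(\cdot \mid Y^{i-1}) \,\|\, P) = D(\nu \,\|\, P^{\otimes n})$ delivers the announced $\sqrt{n/2}$ scaling.

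The main obstacle is executing the tensorization step cleanly: one must construct the sequential optimal coupling in a measurable way when $\mathcal{Y}$ is a general (not necessarily finite) measurable space, and verify that the constant $1/2$ propagates without loss through the Cauchy--Schwarz estimate. Both concerns are addressable with standard tools (disintegration on Polish spaces and selection theorems for optimal $W_1$ couplings), but they are the substantive ingredient beyond the elementary reduction outlined above.
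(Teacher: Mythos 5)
Your argument is correct and is precisely Marton's original transportation-inequality proof, which is what the paper cites for this lemma (the paper itself gives no proof, deferring to \cite{marton1986} and \cite[Lemma~3.6.2]{raginsky2014concentration}): the $T_1$ inequality with constant $n/2$ tensorized from Pinsker, applied to the conditional measures on $\mathcal{A}$ and $\mathcal{A}_r^c$ together with the triangle inequality and the lower bound $W_1 \ge r$ for measures supported on sets at Hamming distance exceeding $r$. The measurability caveats you flag in the tensorization step are real but routine, exactly as you say.
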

\edt{For example,} if $n = 5 \times 10^9$ and $P^{\otimes n}[\mathcal{A}] 
= e^{-100}$, we can achieve $P^{\otimes n}[\mathcal{A}_r] \ge 1 - 
e^{-100}$ by letting $r = 10^6 = 0.0002 n$. Asymptotically, if $P^{\otimes 
n}[\mathcal{A}]$ does not vanish, then $P^{\otimes n}[\mathcal{A}_r] \to 
1$ as long as $r\gg\sqrt{n}$. In other words, the rather remarkable fact 
is that we can drastically increase the probability of a set by perturbing 
only a very small ($\approx n^{-1/2}$) fraction of coordinates of each of 
its elements.

Ahlswede, G\'acs and K\"orner realized how to leverage the BUL
 to prove strong converses. Suppose one is in a situation
where a weak converse,  such as \eqref{eq_weakconv}, can be proved through Fano's inequality
or any other approach. This can
be done in  all information-theoretic problems with known
first-order asymptotics. However, a weak converse only yields the
correct first-order constant when the error probability $\epsilon$ is allowed to vanish,
while we are interested in the regime of constant $\epsilon$. Let
$\mathcal{A}$ be the set of correctly decoded sequences, whose probability
is $1-\epsilon$. By the blowing-up lemma, a very slight blow-up
$\mathcal{A}_r$ of this set will already have probability $1-o(1)$. We now
apply the weak converse argument using $\mathcal{A}_r$ instead of
$\mathcal{A}$. On the one hand, this provides the desired first-order
term in \eqref{eq_weakconv}, as $1-\epsilon$ is replaced by $1-o(1)$. On
the other hand, we must pay a price in the argument for replacing the true
decoding set $\mathcal{A}$ by its blowup $\mathcal{A}_r$. If
$r=o(n)$, the latter turns out to contribute only to lower order and
thus a strong converse is obtained.

The beauty of this approach is that it provides a very general recipe for
upgrading a weak converse to a strong converse, and is therefore widely
applicable. However, the method has (at least) two significant drawbacks:
\begin{enumerate}[$\bullet$]
\item It is designed to yield a strong converse, not the stronger
second-order asymptotic converse. Therefore, it is not surprising
that it fails to yield  second-order behavior
that we expect from
\eqref{e_channel}: when optimized, the BUL method appears unable to give
a bound better than $O(\sqrt{n}\log^{\frac{3}{2}}n)$
(e.g., \cite[Thm.~3.6.7]{raginsky2014concentration}). This is already
suggested by Lemma \ref{bluplem} itself: to obtain a $\sim\sqrt{n}$
second-order term from \eqref{eq_weakconv}, we would need $\mathcal{A}_r$
to have probability at least $1-O(n^{-1/2})$. That would require
perturbing at least $r\sim\sqrt{n\log n}$ coordinates, which already gives
rise to additional logarithmic factors. Thus, the blowing-up operation is
too crude to recover the correct second-order behavior. In Appendix
\ref{app_data}, we will show that this is not an inefficiency in the
blowing-up lemma itself, but is in fact an insurmountable problem of
any method that is based on set enlargement.
\item The argument relies essentially on the finite-alphabet setting. This
is not because of the blowing-up lemma, which works for any alphabet
$\mathcal{Y}$, but because we must control the price paid for replacing
$\mathcal{A}$ by $\mathcal{A}_r$.
While Lemma \ref{bluplem} gives a lower bound on $P^{\otimes n}[\mathcal{A}_r]$
as a function of $P^{\otimes n}[\mathcal{A}]$,
we can also upper bound $P^{\otimes n}[\mathcal{A}_r]$
as a function of $P^{\otimes n}[\mathcal{A}]$  by the
following simple argument, which relies crucially on the finiteness of the alphabet.
\end{enumerate}
\begin{lem} \label{lemma:fin}
Suppose that $|\mathcal{Y}| < \infty$ and that  $P(a) >0 $ for all $a \in \mathcal{Y}$. Then
\begin{equation}  \label{eq:lemma:fin}
r \ln \frac{r}{n e K} \leq  \ln \frac{P^{\otimes n}[\mathcal{A}]}{P^{\otimes n}[\mathcal{A}_r]} \leq 0.
\end{equation}
where $K = \frac{|\mathcal{Y}|}{\min_{a\in \mathcal{Y}} P(a)}$. Therefore, if $r = o(n)$, then
\begin{equation}  \label{eq:lemma:fin:2}
\lim_{n \to \infty} \frac1n  \ln \frac{P^{\otimes n}[\mathcal{A}]}{P^{\otimes n}[\mathcal{A}_r]} = 0.
\end{equation}
\end{lem}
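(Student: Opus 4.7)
The plan is to prove the two halves of \eqref{eq:lemma:fin} separately. The upper bound $\ln(P^{\otimes n}[\mathcal{A}]/P^{\otimes n}[\mathcal{A}_r])\le 0$ is immediate from $\mathcal{A}\subseteq\mathcal{A}_r$ (the $0$-blowup is $\mathcal{A}$ itself), so all the content lies in the lower bound. For that I will use a direct covering argument relating $P^{\otimes n}[\mathcal{A}_r]$ to $P^{\otimes n}[\mathcal{A}]$ via Hamming balls of radius $r$ centered at points of $\mathcal{A}$.

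First, fix any selection map $\phi:\mathcal{A}_r\to\mathcal{A}$ sending each $v^n$ to some $a^n\in\mathcal{A}$ with $d_n(v^n,a^n)\le r$; such an $a^n$ exists by the definition of $\mathcal{A}_r$. The fiber $\phi^{-1}(a^n)$ sits inside the Hamming ball $B_r(a^n):=\{v^n:d_n(v^n,a^n)\le r\}$, whose cardinality I will bound by $|B_r(a^n)|\le \binom{n}{r}|\mathcal{Y}|^r\le (ne|\mathcal{Y}|/r)^r$ using the standard overcount (pick an $r$-subset of coordinates containing the disagreement positions, then assign symbols there freely) together with the elementary estimate $\binom{n}{r}\le(ne/r)^r$. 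Next, the product structure of $P^{\otimes n}$ yields the pointwise likelihood ratio
\[
\frac{P^{\otimes n}(v^n)}{P^{\otimes n}(a^n)}=\prod_{i:\,v_i\ne a_i}\frac{P(v_i)}{P(a_i)}\le\bigl(\min_{b\in\mathcal{Y}}P(b)\bigr)^{-r}\qquad\text{for every }v^n\in B_r(a^n),
\]
assuming the non-trivial case $\min_b P(b)<1$. This is the one place where finiteness and strict positivity of $P$ intervene, precisely as flagged in the preceding discussion of the limitations of the BUL method.

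Summing the two ingredients I obtain
\[
P^{\otimes n}[\mathcal{A}_r]=\sum_{a^n\in\mathcal{A}}\sum_{v^n\in\phi^{-1}(a^n)}P^{\otimes n}(v^n)\le |B_r|\,(\min P)^{-r}\,P^{\otimes n}[\mathcal{A}],
\]
and taking logarithms and substituting the ball-volume estimate gives $\ln(P^{\otimes n}[\mathcal{A}_r]/P^{\otimes n}[\mathcal{A}])\le r\ln(ne|\mathcal{Y}|/r)+r\ln(1/\min P)=r\ln(neK/r)$, which is exactly the lower bound in \eqref{eq:lemma:fin} after rearrangement. The asymptotic statement \eqref{eq:lemma:fin:2} then follows by dividing the two-sided estimate by $n$ and invoking $x\ln x\to 0$ as $x\to 0^+$ applied to $x=r/n$, which is sufficient since the hypothesis $r=o(n)$ forces $x\to 0$. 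There is no genuine obstacle here: the argument is a standard packing calculation, and the only thing that requires care is the bookkeeping of constants so that the combination $K=|\mathcal{Y}|/\min P$ emerges cleanly from separately bounding the volume of the Hamming ball and the pointwise likelihood ratio.
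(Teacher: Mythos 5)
Your proof is correct and follows essentially the same route as the paper's: bound $P^{\otimes n}[\mathcal{A}_r]$ by covering it with Hamming balls of radius $r$ centered at points of $\mathcal{A}$, combine the cardinality bound $\binom{n}{r}|\mathcal{Y}|^r\le(ne|\mathcal{Y}|/r)^r$ with the pointwise likelihood-ratio bound $(\min_a P(a))^{-r}$, and note that the upper bound is trivial from $\mathcal{A}\subseteq\mathcal{A}_r$. The selection map $\phi$ is just a slightly more careful way of handling the overlap of the balls, which the paper glosses over; otherwise the two arguments are identical.
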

\begin{proof}
The right inequality in  \eqref{eq:lemma:fin} follows from $\mathcal{A} \subset \mathcal{A}_r$.
The set $\mathcal{A}_r$ is the overlapping union of spheres centered at the elements of $\mathcal{A}$,
each of which contains fewer than $\binom{n}{r} |\mathcal{Y}|^r$ elements. This fact, along with the  crude bound
\begin{equation}
\frac{P^{\otimes n} (y^n)}{P^{\otimes n} (z^n)} \geq 
\left(
\min_{a\in \mathcal{Y}} P(a)
\right)^{d_n ( y^n, z^n) }
\end{equation}
yields
\begin{equation}
P^{\otimes n}[\mathcal{A}_r] \leq \binom{n}{r}K^r  P^{\otimes n}[\mathcal{A}].
\end{equation}
Then, the left inequality in \eqref{eq:lemma:fin} follows from $ \binom{n}{r} \leq  \left( \frac{e n}{r}\right)^r $.
\end{proof}

The main contribution of this paper is to show that the shortcomings of
the blowing-up method can be essentially eliminated while retaining its
wide applicability. This is enabled by two key ideas that play
a central role in our theory.
\begin{enumerate}[1.]
\item \emph{Functional inequalities}.
To prove a weak converse such as \eqref{eq_weakconv}, one must relate the
relevant information-theoretic quantity (e.g., mutual information) to the
error probability (i.e., the probability of the decoding sets). This
connection is generally made using a data processing inequality. However,
in BUL-type methods, we no longer  work directly with the
original decoding sets, but rather with a perturbation of these that has
better properties. Thus, there is also no reason to restrict attention to
\emph{sets}: we can replace the decoding set by an arbitrary
\emph{function}, and then
control the relevant information functionals through their variational
characterization (that is, by convex duality). Unlike the data processing
inequality, such variational characterizations are in principle sharp and
provide a lot more freedom in how to approximate the decoding set.
\item \emph{``Smoothing out'' vs.\ ``blowing up''}. Once one makes the
psychological step of working with functional inequalities, it becomes
readily apparent that the idea of ``blowing up'' the decoding sets is much
more aggressive than necessary to obtain a strong converse. What turns out
to matter is not the overall size of the decoding set, but only the
presence of very small values of the function  used in the
variational principle. Modifying the indicator function of a set to
eliminate its small values can be accomplished by a much smaller
perturbation than is needed to drastically increase its probability: this
basic insight explains the fundamental inefficiency of the classical BUL
approach.

To implement this idea, we must identify an efficient method to improve the
positivity of an indicator function. To this end, rather than ``blowing up''
the set by adding all points within Hamming distance $r$, we will ``smooth
out'' the indicator function by averaging it locally over points at distance
$\sim r$. More precisely, this averaging will be performed by means of a
suitable Markov semigroup, which enables us to apply the \emph{reverse
hypercontractivity} phenomenon 
\cite{borell1982positivity,mossel2013reverse} to establish strong
positivity-improving properties. Such hypercontractivity phenomenon replaces, in our approach,  the
much better known concentration of measure phenomenon that was exploited in
the BUL method. We will show, moreover, that Markov semigroup perturbations
are much easier to control than their blowing-up counterparts, so that our
method extends readily to general alphabets, Gaussian channels, and channels
with memory.
\end{enumerate}
When combined, the above ideas provide a powerful machinery for developing
\edt{second-order converses}. For example, in the basic data transmission
problem that we discussed at the beginning of Section \ref{sec:exist}, our
method yields
\begin{equation} \label{prokofiev}
	\ln M^* (n, \epsilon) \le n C
	+2\sqrt{\ln\frac{1}{1-\epsilon}}\sqrt{n(\alpha-1)}
	+\ln\frac{1}{1-\epsilon}
\end{equation}
(cf.\ Theorem \ref{thm_fano_maximal}), where $\alpha$ is a certain quantity
closely related to the dispersion. 
\edt{When compared to the exact second-order
asymptotics \eqref{e_channel}, we see that
the second-order term has the correct scaling not only in the 
blocklength $n \to \infty$, but also in the error probability $\epsilon 
\to 1$ (as 
$\mathrm{Q}^{-1}(1-\epsilon)\sim\sqrt{2\ln\frac{1}{1-\epsilon}}$ as 
$\epsilon\to 1$). However, our method does not recover the exact 
dispersion in \eqref{e_channel}; nor does it capture the fact that 
in the small error regime $\epsilon < \frac12$, the second-order term in 
\eqref{e_channel} is in fact \emph{negative} (the second-order term in 
\eqref{prokofiev} is always positive). While the latter features are not 
directly achievable by the methods of this paper, they can be addressed
by combining our methods with type class analysis \cite{jingbo18}. The 
details of such a refined analysis are beyond the scope of this paper.}

Let us remark that there are various connections between the notions
of (reverse) hypercontractivity and concentration of measure; see, e.g.,
\cite{bobkov2001} and \cite[p.\ 116]{ledoux1996isoperimetry} in the
continuous case and \cite{mossel2006,mossel2013reverse} in the discrete
case. However, our present application of reverse hypercontractivity is
different in spirit: we are not using it to achieve concentration but
only a much weaker effect, which is the key to the efficiency of our method.
The sharpness and broad applicability of our method suggests that this may
be the ``right'' incarnation of the pioneering ideas of
Ahlswede-G\'acs-K\"orner: one might argue that the blowing-up method 
succeeded in its aims, in essence, because it approximates the natural smoothing-out 
operation.

\subsection{Organization}

We have sketched the main ideas behind our approach in broad terms. To
describe the method in detail, it is essential to get our hands dirty. To
this end, we develop both the blowing-up and smoothing-out approaches in
Section~\ref{sec_pre} in the simplest possible toy example: that of binary
hypothesis testing. While this problem is amenable to a (completely
classical) direct analysis, we view it as the ideal pedagogical setting in
which to understand the main ideas behind our general theory. 

The remainder of the paper is devoted to implementing these ideas in 
increasingly nontrivial situations.

In Section~\ref{sec_fano}, we use our 
approach to strengthen Fano's inequality with an optimal $O(\sqrt{n})$ 
second-order term. We develop both the discrete and the Gaussian cases, 
and illustrate their utility in applications to broadcast channels and the 
output distribution of good channel codes.

In Section~\ref{sec_change}, we use our approach to strengthen a basic image
size characterization bound of \cite{ahlswede_bounds_cond1976}, obtaining a
$O(\sqrt{n})$ second-order term. The theory is developed once
again both in the discrete and the Gaussian cases. We illustrate the utility
of our results in applications to hypothesis testing under communication
constraints, and to source coding with compressed side information.

\edt{The paper concludes with two appendices. In Appendix \ref{app_data}, 
we show that no method based on set enlargement can achieve second-order 
converses. Thus the functional viewpoint of this paper is essential. 
Finally, Appendix \ref{app_proofs} contains proofs of some technical 
results used in section \ref{sec_change}.}

\subsection{Notation}

We end this section by collecting common notation that will be used 
throughout the paper. First, we record two conventions that will always be 
in force:
\begin{enumerate}[$\bullet$]
\item All information-theoretic quantities (such as entropy, relative 
entropy, mutual information, etc.) will be defined in base $e$.
\item In all variational formulas (such as \eqref{e_var}, \eqref{e80}, 
\eqref{eq_dstar}, \eqref{e53}, etc.)\ it is implicit in the notation that
we optimize only over functions or measures for which each term of the 
expression inside the supremum are finite.
\end{enumerate}
We use standard information-theoretic notations
for relative entropy $D(P\|Q)$, conditional relative entropy
$D(P_{X|U}\|Q_{X|U}|P_U)= D(P_{X|U}P_U\|Q_{X|U}P_U)$, mutual information 
$I(X;Y)$, entropy $H(X)$, and differential entropy $h(X)$. 

We denote by $\mathcal{H}_+(\mathcal{Y})$ the set of nonnegative Borel 
measurable functions on $\mathcal{Y}$, and by 
$\mathcal{H}_{[0,1]}(\mathcal{Y})$ the subset of 
$\mathcal{H}_+(\mathcal{Y})$ with range in $[0,1]$. For a measure $\nu$ 
and $f\in\mathcal{H}_+(\mathcal{Y})$, we write $\nu(f):=\int f\,d\nu$ and 
$\|f\|_p^p = \|f\|_{L^p(\nu)}^p = \int |f|^p\,d\nu$. We will frequently 
use $\|f\|_{L^0(\nu)} := \lim_{q\downarrow 0}\|f\|_{L^q(\nu)} = 
\exp(\nu(\ln f))$ for a probability measure $\nu$. The measure of a set is 
denoted as $\nu[\mathcal{A}]$, and the restriction of a measure to a set 
is denoted $\mu|_{\mathcal{C}}[\mathcal{A}] := 
\mu[\mathcal{A}\cap\mathcal{C}]$. A random transformation $Q_{Y|X}$, 
mapping measures on $\mathcal{X}$ to measures on $\mathcal{Y}$, is viewed 
as an operator mapping $\mathcal{H}_+(\mathcal{Y})$ to 
$\mathcal{H}_+(\mathcal{X})$ according to 
$Q_{Y|X}(f):=\mathbb{E}[f(Y)|X=\cdot]$ where $(X,Y)\sim Q_{XY}$. The 
notation $U-X-Y$ denotes that the random variables $U,X,Y$ form a Markov 
chain. The cardinality of $\mathcal{Y}$ is denoted $|\mathcal{Y}|$, and 
$\|x^n\|$ denotes the Euclidean norm of a vector $x^n\in\mathbb{R}^n$. 
Finally, $y_i$ denotes the $i$th element of a sequence or vector, while 
$y^i$ denotes the components up to the $i$th one $(y_j)_{j\le i}$.

\section{Prelude: Binary Hypothesis Testing}\label{sec_pre}

\subsection{Setup}
The most elementary setting in which the ideas of this paper can be
developed is the classical problem of binary hypothesis testing. It should
be emphasized that our theory does not prove anything new in this setting:
due to the Neyman-Pearson lemma, the exact form of the optimal tests is
known and thus the analysis is amenable to explicit computation (we will
revisit this point in Section \ref{sec_set}). Nonetheless, the
simplicity of this setting makes it the ideal toy example in which to
introduce and discuss the main ideas of this paper.

In the binary hypothesis testing problem, we consider two competing 
hypotheses: data is drawn from a probability distribution on 
$\mathcal{Y}$, which we know is either $P$ or $Q$. Our aim is to test, on 
the basis of a data sample, whether it was drawn from $P$ or $Q$. More 
precisely, a (possibly randomized) \emph{test} is defined by a function 
$f\in \mathcal{H}_{[0,1]}(\mathcal{Y})$: when a data sample 
$y\in\mathcal{Y}$ is observed, we decide hypothesis $P$ with probability 
$f(y)$, and decide hypothesis $Q$ otherwise. Thus, we must consider two 
error probabilities:
\begin{align}
\pi_{P|Q}:=Q(f)=  ~\mbox{the probability that}~P~ \mbox{is decided when}~Q~ \mbox{is true} \nonumber\\
\pi_{Q|P}:=1 - P(f) =  ~\mbox{the probability that}~Q~ \mbox{is decided when}~P~ \mbox{is true} \nonumber
\end{align}
We aim to investigate the fundamental
tradeoff between $\pi_{P|Q}$ and $\pi_{Q|P}$
in the case of product measures $P\leftarrow P^{\otimes n}$,
$Q\leftarrow Q^{\otimes n}$; in other words,  what
is the smallest error probability $\pi_{P^{\otimes n}|Q^{\otimes n}}$ that
may be achieved by a test that satisfies $\pi_{Q^{\otimes n}|P^{\otimes n}}\le \epsilon\in (0,1)$? In
this setting, the exact first-order and second-order asymptotics 
are due to Chernoff \cite{HC56} and Strassen \cite{strassen:64b}, respectively, resulting in 
\begin{equation}
	\ln\frac1{\pi_{P^{\otimes n}|Q^{\otimes n}}} =
	nD(P\|Q)+\mathrm{Q}^{-1}(1-\epsilon)\sqrt{nV(P\|Q)} + 
	o_\epsilon(\sqrt{n}),
	\label{eq_strassen}
\end{equation}
where $V(P\|Q)=\mathrm{Var}_P(\ln\frac{dP}{dQ})$. 

The achievability ($\le$) part of \eqref{eq_strassen} is straightforward
(see Section \ref{sec_set}), so the main interest in the proof is
to obtain the converse ($\ge$). In this section, we will illustrate both
the blowing-up method of Ahlswede, G\'acs and K\"orner and the new
approach of this paper in the context of this simple problem, and compare
the resulting bounds to the exact second-order asymptotics
\eqref{eq_strassen}.

\subsection{The blowing-up method}\label{sec_bul}

For simplicity, to illustrate the blowing-up method in the context of the 
binary hypothesis testing problem, we restrict attention to 
\emph{deterministic} tests $f=1_\mathcal{A}$ for some 
$\mathcal{A}\subseteq\mathcal{Y}$ (that is, we decide hypothesis $P$ if 
$y\in\mathcal{A}$ and hypothesis $Q$ otherwise). This is not essential, 
but it simplifies the analysis.

As we mentioned, the blowing-up method is a general technique for 
upgrading weak converses to strong converses. In the present setting, a 
weak converse (for any $(P,Q)$, not necessarily product measures) follows 
in a completely elementary manner from the data processing property of 
relative entropy.

\begin{lem}[\bf Weak converse bound for binary hypothesis testing]
\label{lem_weakbht}
Let $P,Q$ be probability measures on $\mathcal{Y}$ and
$\mathcal{A}\subseteq\mathcal{Y}$ define the set of observations for which the deterministic test
decides $P$. 
If $\pi_{Q|P}=P[\mathcal{A}^c]\le\epsilon$,
then $\pi_{P|Q}=Q[\mathcal{A}]$ satisfies 
\begin{equation}
	\ln \frac1{\pi_{P|Q}} \le
	\frac{D(P\|Q)+ \ln 2}{1-\epsilon} .
\end{equation}
\end{lem}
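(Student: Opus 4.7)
The plan is to apply the data-processing inequality for relative entropy to the binary partition $\{\mathcal{A},\mathcal{A}^c\}$, and then estimate the resulting binary divergence by a clean lower bound in terms of $\pi_{P|Q}$ and $\pi_{Q|P}$ alone.

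Concretely, I would first push $P$ and $Q$ through the (deterministic) binary test $\mathbf{1}_{\mathcal{A}}$, obtaining Bernoulli laws with parameters $P[\mathcal{A}]=1-\pi_{Q|P}$ and $Q[\mathcal{A}]=\pi_{P|Q}$. Data processing yields
\begin{equation*}
D(P\|Q)\;\ge\;(1-\pi_{Q|P})\ln\frac{1-\pi_{Q|P}}{\pi_{P|Q}}
+\pi_{Q|P}\ln\frac{\pi_{Q|P}}{1-\pi_{P|Q}}.
\end{equation*}
The second term is the only source of trouble; since $\pi_{P|Q}\in[0,1]$, we have $1-\pi_{P|Q}\le 1$, so that term is at least $\pi_{Q|P}\ln\pi_{Q|P}$. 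This lets me regroup:
\begin{equation*}
D(P\|Q)\;\ge\;(1-\pi_{Q|P})\ln\frac{1}{\pi_{P|Q}}+h_2(\pi_{Q|P}),
\end{equation*}
where $h_2$ is the binary entropy, with the convention $0\ln 0=0$.

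Next I would use the elementary bound $h_2(\cdot)\ge -\ln 2$ to obtain
\begin{equation*}
D(P\|Q)+\ln 2\;\ge\;(1-\pi_{Q|P})\ln\frac{1}{\pi_{P|Q}}.
\end{equation*}
Finally, the hypothesis $\pi_{Q|P}\le\epsilon$ together with the trivial observation that the inequality to be proved is vacuous when $\pi_{P|Q}\ge 1$ (so we may assume $\ln\tfrac{1}{\pi_{P|Q}}\ge 0$) lets me replace $1-\pi_{Q|P}$ by the lower bound $1-\epsilon$ and divide through, which yields the claim.

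No step is really an obstacle; the only minor subtlety is the sign-handling that justifies weakening $1-\pi_{Q|P}$ to $1-\epsilon$, which I handle by isolating the trivial regime $\pi_{P|Q}\ge 1$. The $\ln 2$ in the statement is precisely the worst-case value of $h_2$, so this constant is tight for the argument as given; a slight sharpening to $\ln\tfrac{1}{1-\epsilon}$ is possible by using $h_2(p)\ge p\ln(1-p)+ (1-p)\ln(1-p) = \ldots$ but I would not pursue that here since the stated constant suffices for the sequel.
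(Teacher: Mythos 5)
Your proof is correct and is essentially the paper's own argument: data processing applied to the partition $\{\mathcal{A},\mathcal{A}^c\}$, discarding the harmless $-\pi_{Q|P}\ln(1-\pi_{P|Q})\ge 0$ term, and absorbing the remaining entropy term via the worst-case bound $\ln 2$. The only quibble is notational: for your regrouped display and the bound ``$h_2\ge-\ln 2$'' to be consistent, $h_2(p)$ must denote $p\ln p+(1-p)\ln(1-p)$, i.e.\ the \emph{negative} of the usual binary entropy (the paper writes this as $-h(\cdot)$ with $h\le\ln 2$).
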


\begin{proof}
By the data processing property of relative entropy, we have
\begin{align}
	 D ( P \| Q    ) &\geq
	P[\mathcal{A}]\ln\frac{P[\mathcal{A}]}{Q[\mathcal{A}]} +
	P[\mathcal{A}^c]\ln\frac{P[\mathcal{A}^c]}{Q[\mathcal{A}^c]}
	\label{uz}
	\\
	&\geq
	(1-\epsilon)\ln\frac{1}{Q[\mathcal{A}]}
	- h(P[\mathcal{A}^c]),
	\label{uzweak}
\end{align}
where $h(\epsilon) := -\epsilon\ln\epsilon - (1-\epsilon)\ln(1-\epsilon) \leq \ln 2$
is the binary entropy function.
\end{proof}

Specializing to product measures $P\leftarrow P^{\otimes n}$,
$Q\leftarrow Q^{\otimes n}$, Lemma \ref{lem_weakbht} yields
\begin{align}
	\ln\frac1{\pi_{P^{\otimes n}|Q^{\otimes n}}}
	\leq   \frac{nD ( P \| Q    ) + \ln 2}{1-\epsilon}
\label{n:uzweak}
\end{align}
for any test that satisfies $\pi_{Q^{\otimes n}|P^{\otimes n}}\le\epsilon$. 
While this is sufficient to conclude the weak converse [if 
$\pi_{Q^{\otimes n}|P^{\otimes n}} \to 0$, then $\pi_{P^{\otimes 
n}|Q^{\otimes n}}$ cannot vanish faster than $\exp ( - n D ( P \| Q ))$] 
it falls short of recovering the correct first-order asymptotics in the 
regime of fixed $\epsilon$, as we saw in \eqref{eq_strassen}.

\edt{The remarkable idea of Ahlswede, G\'acs and and K\"orner is that the 
argument of Lemma \ref{lem_weakbht} can be significantly improved by
applying} the data processing argument \eqref{uzweak} not to 
the test set $\mathcal{A}$ satisfying
\begin{equation}
\pi_{Q^{\otimes n}|P^{\otimes n}} =
P^{\otimes n}[\mathcal{A}^c]\le \epsilon,
\end{equation} 
but to its blow-up
$\mathcal{A}_r$.  Then
\begin{equation}
	\ln \frac1{Q^{\otimes n}[\mathcal{A}_r]}\le
	\frac{n D(P\|Q) + \ln 2}{P^{\otimes n}[\mathcal{A}_r]}.
	\label{eq_weakblup}
\end{equation}
We must now control both the gain and the loss caused by the blowup.
On the one hand, by the blowing-up Lemma \ref{bluplem},
$P^{\otimes n}[\mathcal{A}_r]=1-o(1)$ as long as $r\gg\sqrt{n}$, which
eliminates the $1-\epsilon$ factor in the weak converse \eqref{n:uzweak}.
On the other hand, assuming finite alphabets we can invoke Lemma \ref{lemma:fin} with $ r\ll n$
and $P\leftarrow Q$ (we may assume without loss of generality that $Q$ is 
positive on $\mathcal{Y}$) to obtain the strong converse
\begin{equation} \label{mozart}
	 \ln \frac1{\pi_{P^{\otimes n}|Q^{\otimes n}}} \le
	nD(P\|Q) + o_\epsilon(n).
\end{equation} 
With a little more effort, we can
optimize the argument over $r$ and quantify the magnitude of the 
lower-order term.

\begin{prop} \label{prop:blowBHT}
Assume $|\mathcal{Y}|<\infty$.
Any deterministic test between
$P^{\otimes n}$ and $Q^{\otimes n}$ on
$\mathcal{Y}^n$ such that
$\pi_{Q^{\otimes n}|P^{\otimes n}} \le \epsilon\in (0,1)$
satisfies
\begin{align}
	\ln\frac1{\pi_{P^{\otimes n}|Q^{\otimes n}}} \leq
	n D ( P \| Q    )  +
	 O(\sqrt{n} \log^{\frac32} n).
\label{n:uzstrong}
\end{align}
\end{prop}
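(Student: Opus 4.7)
The plan is to execute the blowing-up recipe sketched in the passage before the statement, but to track constants carefully in order to quantify the lower-order term. Given a deterministic test with decoding set $\mathcal{A}\subseteq\mathcal{Y}^n$ satisfying $P^{\otimes n}[\mathcal{A}^c]\le \epsilon$, I will apply the weak converse of Lemma \ref{lem_weakbht} not to $\mathcal{A}$ itself but to the Hamming blow-up $\mathcal{A}_r$ for a suitably chosen radius $r=r(n)$. This yields
\begin{equation*}
	(1-\delta_n)\ln\frac{1}{Q^{\otimes n}[\mathcal{A}_r]}
	\le nD(P\|Q)+\ln 2,
\end{equation*}
where $\delta_n := P^{\otimes n}[\mathcal{A}_r^c]$.

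Next, I would control the two sources of error introduced by replacing $\mathcal{A}$ with $\mathcal{A}_r$. For the gain, Lemma \ref{bluplem} (the BUL) with the choice $r=\sqrt{n/2}\bigl(\sqrt{\ln\tfrac{1}{1-\epsilon}}+c\bigr)$ gives $\delta_n\le e^{-c^2}$, so that $1/(1-\delta_n)=1+O(e^{-c^2})$. For the loss, I would invoke Lemma \ref{lemma:fin} with the roles of $P$ and $Q$ swapped (we may assume $Q$ is everywhere positive, since otherwise the problem reduces to a smaller alphabet): this gives
\begin{equation*}
	\ln\frac{1}{Q^{\otimes n}[\mathcal{A}]}
	\le \ln\frac{1}{Q^{\otimes n}[\mathcal{A}_r]}
	+ r\ln\frac{enK}{r},
\end{equation*}
with $K=|\mathcal{Y}|/\min_{a\in\mathcal{Y}}Q(a)$. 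Combining the two inequalities gives
\begin{equation*}
	\ln\frac{1}{\pi_{P^{\otimes n}|Q^{\otimes n}}}
	\le \bigl(1+O(e^{-c^2})\bigr)\bigl(nD(P\|Q)+\ln 2\bigr)
	+ r\ln\frac{enK}{r}.
\end{equation*}

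Finally, I would optimize $c$ (equivalently $r$) to balance the two residual terms. The BUL term contributes $O(n\,e^{-c^2})$ and the Lemma \ref{lemma:fin} term contributes $O(\sqrt{n}\,c\ln n)$ (since $r=\Theta(\sqrt{n}\,c)$ once $c$ dominates the constant $\sqrt{\ln\tfrac{1}{1-\epsilon}}$, and $\ln(enK/r)=O(\ln n)$). The BUL term is negligible as soon as $c^2\ge \ln n$, i.e.\ $c=\sqrt{\ln n}$, which yields $r=\Theta(\sqrt{n\ln n})$ and an overall residual of order $\sqrt{n}\,\sqrt{\ln n}\,\ln n = \sqrt{n}\ln^{3/2}n$. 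This produces the claimed bound.

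The only real difficulty is bookkeeping: there is no deep obstacle, but the tension between the BUL (which pushes $r$ up to kill $\delta_n$) and Lemma \ref{lemma:fin} (which penalizes large $r$ via the combinatorial factor $\binom{n}{r}K^r$) must be quantified exactly, and it is precisely this tension that forces the $\log^{3/2}n$ factor. Indeed, the logarithmic overhead is inherent to any set-enlargement argument, which is the very inefficiency the rest of the paper sets out to remove via the smoothing-out approach.
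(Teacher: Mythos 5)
Your proposal is correct and follows essentially the same route as the paper's proof: apply the weak converse to the blow-up $\mathcal{A}_r$, invoke the blowing-up lemma and the counting bound of Lemma \ref{lemma:fin} (with $P\leftarrow Q$), and optimize to $r\asymp\sqrt{n\log n}$. The only difference is cosmetic parametrization of the radius; the balancing of the two error terms and the resulting $\sqrt{n}\log^{3/2}n$ slack match the paper exactly.
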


\begin{proof}
By the blowing-up Lemma \ref{bluplem}, we have (since $3 > \sqrt{2} - \frac1{\sqrt{n}}$)
\begin{equation}
	P^{\otimes n}[\mathcal{A}_r] \ge
	1-e^{-r^2/n}\quad\mbox{for all }
	r\ge 3\sqrt{n\ln\frac{1}{1-\epsilon}}.
\label{eq_blupsimple}
\end{equation}
Assembling \eqref{eq:lemma:fin} (with $P \leftarrow Q$),  \eqref{eq_weakblup}  and \eqref{eq_blupsimple}
we obtain 
\begin{equation}
	\ln \frac1{\pi_{P^{\otimes n}|Q^{\otimes n}}}
	\le
	\frac{n D(P\|Q) + \ln 2}{1-e^{-r^2/n}}
	+ r \ln\frac{ K e \, n}{r}.
\end{equation}
Choosing $r\asymp\sqrt{n\log n}$ results in \eqref{n:uzstrong}.
\end{proof}

Re-examining the proof of Proposition \ref{prop:blowBHT}, we can easily 
verify that no other choice for the growth of $r$ with $n$ may accelerate 
the decay of the slack term in \eqref{n:uzstrong}. While the blowing-up 
method almost effortlessly turns a weak converse into a strong one, it 
evidently fails to result in a second-order converse. The rather crude 
bounds provided by Lemmas \ref{bluplem} and \ref{lemma:fin} may be 
expected to be the obvious culprits. It will shortly become evident, 
however, that the inefficiency of the method lies much deeper than 
expected: the major loss occurs already in the very first step \eqref{uz} 
where we apply the data processing inequality. We will in fact show in 
Appendix \ref{app_data} that any method based on the data processing 
inequality necessarily yields a slack term at least of order 
$\sim\sqrt{n\log n}$. To surmount this obstacle, we have no choice but to 
go back to the drawing board.

\subsection{The smoothing-out method}\label{sec_rhc}

The aim of this section is to introduce the key ingredients of the new
method proposed in this paper. As will be illustrated throughout this
paper, this method yields  second-order converses while retaining the
broad range of applicability of the blowing-up method. In the following,
there will be no reason to restrict attention to deterministic tests
$f=1_\mathcal{A}$ as in the previous section, so we will consider
arbitrary randomized tests $f\in\mathcal{H}_{[0,1]}$ from now on.

\subsubsection{Functional inequalities}

To prove a converse, we must relate the relevant 
information-theoretic quantity $D(P\|Q)$ to the properties of any given 
test $f$. This was accomplished above by means of the data processing 
inequality \eqref{uz}. However, as was indicated at the end of the 
previous section, this already precludes us from obtaining sharp 
quantitative bounds. The first idea behind our approach is to replace the 
data processing argument by a different lower bound: we will use 
throughout this paper \emph{functional inequalities} associated to 
information-theoretic quantities by convex duality. In the present 
setting, the relevant inequality follows from the Donsker-Varadhan 
variational principle for relative entropy \cite{donsker1977asymptotic} 
(see, e.g., \cite[(3.4.67)]{raginsky2014concentration})
\begin{align}
	D(P\|Q) = \sup_{g\in\mathcal{H}_+}
	\{ P(\ln g)-\ln Q(g)\}.
	\label{e_var}
\end{align}
Unlike the data
processing inequality, which can only attain equality in trivial
situations, the variational principle \eqref{e_var} always attains
its supremum by choosing $g\leftarrow\frac{dP}{dQ}$. Therefore,
unlike the data processing inequality, in principle an application of
\eqref{e_var} need not entail any loss.

What we must now show is how to choose the function $g$ in \eqref{e_var}
to capture the properties of a given test $f$. Tempting as it is, the
choice $g\leftarrow f$ is dismal: for example, in the case of
deterministic tests $f=1_\mathcal{A}$, generally $P(\ln
1_{\mathcal{A}})=-\infty$ and we do not even obtain a weak converse.
Instead, inspired by the blowing-up method, we may apply \eqref{e_var} to
a suitably chosen perturbation of $f$. Let us first develop the
argument abstractly so that we may gain insight into the requisite
properties. Suppose we can design a mapping
$T:\mathcal{H}_{[0,1]}\to\mathcal{H}_{[0,1]}$ (which plays the role 
of
the blowing-up operation in the present setting) that
satisfies:
\begin{enumerate}[1.]
\item For any test $f\in\mathcal{H}_{[0,1]}$ on $\mathcal{Y}^n$
with $P^{\otimes n}(f)\ge 1-\epsilon$, we have
\begin{equation}
	P^{\otimes n}(\ln Tf) \ge -o_\epsilon(n).
\label{eq_smdown}
\end{equation}
\item For any test $f\in\mathcal{H}_{[0,1]}$ on $\mathcal{Y}^n$, we have
\begin{equation}
	\ln Q^{\otimes n}(Tf) \le
	\ln Q^{\otimes n}(f) + o(n).
\label{eq_smup}
\end{equation}
\end{enumerate}
Setting $g\leftarrow Tf$ in \eqref{e_var} and using
\eqref{eq_smdown} and \eqref{eq_smup}, we immediately deduce a strong
converse: for any test $f\in\mathcal{H}_{[0,1]}$ such that
$\pi_{Q^{\otimes n}|P^{\otimes n}} := P^{\otimes n}(1-f) \le \epsilon$,
the error probability $\pi_{P^{\otimes n}|Q^{\otimes n}}:=Q^{\otimes
n}(f)$ satisfies \eqref{mozart}.
Besides replacing the data processing inequality by the variational
principle, the above logic parallels the blowing-up method:
\eqref{eq_smdown} plays the role of the blowing-up Lemma \ref{bluplem},
while \eqref{eq_smup} plays the role of the counting estimate
\eqref{eq:lemma:fin:2}.

Nonetheless, this apparently minor change of perspective lies at the heart
of our theory. To explain why it provides a crucial improvement, let us
pinpoint the origin of the inefficiency of the blowing-up method. The
purpose of the blowing-up operation is to increase the probability of a
test: given $P^{\otimes n}(f)\ge 1-\epsilon$, one designs a blow-up
$f\mapsto\tilde f$ so that $P^{\otimes n}(\tilde f)=1-o(1)$. However, when
we use the the sharp functional inequality \eqref{e_var} rather than the
data processing inequality, we do not need to control $P^{\otimes
n}(\tilde f)$, but rather $P^{\otimes n}(\ln\tilde f)$. The latter is
dominated by the \emph{small values} of $f$, not by its overall magnitude.
Therefore, an efficient perturbation of $f$ should not seek to blow it up
but only to boost its small values, which may be accomplished at a much
smaller cost than the blowing-up operation. It is precisely this insight
that will allow us to eliminate the inefficiency of the blowing-up method
and attain sharp second-order bounds.

In order to take full advantage of this insight, we must understand how to
design efficient perturbations $f\mapsto Tf$. The second key ingredient of
our method is its main workhorse:  a general mechanism to
implement \eqref{eq_smdown} and \eqref{eq_smup} so that their speed of decay
will be such that the slack term in \eqref{mozart} is in fact $O \left( n^{-1/2} \right).$

\subsubsection{Simple semigroups}
The essential intuition that arises from the above discussion is that in
order to obtain efficient bounds in \eqref{eq_smdown} and \eqref{eq_smup},
we must design an operation $f\mapsto Tf$ that is
\emph{positivity-improving}: it boosts the small values of $f$
sufficiently to ensure that $P^{\otimes n}(\ln Tf)$ is not too small. 
In this subsection we design a suitable transformation $T$, and in 
Section \ref{sec:mossel} we show that it achieves the desired goal.

Let $\mathcal{Y}$ be an arbitrary alphabet and let $P$ be any probability
measure thereon.
We say $(T_t)_{t\ge 0}$ is a \emph{simple semigroup}\footnote{
	Readers who are unfamiliar with semigroups may ignore this
	terminology; while the semigroup property plays an important role
	in the proof of Theorem \ref{thm_rhc}, it is not used
	directly in this paper.
}
with stationary measure $P$ if
\begin{align}
	T_t\colon
	\mathcal{H}_+(\mathcal{Y})\to \mathcal{H}_+(\mathcal{Y}),\qquad
	f\mapsto e^{-t}f+(1-e^{-t})P(f).
\label{e_simple}
\end{align}
In the i.i.d.~case $\mathcal{Y}\leftarrow\mathcal{Y}^n$, $P\leftarrow
P^{\otimes n}$ we consider their tensor product
\begin{align}
	T_t:=[e^{-t}+(1-e^{-t})P]^{\otimes n}.
\label{e_tens}
\end{align}
\edt{We will use $T=T_t$, for a suitable choice of $t$, as a
positivity-improving operation.}

It is instructive to examine the effect of the operator in \eqref{e_tens} 
on indicator functions. For that purpose, we introduce the following 
ad-hoc notation: if $(v^n, w^n) \in \mathcal{Y}^n \times \mathcal{Y}^n$ 
and $I \subset \{1, \ldots , n\}$, then $v^n I w^n \in \mathcal{Y}^n$ 
is defined by
\begin{align}
\left( v^n I w^n \right)_i =
\left\{
\begin{array}{ll}
v_i ,& i \in I \\
w_i ,& i \in I^c .
\end{array}
\right.
\end{align}
Then, using 
\begin{align}
(aQ+(1-a)P)^{\otimes n}=
\sum_{I\subset \{1, \ldots , n\}}a^{n-|I|}(1-a)^{|I|}P^{\otimes I}Q^{\otimes I^c},
\end{align}
the application of the operator in \eqref{e_tens} to the indicator function becomes
\begin{align}
	T_t1_{\mathcal{A}}(y^n) =
	\mathbb{E}[1_{\mathcal{A}}(
	Z^n \mathbf{I} y^n)],
\end{align}
where $I$ is a random subset
of $ \{1, \ldots , n\}$ obtained by including each element independently
with probability
$1-e^{-t}$ (in particular, $|\mathbf{I}|\sim\mathrm{Binom}(n,1-e^{-t})$),
and $Z^n\sim P^{\otimes n}$ is independent of $\mathbf{I}$.
In contrast, the
blowing-up operation may be expressed in terms of indicator functions as
\begin{align}
	1_{\mathcal{A}_r}(y^n) =
	\max_{|I|\le r}
	\max_{z^n\in\mathcal{Y}^n}
	1_{\mathcal{A}}( z^n I y^n).
\end{align}
From this perspective, we see that the semigroup operation is a
\emph{smoothing out} counterpart of the blowing-up operation:
while the blowing-up operation \emph{maximizes} the function over a local
neighborhood of size $r$, the semigroup operation \emph{averages} the
function over a random neighborhood of size $r\approx n(1-e^{-t})$.
What we will gain from smoothing is that it increases the small values
of $f$ (it is positivity-improving) without increasing the total mass
$P^{\otimes n}(T_tf)=P^{\otimes n}(f)$, so that the mass under
$Q^{\otimes n}$ cannot grow too much. In contrast, blowing-up
is designed to increase the mass $P^{\otimes n}(f)$; but
then the mass under $Q^{\otimes n}$ becomes large as well, which yields
the suboptimal rate achieved by the blowing-up method.

\subsubsection{Reverse hypercontractivity} \label{sec:mossel}
It is intuitively clear that $T_t$ is positivity improving: it maps
any nonnegative function to a strictly positive function. But the goal of
lower bounding $P^{\otimes n}(\ln Tf)$ is more ambitious.
This 
idea already appears in the probability theory literature in a very
different context: it was realized long ago by Borell
\cite{borell1982positivity} that Markov semigroups possess very strong
positivity-improving properties, which are described quantitatively by a
reverse form of the classical hypercontractivity phenomenon. While Borell
was motivated by applications in quantum field theory, we will show in
this paper that reverse hypercontractivity provides a powerful mechanism
that appears almost tailor-made for our present purposes. 

We will presently describe an important generalization of Borell's ideas 
to general alphabets due to Mossel et al.\ \cite{mossel2013reverse}, and 
show how it may be combined with the above ideas to obtain sharp 
non-asymptotic converses.

\begin{thm}[\bf Reverse hypercontractivity] \cite{mossel2013reverse}.
\label{thm_rhc}
Let $(T_t)_{t\ge 0}$ be a simple semigroup \eqref{e_simple} or an
arbitrary tensor product of
simple semigroups. Then
\begin{equation}
	\|T_tf\|_{L^q} \ge \|f\|_{L^p} \label{e_RHC}
\end{equation}
for any $0<q<p<1$, $f\in\mathcal{H}_+$, and $t\ge \ln\frac{1-q}{1-p}$.
In particular, letting $q \downarrow 0$, we have
\begin{equation}
	P(\ln T_t f) \ge \log \|f\|_{L^p (P)}. \label{e_RHC:q0}
\end{equation}
\end{thm}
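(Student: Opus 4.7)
The plan is to adapt the classical Gross semigroup interpolation method to the reverse regime $0<q<p<1$, following Borell \cite{borell1982positivity} and its extension to arbitrary Markov semigroups in \cite{mossel2013reverse}. Three ingredients are needed: (a) a reverse log-Sobolev inequality (RLSI) for a single simple semigroup, (b) tensorization of this RLSI to the $n$-fold product semigroup via subadditivity of entropy, and (c) an ODE-driven interpolation along the curve $t\mapsto\|T_tf\|_{L^{q(t)}}$.

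For ingredient (a), I would begin by differentiating \eqref{e_simple} at $t=0$ to identify the generator $Lg = P(g)-g$, whose Dirichlet form is $-\int g\,Lg\,dP = \mathrm{Var}_P(g)$. The key inequality to establish is the pointwise RLSI
\[
\int g^{q-1}\,Lg\,dP \;\ge\; \frac{1-q}{q^2}\,\mathrm{Ent}_P(g^q), \qquad g\in\mathcal{H}_+(\mathcal{Y}),\ 0<q<1,
\]
where $\mathrm{Ent}_P(h) = P(h\log h)-P(h)\log P(h)$. Because $L$ is explicit, after the substitution $h=g^q$ and the normalization $P(g)=1$ this becomes $P(h^{1-1/q}) - P(h) \ge \frac{1-q}{q^2}\mathrm{Ent}_P(h)$; a Taylor expansion around $h\equiv 1$ shows the constant $\frac{1-q}{q^2}$ is sharp, and the full inequality reduces to a one-variable convexity statement. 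For ingredient (b), the generator of $T_t^{\otimes n}$ is $L^{(n)}=\sum_i L^{(i)}$, and by the standard subadditivity of entropy for product measures (see, e.g., \cite{ledoux,boucheron2013}) the RLSI lifts from $P$ to $P^{\otimes n}$ with the same constant.

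For ingredient (c), parametrize a curve $q\colon[0,t_0]\to(0,1)$ by the ODE $\dot q(t)=-(1-q(t))$ with $q(0)=p$. Separation of variables gives $1-q(t) = (1-p)e^t$, so $q(t_0)=q$ exactly when $t_0 = \ln\frac{1-q}{1-p}$. Define
\[
F(t) := \frac{1}{q(t)}\log\!\int (T_tf)^{q(t)}\,dP^{\otimes n}.
\]
Differentiating, using $\partial_t T_tf = L^{(n)}T_tf$ and the definition of entropy, yields after simplification
\[
F'(t) \;=\; \frac{1}{q(t)^2\,\|T_tf\|_{L^{q(t)}}^{q(t)}}\left[\dot q(t)\,\mathrm{Ent}_{P^{\otimes n}}\!\big((T_tf)^{q(t)}\big) + q(t)^2\!\int(T_tf)^{q(t)-1}\,L^{(n)}T_tf\,dP^{\otimes n}\right].
\]
Substituting $\dot q=-(1-q)$ and applying the tensorized RLSI from (b) gives $F'(t)\ge 0$, hence $F(t_0)\ge F(0)$, which rearranges to $\|T_tf\|_{L^q}\ge\|f\|_{L^p}$. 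The limiting form \eqref{e_RHC:q0} then follows by sending $q\downarrow 0$ and using $\|h\|_{L^0(P)} = \exp(P(\log h))$, with dominated convergence taking care of the limit.

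The principal obstacle is ingredient (a): the sign flips induced by $q-1<0$ and $\dot q<0$ demand careful bookkeeping, and unlike Gross's forward inequality there is no spectral-gap shortcut. Fortunately, the simple semigroup has only two spectral modes (constants and mean-zero functions), so the RLSI reduces to a one-parameter family of scalar convexity inequalities that can be verified directly; this is essentially the one-dimensional case proved in \cite{mossel2013reverse}. Once this is in place, the interpolation argument closes the loop and the exact threshold $\ln\frac{1-q}{1-p}$ emerges automatically as the stopping time of the ODE.
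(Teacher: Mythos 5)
The paper does not prove this theorem: it is quoted from \cite{mossel2013reverse}, and your sketch is essentially the argument given there (and in \cite{borell1982positivity} for the Gaussian case) --- a modified log-Sobolev inequality for the generator, tensorization via subadditivity of entropy, and Gross-type interpolation along $\dot q=-(1-q)$, whose stopping time produces the threshold $\ln\frac{1-q}{1-p}$. Your derivative computation of $F$ is correct, and the inequality you call the RLSI, $\int g^{q-1}Lg\,dP\ge\frac{1-q}{q^2}\mathrm{Ent}_P(g^q)$, is exactly what makes $F'\ge 0$. Two inaccuracies are worth fixing, though neither affects validity. First, the reason the RLSI is checkable is not that the semigroup has ``two spectral modes'' (entropy is not a spectral quantity); rather, for this generator $\int g^{q-1}Lg\,dP=-\mathrm{Cov}_P(g^{q-1},g)=\tfrac12\,\mathbb{E}\bigl[(g(X)-g(Y))(g(Y)^{q-1}-g(X)^{q-1})\bigr]$ with $X,Y\sim P$ i.i.d., while $\mathrm{Ent}_P(h)\le\mathrm{Cov}_P(h,\log h)$ by Jensen, so the whole inequality reduces to the homogeneous two-point inequality $(a-b)(b^{q-1}-a^{q-1})\ge\frac{1-q}{q^2}(a^q-b^q)\log(a^q/b^q)$, i.e.\ a one-variable inequality after setting $b=1$; this is precisely the route of \cite{mossel2013reverse}, where the simple (``replace by an independent sample'') dynamics satisfies the modified log-Sobolev inequality with constant $1$. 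Second, the constant $\frac{1-q}{q^2}$ is \emph{not} sharp near constants: expanding at $g=1+\delta\phi$ with $P(\phi)=0$ gives left side $\approx(1-q)\delta^2P(\phi^2)$ versus right side $\approx\frac{1-q}{2}\delta^2P(\phi^2)$, a factor-of-two slack --- but sharpness is not needed, only validity. Finally, two routine additions: for $t>\ln\frac{1-q}{1-p}$ invoke monotonicity of $r\mapsto\|\cdot\|_{L^r}$ under a probability measure, and note that $T_tf\ge(1-e^{-t})^nP^{\otimes n}(f)>0$ for $t>0$, so the factor $u^{q-1}$ causes no integrability trouble when differentiating (with an approximation $f\vee\epsilon$ at $t=0$ if $f$ vanishes somewhere).
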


An estimate of the form \eqref{eq_smdown} is almost immediate from 
\eqref{e_RHC:q0}. 
On the other hand, the estimate \eqref{eq_smup} will 
now follow from a simple change of measure argument. The following result 
combines these ingredients to derive a second-order converse for binary 
hypothesis testing.

\begin{thm} \label{thm:hyperBHT}
Any test between $P^{\otimes n}$ and $Q^{\otimes n}$ such that
$\pi_{Q^{\otimes n}|P^{\otimes n}} \le \epsilon$ satisfies
\begin{align}
	\ln\frac1{\pi_{P^{\otimes n}|Q^{\otimes n}}} \leq
	 n D ( P \| Q)
	+ 2\sqrt{\ln\frac{1}{1-\epsilon}}
	\sqrt{n(\alpha-1)}
	+\ln\frac{1}{1-\epsilon},
\label{n:uzverystrong}
\end{align}
where $\alpha = \big\|\frac{dP}{dQ}\big\|_\infty\ge 1$.
\end{thm}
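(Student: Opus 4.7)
The plan is to apply the Donsker-Varadhan variational principle \eqref{e_var} to $D(P^{\otimes n}\|Q^{\otimes n}) = nD(P\|Q)$ with the trial function $g = T_t f$, where $(T_t)_{t\ge 0}$ is the tensor-product simple semigroup \eqref{e_tens} with stationary measure $P^{\otimes n}$ and $t>0$ is to be optimized. This yields the skeleton inequality
\[
nD(P\|Q) \;\ge\; P^{\otimes n}(\ln T_t f)\,-\,\ln Q^{\otimes n}(T_t f),
\]
so that the task reduces to lower bounding the first term (playing the role of \eqref{eq_smdown}) and upper bounding the second (playing the role of \eqref{eq_smup}).

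For the first term, I would invoke the $q\downarrow 0$ form of reverse hypercontractivity \eqref{e_RHC:q0}: for $p\in(0,1)$ and $t\ge \ln\tfrac{1}{1-p}$,
\[
P^{\otimes n}(\ln T_t f)\;\ge\;\ln\|f\|_{L^p(P^{\otimes n})}.
\]
Since $f\in[0,1]$ and $p<1$, we have $f^p\ge f$, so $\|f\|_{L^p}^p = P^{\otimes n}(f^p) \ge P^{\otimes n}(f) = 1-\pi_{Q^{\otimes n}|P^{\otimes n}} \ge 1-\epsilon$, giving the clean estimate $P^{\otimes n}(\ln T_t f) \ge \tfrac{1}{p}\ln(1-\epsilon)$. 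This is precisely the "boosting small values" step that replaces the blowing-up lemma.

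For the second term, the key observation is that the tensor-product semigroup behaves very nicely under $Q^{\otimes n}$-integration: a coordinatewise Fubini computation gives the identity $Q^{\otimes n}(T_t f) = \tilde Q^{\otimes n}(f)$, where $\tilde Q := e^{-t}Q + (1-e^{-t})P$ is a mixture measure on $\mathcal{Y}$. A change of measure back to $Q^{\otimes n}$ together with the pointwise bound $\tfrac{d\tilde Q}{dQ}(y) = e^{-t}+(1-e^{-t})\tfrac{dP}{dQ}(y) \le 1+(1-e^{-t})(\alpha-1)$ then yields
\[
Q^{\otimes n}(T_t f) \;\le\; \bigl(1+(1-e^{-t})(\alpha-1)\bigr)^n\,\pi_{P^{\otimes n}|Q^{\otimes n}}.
\]
This is where the hypothesis $\alpha<\infty$ enters: it quantifies the "cost of smoothing" in terms of how far $P$ is from $Q$, and is the analogue of Lemma \ref{lemma:fin} in the blowing-up argument. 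I expect this algebraic identification of $Q^{\otimes n}(T_tf)$ with $\tilde Q^{\otimes n}(f)$ to be the only genuinely delicate point of the proof, because it is what makes the smoothing operation compatible with a sharp change-of-measure argument.

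Combining the two bounds, setting $t = \ln\tfrac{1}{1-p}$ (which both saturates the hypercontractivity constraint and minimizes the right-hand side in $t$), and using $\ln(1+x)\le x$, I obtain
\[
\ln\frac{1}{\pi_{P^{\otimes n}|Q^{\otimes n}}} \;\le\; nD(P\|Q) \,+\, np(\alpha-1) \,+\, \frac{1}{p}\ln\frac{1}{1-\epsilon}.
\]
A final AM-GM style optimization with $p = \min\bigl\{1,\sqrt{\ln(1/(1-\epsilon))/(n(\alpha-1))}\bigr\}$ produces the claimed inequality; the extra additive $\ln\tfrac{1}{1-\epsilon}$ in the statement uniformly absorbs the boundary regime in which the unconstrained minimizer $\sqrt{b/a}$ exceeds $1$, yielding a single clean bound with no case analysis. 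Everything after the semigroup identity in the previous paragraph is routine scalar optimization.
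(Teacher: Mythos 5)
Your proposal is correct and follows essentially the same route as the paper's proof: Donsker--Varadhan with trial function $T_tf$, reverse hypercontractivity \eqref{e_RHC:q0} to lower bound $P^{\otimes n}(\ln T_tf)$ via $\|f\|_{L^p}\ge P^{\otimes n}(f)^{1/p}$, and the mixture identity $Q^{\otimes n}(T_tf)=(e^{-t}Q+(1-e^{-t})P)^{\otimes n}(f)$ with the pointwise density bound, followed by AM--GM. The only (immaterial) differences are cosmetic: the paper optimizes over $t$ unconstrained after the lossier step $\tfrac{1}{1-e^{-t}}\le\tfrac1t+1$, which produces the additive $\ln\tfrac{1}{1-\epsilon}$ automatically, whereas you parametrize by $p=1-e^{-t}$ and use that same slack term to absorb the boundary case $p^\star\ge 1$ --- both land on \eqref{n:uzverystrong}.
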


\begin{proof}
We establish \eqref{eq_smdown} and \eqref{eq_smup} by
choosing $T=T_t$ as defined by \eqref{e_tens}. We fix any $t>0$ initialy
and optimize at the end of the proof.

Fix any test $f\in\mathcal{H}_{[0,1]}(\mathcal{Y}^n)$.
To establish \eqref{eq_smdown}, note that by \eqref{e_RHC:q0}
\begin{align}
	P^{\otimes n}(\ln T_t f)
	&\ge\ln\|f\|_{L^{1-e^{-t}}(P^{\otimes n})}
	\\
	&\ge \frac{1}{1-e^{-t}}\ln P^{\otimes n}(f)
	\label{e_14}
	\\
	&\ge \left(\frac{1}{t}+1\right)\ln P^{\otimes n}(f),\label{e_15}
\end{align}
where \eqref{e_14} used that $f\in[0,1]$ and
\eqref{e_15} follows from $e^t\ge 1+t$.

On the other hand, to establish \eqref{eq_smup}, we argue as follows:
\begin{align}
	Q^{\otimes n}(T_tf)
	&=Q^{\otimes n}((e^{-t}+(1-e^{-t})P)^{\otimes n}f)
\label{e17}
\\
	&=(e^{-t}Q+(1-e^{-t})P)^{\otimes n}f
\\
	&\le (e^{-t}+\alpha(1-e^{-t}))^nQ^{\otimes n}(f) 
\label{e_bdd}
\\
	&\le e^{(\alpha-1)nt}Q^{\otimes n}(f),\label{e18}
\end{align}
where $\alpha=\|\frac{dP}{dQ}\|_{\infty} \geq 1$, and
\eqref{e17} is just the definition of $T_t$.

Now assume that the test $f$ satisfies $\pi_{Q^{\otimes n}|P^{\otimes
n}}:=P^{\otimes n}(1-f)\le\epsilon$. Setting $P\leftarrow P^{\otimes n}$,
$Q\leftarrow Q^{\otimes n}$, and $g\leftarrow T_tf$ in
\eqref{e_var}, we obtain for all $t>0$
\begin{equation}\label{bach}
	 \ln \frac1{\pi_{P^{\otimes n}|Q^{\otimes n}}}
	\le
	nD(P\|Q)
	+ \bigg(\frac{1}{t}+1\bigg)\ln\frac{1}{1-\epsilon}
	+(\alpha-1)n t
\end{equation}
using \eqref{e_15} and \eqref{e18}. 
Minimizing \eqref{bach} with respect to $t$ yields \eqref{n:uzverystrong}.
\end{proof}

Beside resulting in a  second-order converse, the smoothing-out method
has an additional major advantage over the blowing-up method: the
change-of-measure argument \eqref{e18} is purely measure-theoretic in
nature and sidesteps the counting argument of Lemma \ref{lemma:fin}.
No analogue of the latter can hold
beyond the finite alphabet case: indeed, in general alphabets even the
blowup of a set of measure zero will have positive measure, ruling out any
estimate of the form \eqref{eq:lemma:fin:2}. In contrast, the result of Theorem \ref{thm:hyperBHT}
holds for any alphabet $\mathcal{Y}$ and requires only a bounded density
assumption. Even the latter assumption is not an essential restriction and
can be eliminated in specific situations by working with semigroups other
than \eqref{e_tens}. A particularly important example that will be
developed in detail later on in this paper is the case of Gaussian
measures (see, for example, Section~\ref{sec_gfano}).

The proof of Theorem \ref{thm:hyperBHT} illustrates the approach of this
paper in its simplest possible setting. However, the basic ideas that we
have introduced here form the basis of a general recipe that will be
applied repeatedly in the following sections to obtain second-order
 converses in a broad range of applications. The comparison
between the key ingredients of the blowing-up method and the smoothing-out
approach proposed in this paper is summarized in Table \ref{tab1}.
\begin{table}[t]
{\centering
\small
\begin{tabular}{|C{3.5cm}|C{3.5cm}|C{3.5cm}|}
  \hline 
  ~ & Blowing-up method & Smoothing-out method \\
  \hline
  Connecting information measures and observables
  &
  Data processing property \eqref{uz} & Convex duality \eqref{e_var} \\
  \hline
  Lower bound w.r.t.\ a given measure
  &  Concentration of measure (Lemma \ref{bluplem})
  &  Reverse hypercontractivity (Theorem \ref{thm_rhc}) \\
  \hline
  Upper bound w.r.t.\ reference measure
  & Counting argument (Lemma \ref{lemma:fin})
  & Change of measure \eqref{e_bdd} [or
  scaling argument \eqref{e60}] \\
  \hline
\end{tabular}
\caption{Main ingredients of the blowing-up and smoothing-out
approaches.}\label{tab1}
}
\end{table}

\subsubsection{Beyond product measures} 
Throughout this paper we focus for simplicity on stationary memoryless
systems, that is, those defined by product measures. However, our approach
is by no means limited to this setting. For the benefit of the
interested reader, let us briefly sketch in the context of Theorem
\ref{thm:hyperBHT} what modifications would be needed to adapt our
approach to general dependent measures. For an entirely different
application of our approach in a dependent setting, see \cite{jingbo18}.

Consider the problem of testing between two arbitrary (non-product)
hypotheses $P_n,Q_n$ on $\mathcal{Y}^n$. To adapt the proof of Theorem
\ref{thm:hyperBHT}, we need to introduce a hypercontractive semigroup with
stationary measure $P_n$. A natural candidate in this general setting is
the so-called \emph{Gibbs sampler} $T_tf(y^n)=\mathbb{E}_{y^n}[f(Y_t^n)]$,
where the Markov process $Y^n_t$ is defined by replacing each coordinate
with an independent draw from its conditional distribution
$P_{Y_i|Y_{\setminus i}}$ (where $Y_{\setminus i}:=(Y_j)_{j\ne i}$) at independent
exponentially distributed intervals. It was shown in
\cite{mossel2013reverse} that any semigroup that satisfies a modified
log-Sobolev inequality is reverse hypercontractive. In particular, such
inequalities may be established for the Gibbs sampler under rather general
weak dependence assumptions \cite{approximate_tens,marton15}.

On the other hand, we need to establish an upper bound on
$Q_n(T_tf)$. This may be done as follows. The Gibbs sampler satisfies
the differential equation \cite{approximate_tens}
\begin{equation}
	\frac{d}{dt} T_tf(y^n) = \sum_{i=1}^n
	\{P_{Y_i|Y_{\setminus i}=y_{\setminus i}}(T_tf)-T_tf(y^n)\}.
\end{equation}
We may therefore estimate for any $f\in\mathcal{H}_+(\mathcal{Y}^n)$
\begin{equation}
	\frac{d}{dt} Q_n(T_tf) =
	\sum_{i=1}^n
	\{Q_n(P_{Y_i|Y_{\setminus i}}(T_tf))-Q_n(T_tf)\}
	\le (\alpha-1)n\,Q_n(T_tf),
\end{equation}
where we used the tower property of conditional expectations and we defined
\begin{equation}
	\alpha :=
	\max_i\bigg\|\frac{P_{Y_i|Y_{\setminus i}}}{Q_{Y_i|Y_{\setminus i}}}\bigg\|_\infty.
\end{equation}
Solving the differential inequality yields precisely the same estimate
$Q_n(T_tf)\le e^{(\alpha-1)nt}Q_n(f)$ as was obtained  in
\eqref{e18} in the product case.

Putting together these estimates, we obtain for any test with
$\pi_{Q_n|P_n}\le\epsilon$ that
\begin{equation}
	\ln \frac1{\pi_{P_n|Q_n}} \leq
	n D ( P_n \| Q_n)
	+ \sqrt{C\ln\frac{1}{1-\epsilon}}
	\sqrt{n(\alpha-1)}
	+\ln\frac{1}{1-\epsilon},
\end{equation}
where $C$ is the modified log-Sobolev constant of $P_n$.
This extends our approach for the memoryless case to any dependent 
situation where a modified log-Sobolev inequality is available for $P_n$. 
A deeper investigation of dependent processes is beyond the scope of this 
paper.

\subsection{Achievability and optimality}
\label{sec_set}

Theorem \ref{thm:hyperBHT} gives a non-asymptotic converse bound for binary
hypothesis testing. To understand whether it is accurate, we also need
an upper bound (achievability). Such a bound was already stated in
\eqref{eq_strassen}. To motivate the following discussion, it is
instructive to give a quick proof of the achievability part of
\eqref{eq_strassen}.

\begin{lem}
There exist a sequence of binary hypothesis tests  with $\pi_{Q^{\otimes n}|P^{\otimes n}}
\le\epsilon$ such that
\begin{equation}
	\ln \frac1{\pi_{P^{\otimes n}|Q^{\otimes n}} } \ge
	nD(P\|Q) + \mathrm{Q}^{-1}(1-\epsilon)\sqrt{nV(P\|Q)}
	+ o_\epsilon(\sqrt{n}),
\label{eq_bhtach}
\end{equation}
provided that $V (P\|Q) :=\mathrm{Var}_P(\ln\frac{dP}{dQ})<\infty$.
\label{lem_bhcach}
\end{lem}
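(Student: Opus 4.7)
The plan is to exhibit an explicit likelihood-ratio (Neyman--Pearson) test, tune its threshold via the classical central limit theorem so that the type-I error lies just under $\epsilon$, and bound the type-II error by a one-line change-of-measure argument. Set $i(y):=\ln\frac{dP}{dQ}(y)$, so that $\mathbb{E}_P[i(Y)] = D(P\|Q)$ and $\mathrm{Var}_P[i(Y)] = V(P\|Q)$, both finite by hypothesis (the finiteness of $V(P\|Q)$ forces $P\ll Q$). For a threshold $\tau_n$ to be chosen, consider the deterministic test $f_n(y^n):=\mathbf{1}\{\sum_{j=1}^n i(y_j)\ge \tau_n\}$.

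For the type-I error, the Lindeberg--L\'evy CLT (which needs only the finite-variance hypothesis) yields $\bigl(\sum_{j=1}^n i(Y_j)-nD(P\|Q)\bigr)/\sqrt{nV(P\|Q)}\Rightarrow\mathcal{N}(0,1)$ under $P^{\otimes n}$. Since for each $\delta>0$ the limit distribution assigns mass $\Phi(\mathrm{Q}^{-1}(1-\epsilon)-\delta)<\epsilon$ to the half-line $(-\infty,\mathrm{Q}^{-1}(1-\epsilon)-\delta)$, a standard diagonalization produces a sequence $\delta_n\downarrow 0$ such that, with
\[
\tau_n := nD(P\|Q)+\sqrt{nV(P\|Q)}\bigl(\mathrm{Q}^{-1}(1-\epsilon)-\delta_n\bigr),
\]
one has $\pi_{Q^{\otimes n}\mid P^{\otimes n}}=P^{\otimes n}\bigl[\sum_{j}i(Y_j)<\tau_n\bigr]\le\epsilon$ for every $n$.

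For the type-II error, the elementary change-of-measure identity together with the very definition of $f_n$ gives
\[
\pi_{P^{\otimes n}\mid Q^{\otimes n}} = \int f_n\,dQ^{\otimes n} = \int f_n(y^n)\,e^{-\sum_{j}i(y_j)}\,dP^{\otimes n}(y^n)\le e^{-\tau_n},
\]
so that $\ln(1/\pi_{P^{\otimes n}\mid Q^{\otimes n}})\ge\tau_n$, which is exactly the desired bound after observing that $\sqrt{nV(P\|Q)}\,\delta_n = o(\sqrt{n})$. I do not anticipate any serious obstacle; the only delicate point is the diagonal choice of $\delta_n\to 0$ guaranteeing the type-I constraint uniformly in $n$ (as opposed to merely in the limit), and this is unavoidable if one wishes to avoid moment assumptions beyond $V(P\|Q)<\infty$. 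If $i$ has atoms at the threshold, one augments $f_n$ with a randomized decision at $\sum_j i(y_j)=\tau_n$ to hit the type-I constraint with equality; the change-of-measure estimate above goes through verbatim with the indicator replaced by the corresponding $[0,1]$-valued function.
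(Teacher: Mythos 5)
Your proposal is correct and follows essentially the same route as the paper: a likelihood-ratio threshold test whose threshold is set via the CLT with a vanishing correction ($\delta_n\sqrt{nV}$ playing the role of the paper's $a_n=o_\epsilon(\sqrt{n})$) to guarantee $\pi_{Q^{\otimes n}|P^{\otimes n}}\le\epsilon$ for every $n$, followed by the change-of-measure (Chernoff) bound $\int f_n\,dQ^{\otimes n}\le e^{-\tau_n}$. No gaps.
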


\begin{proof} 
For typographical convenience we will write
$D:=D(P\|Q)$ and $V:=V(P\|Q)$.
By the central limit theorem
\begin{equation}
	\lim_{n\to\infty}
	P^{\otimes n}\left[\imath_{P^{\otimes n}\|Q^{\otimes n}}
	\ge nD + \mathrm{Q}^{-1}(1-\epsilon)\sqrt{nV} \right]
	= 1-\epsilon,
\end{equation}
where we have defined the \emph{relative information}
\begin{align}
\imath_{P^{\otimes n}\|Q^{\otimes n}}(y^n):=
\ln\frac{dP^{\otimes n}}{dQ^{\otimes n}}(y^n)
=\sum_{i=1}^n\ln\frac{dP}{dQ}(y_i)
\end{align}
whose mean and variance with $y^n \sim P^{\otimes n}$ are $n D$ and $n V$, respectively.
We may therefore choose a deterministic
sequence $a_n=o_\epsilon(\sqrt{n})$ such that the deterministic test
$f_n=1_{\mathcal{A}_n}$ defined by
\begin{equation}
	\mathcal{A}=\{y^n\in\mathcal{Y}^n:\imath_{P^{\otimes
	n}\|Q^{\otimes n}}(y^n)
        \ge nD + \mathrm{Q}^{-1}(1-\epsilon)\sqrt{nV} - a_n\}
\label{eq_neyman}
\end{equation}
satisfies $\pi_{Q^{\otimes n}|P^{\otimes n}}\le \epsilon$ for all $n$. But
a simple Chernoff bound now yields
\begin{equation}
	\pi_{P^{\otimes n}|Q^{\otimes n}} \le
	e^{-nD - \mathrm{Q}^{-1}(1-\epsilon)\sqrt{nV} + a_n}
	Q^{\otimes n}(e^{\imath_{P^{\otimes n}\|Q^{\otimes n}}}),
\end{equation}
and the proof is completed by noting that $Q^{\otimes
n}(e^{\imath_{P^{\otimes n}\|Q^{\otimes n}}})=1$.
\end{proof}

Comparing the achievability bound \eqref{eq_bhtach} with our converse
\eqref{n:uzverystrong}, we see that the main features of the second-order
term are captured faithfully by the smoothing-out method, although it fails
to recover the precise constant in the second-order term:
$V(P\|Q)$ is replaced by its natural uniform
bound $V(P\|Q)\le
\big\|\frac{dP}{dQ}\big\|_\infty-1$ in our  converse.\footnote{%
	To see this, use $x\ln^2x\le(x-1)^2$ to show
	$\mathrm{Var}_P(\ln\frac{dP}{dQ})\le
	Q(\frac{dP}{dQ}\ln^2\frac{dP}{dQ})\le P(\frac{dP}{dQ}-1)$.}
\edt{Beside the optimal order scaling $\sim\sqrt{n}$, we recall that the bound 
behaves correctly as a function of $\epsilon$ (up to universal constant) 
for large error probabilities $\epsilon\to 1$; cf.\ the disucussion 
following \eqref{prokofiev}.}

As is illustrated by Lemma \ref{lem_bhcach}, the achievability analysis is 
conceptually simple in the binary hypothesis testing case thanks to the 
Neyman-Pearson lemma which identifies the optimal test. However, in 
information theory optimal procedures are very seldom known explicitly.  
Thus, the methodology we have introduced says nothing new about binary 
hypothesis testing. The point of the present method, however, is that it 
applies broadly in situations where such a direct analysis is far out of 
reach. In particular, in the general setting of Section \ref{sec_change} 
the present approach is currently the only known method to achieve sharp 
second-order converses in a variety of multiuser information theory 
problems.

\section{Second-Order Fano's Inequality}\label{sec_fano}

The aim of this section is to develop the smoothing-out methodology for
channel coding problems, of which a basic example was discussed in Section
\ref{sec:exist}.

Weak converses for channel coding problems can be obtained in great 
generality (cf.\ \cite{el1981proof}): this is the domain of Fano's 
inequality \cite{fano1952}, one of the most basic results in information 
theory, which gives an implicit upper bound on the error probability of an 
$M$-ary hypothesis testing problem.  For discrete memoryless channels, 
when combined with a list decoding argument, the blowing-up method 
strengthens Fano's inequality to a strong converse with 
$\sim\sqrt{n}\log^{\frac{3}{2}}n$ second-order term 
\cite{ahlswede_bounds_cond1976,marton1986,polyanskiy2014empirical, 
raginsky2014concentration}. In this section, we will show that the 
smoothing-out method results in a strong form of Fano's inequality that 
not only attains the optimal $\sim\sqrt{n}$ second-order term, but is 
applicable to a much broader class of channels. The power of this 
machinery will be illustrated in two typical applications.

Before we turn to the main results of this section, it is instructive to
give a short proof of a basic form of Fano's inequality.  
Although we state it and prove it for deterministic decoding, it also
holds for stochastic decoders.

\begin{lem}[\bf Fano's inequality]
\label{lem_fano}
Let $W\in\{1,\ldots,M\}$ be an equiprobable message to be transmitted over
a noisy channel $P_{Y|X}$. Let $c_1,\ldots,c_M\in\mathcal{X}$ be the
codewords corresponding to $W$, and let
$\mathcal{D}_1,\ldots,\mathcal{D}_M\subseteq\mathcal{Y}$ be the
disjoint decoding sets. Suppose the average probability of
correct decoding satisfies
\begin{equation}
	\frac{1}{M}\sum_{m=1}^M P_{Y|X=c_m}[\mathcal{D}_m] \ge 1-\epsilon.
\label{e_avgerr}
\end{equation}
Then
\begin{equation}
	\ln M \le \frac{I(W;Y)+\ln 2}{1-\epsilon},
\label{eq_fano}
\end{equation}
where $Y$ is the output of the channel $P_{Y|X}$ with input $X=c_W$.
\end{lem}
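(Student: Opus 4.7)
The plan is to use the standard information-theoretic proof of Fano's inequality, adapted to the notation of this lemma. Let $\hat{W}: \mathcal{Y} \to \{1,\ldots,M\}$ denote the decoder, i.e., $\hat{W}(y) = m$ if $y \in \mathcal{D}_m$ (and some arbitrary value if $y$ lies outside $\bigcup_m \mathcal{D}_m$). Since $W$ is equiprobable and $X = c_W$, the error probability $\Pr[\hat{W}(Y) \neq W]$ equals $\epsilon' := 1 - \frac{1}{M}\sum_m P_{Y|X=c_m}[\mathcal{D}_m] \le \epsilon$, so it suffices to establish the bound with $\epsilon'$ in place of $\epsilon$ and then note the right-hand side is monotone in $\epsilon$.

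The key step is to estimate $H(W\mid \hat{W})$ by introducing the error indicator $E := \mathbf{1}\{\hat{W}\neq W\}$. By the chain rule, $H(E,W\mid \hat{W}) = H(W\mid \hat{W})$ because $E$ is a deterministic function of $(W,\hat{W})$; expanding in the other order gives $H(E,W\mid \hat{W}) = H(E\mid \hat{W}) + H(W\mid E,\hat{W}) \le h(\epsilon') + \epsilon' \ln(M-1) \le h(\epsilon') + \epsilon' \ln M$, where the conditional entropy $H(W\mid E,\hat{W})$ is bounded by conditioning on $E=1$ (the case $E=0$ contributes $0$) and using that $W$ then takes at most $M-1$ values. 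Combining, $H(W\mid \hat{W}) \le h(\epsilon') + \epsilon' \ln M$.

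Next, $H(W) = \ln M$ since $W$ is equiprobable, and the data processing inequality applied along the Markov chain $W - Y - \hat{W}$ gives $I(W;\hat{W}) \le I(W;Y)$. Therefore
\begin{equation*}
\ln M = I(W;\hat{W}) + H(W\mid \hat{W}) \le I(W;Y) + h(\epsilon') + \epsilon' \ln M.
\end{equation*}
Rearranging yields $(1-\epsilon')\ln M \le I(W;Y) + h(\epsilon')$, and bounding $h(\epsilon') \le \ln 2$ together with $\epsilon' \le \epsilon$ (and monotonicity of $\epsilon \mapsto (I(W;Y)+\ln 2)/(1-\epsilon)$) gives \eqref{eq_fano}.

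There is essentially no obstacle here: the proof is a textbook application of the chain rule and data processing, and the only minor care needed is in the passage from $\epsilon'$ to $\epsilon$ (to convert the average-probability-of-correct-decoding hypothesis into a clean bound) and in verifying that the data processing inequality applies because $\hat{W}$ is a function of $Y$ only (so $W - Y - \hat{W}$ is Markov, which follows from $W-X-Y$ together with $X=c_W$ being a deterministic function of $W$).
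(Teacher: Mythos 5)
Your proof is correct, but it follows the classical textbook route rather than the paper's. The paper proves the lemma by reduction to its weak converse for binary hypothesis testing (Lemma \ref{lem_weakbht}): one sets $P\leftarrow P_{W\hat W}$, $Q\leftarrow P_W\otimes P_{\hat W}$, and $\mathcal{A}=\{(x,\hat x)\colon x=\hat x\}$, so that $P[\mathcal{A}^c]=\mathbb{P}[\hat W\neq W]\le\epsilon$, $Q[\mathcal{A}]=\tfrac1M$ by equiprobability of $W$, and $D(P\|Q)=I(W;\hat W)\le I(W;Y)$; the bound \eqref{eq_fano} then drops out of Lemma \ref{lem_weakbht}. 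You instead introduce the error indicator $E$ and run the chain-rule decomposition $H(W\mid\hat W)\le h(\epsilon')+\epsilon'\ln M$. The two arguments are essentially equivalent in substance — both hinge on the same data processing step $I(W;\hat W)\le I(W;Y)$, and the paper's use of the relative-entropy data processing inequality on the diagonal event $\mathcal{A}$ plays exactly the role of your error-indicator expansion (indeed it yields the same $(1-\epsilon)\ln M - h(\epsilon)$ lower bound on $I(W;\hat W)$). What the paper's phrasing buys is expository: it exhibits Fano's inequality as a corollary of the binary hypothesis testing weak converse, which is precisely the structural connection the paper then exploits when upgrading to the second-order version. Your version is self-contained and marginally sharper before the final rounding (you could keep $\epsilon'\ln(M-1)$ and $h(\epsilon')$), but both land on \eqref{eq_fano}. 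One cosmetic point: the error probability of your decoder is at most, not exactly, $1-\tfrac1M\sum_m P_{Y|X=c_m}[\mathcal{D}_m]$ when $Y$ falls outside $\bigcup_m\mathcal{D}_m$ and the arbitrary assignment happens to be correct; this only helps you, so nothing breaks.
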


\begin{proof} 
Let $\hat W$ be the decoded message, that is, $\hat W=c_m$ when the output
$Y\in\mathcal{D}_m$.
The bound can be shown by reduction to an auxiliary binary hypothesis testing problem:
$P \leftarrow P_{W\hat W}$, $Q \leftarrow P_W\otimes P_{\hat W}$.
Then the conclusion follows by applying Lemma \ref{lem_weakbht} 
with $\mathcal{A}=\{(x,\hat x):x=\hat x\}$ since $I(W;\hat W)\le I(W;Y)$.
\end{proof}

The proof of Fano's inequality highlights the connection between the
channel coding problems investigated in this section and the simple
hypothesis testing problem of Section \ref{sec_pre}. In particular, it
suggests that the weak converse \eqref{eq_fano} may be strengthened to a
strong converse of the form $\ln M\le I(W;Y^n) + O(\sqrt{n})$ in the
setting of memoryless channels $P_{Y|X}\leftarrow
P_{Y^n|X^n}:=P_{Y|X}^{\otimes n}$. Unfortunately, the latter does not
follow from the strong converse obtained in Section \ref{sec_pre} for
binary hypothesis testing. Indeed,  
the framework of Section \ref{sec_pre} does not apply as
 the measures $P,Q$ that appear in the proof of Lemma
\ref{lem_fano} are not product measures even when the channel is
memoryless.
To sidestep this hurdle we will apply the
smoothing-out operation conditionally on $X^n$: that is, we will introduce
semigroups for which the channel $P_{Y^n|X^n=x^n}$ is the stationary
measure. The main additional challenge that arises is that the semigroup depends on the channel input
$x^n$ and must therefore be controlled uniformly in $x^n$.

\subsection{Bounded probability density case}\label{sec_fano_d}

In this section, we consider a random transformation $P_{X|Y}$ from
$\mathcal{X}$ to $\mathcal{Y}$ and denote by $P_{Y^n|X^n=x^n}:=
P_{Y|X=x_1}\otimes\cdots\otimes P_{Y|X=x_n}$ the corresponding $n$-fold memoryless
random transformation. The main assumption of this section is
that there exists a reference probability measure $\nu$ on $\mathcal{Y}$ 
such that
\begin{equation}
	\alpha :=
	\sup_{x\in\mathcal{X}}\left\|\frac{dP_{Y|X=x}}{d\nu}\right\|_{\infty}
	\in [1,\infty).
\label{e_assump}
\end{equation}
This assumption is automatically satisfied in the finite alphabet setting
$|\mathcal{Y}|<\infty$, in which case we may choose $\nu$ to be
equiprobable and consequently $\alpha\le|\mathcal{Y}|$. However, the
present setting is much more general: it applies to an arbitrary output
alphabet $\mathcal{Y}$ and requires only the existence of bounded
densities.

The main result of this section is the following strong form of Fano's
inequality (cf.\ Lemma \ref{lem_fano}) exhibiting a $\sqrt{n}$ second-order term..

\begin{thm}\label{thm_fano_maximal}
Assume \eqref{e_assump} holds. Let $W\in\{1,\ldots,M\}$ be an equiprobable
message, let $c_1,\dots,c_M\in\mathcal{X}^n$ be the codewords
corresponding to $W$, and let
$\mathcal{D}_1,\ldots,\mathcal{D}_M\subseteq\mathcal{Y}^n$ be
disjoint decoding sets.  Suppose that
\begin{align}
	\prod_{m=1}^M P_{Y^n|X^n=c_m}^{\frac{1}{M}}[\mathcal{D}_m]
	\ge 1-\epsilon.
	\label{e_geo}
\end{align}
Then
\begin{align}
	\ln M \le
	I(W;Y^n)
	+2\sqrt{\ln\frac{1}{1-\epsilon}}
	\sqrt{n(\alpha-1)}
	+\ln\frac{1}{1-\epsilon},
\label{e_fanodiscrete}
\end{align}
where $Y^n$ is the output of the memoryless channel $P_{Y^n|X^n} = P^{\otimes n}_{Y|X}$ with input
$X^n=c_W$.
\end{thm}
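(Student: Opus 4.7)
The plan is to imitate the proof of Theorem~\ref{thm:hyperBHT} by reducing to an auxiliary binary hypothesis test between $P:=P_{WY^n}$ and $Q:=P_W\otimes P_{Y^n}$, so that $D(P\|Q)=I(W;Y^n)$. The complication is that $P$ is not a product measure, although conditionally on $W=m$ it equals the product $P_{Y^n|X^n=c_m}$. To exploit this I use a \emph{conditional} smoothing semigroup: for each codeword $c_m$, let $T_t^{(m)}$ denote the tensor product \eqref{e_tens} of simple semigroups with stationary measure $P_{Y^n|X^n=c_m}$, and apply the Donsker--Varadhan principle \eqref{e_var} with test function $g(w,y^n):=T_t^{(w)}1_{\mathcal{D}_w}(y^n)$.

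For the lower bound on $\mathbb{E}_P[\ln g]$, I would condition on $W=m$ and apply reverse hypercontractivity (Theorem~\ref{thm_rhc}) to each conditional expectation, then invoke the geometric-mean hypothesis \eqref{e_geo} together with $1/(1-e^{-t})\le 1+1/t$ to obtain $\mathbb{E}_P[\ln g]\ge -(1+1/t)\ln\tfrac{1}{1-\epsilon}$, in direct parallel to \eqref{e_15}.

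The main obstacle is the upper bound on $\mathbb{E}_Q[g]$, the analogue of the change-of-measure step \eqref{e18}. A naive coordinate-wise density estimate produces a catastrophic $\alpha^n$ factor, since ratios of channel densities are not controlled by \eqref{e_assump}. The resolution is to expand $T_t^{(m)}$ as a mixture over random subsets $I\subseteq\{1,\ldots,n\}$ (each element included independently with probability $1-e^{-t}$), writing
\begin{equation*}
\mathbb{E}_{P_{Y^n}}\bigl[T_t^{(m)}1_{\mathcal{D}_m}\bigr]=\mathbb{E}_I\bigl[\rho_I^m[\mathcal{D}_m]\bigr],\qquad \rho_I^m:=\Bigl(\bigotimes_{i\in I}P_{Y|X=c_{m,i}}\Bigr)\otimes P_{Y_{I^c}},
\end{equation*}
where $P_{Y_{I^c}}$ is the marginal of $P_{Y^n}$ on coordinates $I^c$. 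Applying \eqref{e_assump} only on the resampled coordinates gives $\rho_I^m\le\alpha^{|I|}\,\nu^{\otimes I}\otimes P_{Y_{I^c}}$, and---crucially---the majorizing measure is a probability measure independent of $m$. Summing over $m$ and exploiting the disjointness of $\{\mathcal{D}_m\}$ then yields $\frac{1}{M}\sum_m\rho_I^m[\mathcal{D}_m]\le\alpha^{|I|}/M$, so that averaging over $I$ gives $\mathbb{E}_Q[g]\le(e^{-t}+(1-e^{-t})\alpha)^n/M\le e^{(\alpha-1)nt}/M$.

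Substituting both estimates into \eqref{e_var} produces $\ln M\le I(W;Y^n)+(\alpha-1)nt+(1+1/t)\ln\tfrac{1}{1-\epsilon}$, and optimizing over $t>0$ at $t^\star=\bigl(\ln\tfrac{1}{1-\epsilon}/[(\alpha-1)n]\bigr)^{1/2}$ yields \eqref{e_fanodiscrete}. The delicate step, as noted, is reorganizing the upper bound on $\mathbb{E}_Q[g]$ so that the bounded density hypothesis is applied \emph{only} on the resampled coordinates, with disjointness of the decoding sets absorbing the cost on the unchanged coordinates; without this rearrangement the dependence on $\alpha$ would be exponential in $n$ rather than the required $\sqrt{n(\alpha-1)}$.
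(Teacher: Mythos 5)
Your proposal is correct and follows essentially the same route as the paper's proof: conditional simple semigroups with stationary measures $P_{Y^n|X^n=c_m}$, reverse hypercontractivity plus the geometric-mean hypothesis for the lower bound, an $m$-independent majorization via \eqref{e_assump} combined with disjointness of the $\mathcal{D}_m$ for the upper bound, and optimization over $t$. Your subset-mixture expansion of $\mathbb{E}_Q[g]$ is exactly the paper's dominating operator $\Lambda_t=[e^{-t}+\alpha(1-e^{-t})\nu]^{\otimes n}$ (Lemma \ref{lem_semigroupmaximal}) written out term by term, and your single application of Donsker--Varadhan to $D(P_{WY^n}\|P_W\otimes P_{Y^n})$ is equivalent to the paper's per-codeword application followed by Jensen's inequality.
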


\begin{rem}
\label{rem_expur}
\edt{
The geometric average criterion \eqref{e_geo} is stronger than the 
average error criterion \eqref{e_avgerr} in Fano's inequality, but is 
weaker than the maximal error criterion $\min_m 
P_{Y^n|X^n=c_m}[\mathcal{D}_m]\ge 
1-\epsilon$. Both the average and maximal error criteria are commonly used 
in information theory, and our assumption
\eqref{e_geo} is intermediate between these two conventional criteria.

It is natural to ask whether the strong Fano inequality 
\eqref{e_fanodiscrete} remains valid even under the average error 
criterion, like the classical Fano inequality. This is not the case. In 
\cite{allerton_lcv2017}, a general notion of ``$\alpha$-decodability'' is 
introduced which subsumes the geometric average criterion, the average 
error criterion, and the maximum error criterion as the special cases 
$\alpha=0,1,-\infty$. It is shown there that $\alpha=0$ is the critical 
value for the existence of a strong Fano inequality. In this sense, the 
assumption \eqref{e_geo} of Theorem \ref{thm_fano_maximal} is essentially 
the best possible.

While the strong Fano inequality itself cannot hold under the average 
error criterion, it is possible in some applications of this inequality to 
upgrade the subsequent results to hold under the average error criterion 
by an additional argument known as \emph{codebook expurgation}. The 
idea behind this method is that if the average error criterion is met, 
then the maximal error criterion will be satisfied on a large subset of 
the codebook obtained by throwing out the worst half of the codewords 
(e.g., \cite[Theorem~7.7.1]{cover2012elements}).
By combining this device with 
Theorem~\ref{thm_fano_maximal} we can, for example, recover a strong 
converse under the average error criterion for the basic channel coding 
problem of Section \ref{sec:exist}. The expurgation argument cannot be 
applied directly to Theorem~\ref{thm_fano_maximal}, however, as $I(W;Y^n)$ 
may change significantly when we throw out codewords. Whether or 
not an expurgation argument is feasible depends on the manner in which 
Theorem~\ref{thm_fano_maximal} is used in a given application.}
\end{rem}

Before we turn to the proof of Theorem \ref{thm_fano_maximal}, let us
prepare for the smoothing-out argument that will appear therein.
For any $x\in\mathcal{X}$, denote by $(T_{x,t})_{t\ge 0}$ the simple
Markov semigroup on $\mathcal{Y}$ with stationary measure $P_{Y|X=x}$:
\begin{align}
	T_{x,t}f:=e^{-t}f+(1-e^{-t})P_{Y|X=x}(f).
\end{align}
We denote by
\begin{equation}
	T_{x^n,t} := T_{x_1,t}\otimes\cdots\otimes T_{x_n,t}
\label{e_prodsem}
\end{equation}
the corresponding product semigroup with stationary measure
$P_{Y^n|X^n=x^n}$.

Unlike the simpler setting of Section \ref{sec_pre},  the semigroup $T_{x^n,t}$ depends on $x^n$.
To
work around this issue, we introduce a linear operator $\Lambda_t:\mathcal{H}_+(\mathcal{Y}^n)\to
\mathcal{H}_+(\mathcal{Y}^n)$ that
dominates $T_{x^n,t}$ uniformly over $x^n$:
\begin{align}
	\Lambda_t :=
	[e^{-t}+\alpha(1-e^{-t})\nu]^{\otimes n}.
\label{e_semigroupM}
\end{align}
Note that $\Lambda_t$ is not a Markov semigroup; in particular, its
total mass satisfies
\begin{align}
	\Lambda_t1
	=\left(e^{-t}+\alpha(1-e^{-t})\right)^n
	\le e^{(\alpha-1)nt}.
\label{e8}
\end{align}
However, $\Lambda_t$ dominates the semigroups $T_{x^n,t}$ in the following
sense.

\begin{lem}\label{lem_semigroupmaximal}
Assume \eqref{e_assump} holds.
Then for any $f\in\mathcal{H}_+(\mathcal{Y}^n)$
\begin{align}
	\sup_{x^n\in\mathcal{X}^n}T_{x^n,t}f
	\le
	\Lambda_t f.
\label{e11}
\end{align}
\end{lem}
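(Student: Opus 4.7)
The plan is to reduce to the single-coordinate case and then tensorize, using positivity of the relevant operators to propagate pointwise inequalities through the tensor product.

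First I would establish the $n=1$ version. For any $g\in\mathcal{H}_+(\mathcal{Y})$ and $x\in\mathcal{X}$, assumption \eqref{e_assump} gives
\[
P_{Y|X=x}(g)=\int g\,\frac{dP_{Y|X=x}}{d\nu}\,d\nu \le \alpha\,\nu(g),
\]
and hence, pointwise,
\[
T_{x,t}g = e^{-t}g + (1-e^{-t})P_{Y|X=x}(g) \le e^{-t}g + (1-e^{-t})\alpha\,\nu(g),
\]
which is exactly the single-coordinate analogue of $\Lambda_t$ applied to $g$.

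Second I would tensorize. Let $\Lambda_t^{(i)}$ denote the operator $e^{-t}\mathrm{Id} + (1-e^{-t})\alpha\nu$ acting on the $i$th coordinate alone, so that $\Lambda_t = \Lambda_t^{(1)}\cdots \Lambda_t^{(n)}$ and $T_{x^n,t} = T_{x_1,t}\cdots T_{x_n,t}$. Operators acting on different coordinates commute, and each of $T_{x_i,t}$ and $\Lambda_t^{(i)}$ is a positive operator (it sends $\mathcal{H}_+$ to $\mathcal{H}_+$ and therefore preserves pointwise inequalities among nonnegative functions). For $f\in\mathcal{H}_+(\mathcal{Y}^n)$, applying the $n=1$ bound in the first coordinate with the others frozen yields $T_{x_1,t}f \le \Lambda_t^{(1)}f$. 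Applying $T_{x_2,t}$ to both sides and invoking positivity preserves the inequality; then applying the $n=1$ bound in the second coordinate (and using commutation together with the positivity of $\Lambda_t^{(1)}$) gives
\[
T_{x_1,t}T_{x_2,t}f \le \Lambda_t^{(1)}T_{x_2,t}f \le \Lambda_t^{(1)}\Lambda_t^{(2)}f.
\]
Iterating this across all $n$ coordinates produces $T_{x^n,t}f \le \Lambda_t^{(1)}\cdots \Lambda_t^{(n)}f = \Lambda_t f$; taking the supremum over $x^n\in\mathcal{X}^n$ concludes the proof.

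The argument is essentially mechanical, so there is no genuine obstacle. The only thing to be careful about is bookkeeping: one must verify that every intermediate quantity remains nonnegative so that positivity of the operators on each coordinate legitimately propagates the pointwise inequalities through the tensor product, and that the fact that the $T_{x_i,t}$ and $\Lambda_t^{(j)}$ act on disjoint coordinates (hence commute) is used correctly when swapping their order of application.
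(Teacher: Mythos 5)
Your proof is correct and is essentially the paper's argument: the paper's one-line proof likewise reduces to the single-coordinate bound $P_{Y|X=x}(g)\le\alpha\,\nu(g)$ for $g\ge 0$ and lets the pointwise domination propagate through the tensor product of positive operators. Your version simply makes the tensorization/commutation bookkeeping explicit.
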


\begin{proof}
The result follows immediately from the definition of $T_{x^n,t}$,
$\Lambda_t$, and $\alpha$ upon noting that $P_{Y|X=x}(g) \le \alpha
\nu(g)$ whenever $g\ge 0$.
\end{proof}

We are now ready for the proof of Theorem \ref{thm_fano_maximal}.

\begin{proof}[Proof of Theorem \ref{thm_fano_maximal}]
Let $f_m:=1_{\mathcal{D}_m}$ and $t>0$ to be optimized later.
Note that
\begin{align}
	I(W;Y^n)
	&=
	\frac{1}{M}\sum_{m=1}^M D(P_{Y^n|X^n=c_m}\|P_{Y^n})
	\\
	&\ge
	\frac{1}{M}\sum_{m=1}^M
	P_{Y^n|X^n=c_m}(\ln \Lambda_t f_m)
	-\frac{1}{M}\sum_{m=1}^M\ln P_{Y^n}(\Lambda_t f_m)
\label{e_dv}
\end{align}
where \eqref{e_dv} is from the variational formula \eqref{e_var}.

We can lower bound the first term in \eqref{e_dv} as follows:
\begin{align}
\nonumber
	&\frac{1}{M}\sum_{m=1}^M
	\ln \|\Lambda_tf_m\|_{L^0(P_{Y^n|X^n=c_m})}
\\ 	&\ge
	\frac{1}{M}\sum_{m=1}^M\ln \|T_{c_m,t}f_m\|_{L^0(P_{Y^n|X^n=c_m})}
\label{e20}
\\ 	&\ge
	\frac{1}{1-e^{-t}}\,
	\frac{1}{M}\sum_{m=1}^M\ln P_{Y^n|X^n=c_m}(f_m)
\label{e21}
\\
	&\ge
	-\left(\frac{1}{t}+1\right)
	\ln \frac{1}{1-\epsilon}
\label{e37}
\end{align}
where \eqref{e20} is from \eqref{e11};
\eqref{e21} is from Theorem~\ref{thm_rhc} and $f\in[0,1]$;
and \eqref{e37} is from \eqref{e_geo} and $e^t\ge 1+t$.
For the second term in \eqref{e_dv}, we can estimate
\begin{align}
-\frac{1}{M}\sum_{m=1}^M
\ln P_{Y^n}(\Lambda_tf_m)
&\ge
-
\ln P_{Y^n}\left(\frac{1}{M}\sum_{m=1}^M \Lambda_tf_m\right)
\label{e54}
\\
&\ge \ln M-(\alpha-1)nt
\label{e54b}
\end{align}
using Jensen's inequality, $\sum_mf_m\le 1$, and \eqref{e8}.
Combining \eqref{e_dv}, \eqref{e37}, and \eqref{e54b} and optimizing over
$t>0$ concludes the proof.
\end{proof}

\subsection{Gaussian case}\label{sec_gfano}

The strong Fano inequality of the previous section applies in principle to
arbitrary alphabets $\mathcal{X},\mathcal{Y}$. However, in certain
situations the bounded density assumption \eqref{e_assump} may be overly
restrictive, particularly when the alphabets are unbounded. In this
section, we will adapt the argument of the previous section to the
prototypical example where this issue arises, namely, Gaussian channels.
Beside the intrinsic utility of the result, this will illustrate the
broader applicability of the smoothing-out method beyond the simple
semigroup setting exploited so far.

Throughout this section, we consider the random transformation
$P_{Y|X=x}=\mathcal{N}(x,1)$ from $\mathbb{R}$ to $\mathbb{R}$, so that
the corresponding memoryless channel is defined by
$P_{Y^n|X^n=x^n}=\mathcal{N}(x^n,\mathbf{I}_n)$. In this setting, we have
the following Gaussian analogue of Theorem~\ref{thm_fano_maximal}. For
later applications to broadcast channels, we consider a slight extension
of the setting of Theorem~\ref{thm_fano_maximal} to allow stochastic
encoders that use $\log_2 L$ random bits.

\begin{thm}\label{thm_gaussianfano}

Let $P_{Y|X=x}=\mathcal{N}(x,1)$.
Consider any encoding function $\phi:[M]\times[L]\to\mathbb{R}^n$
and disjoint decoding sets
$\mathcal{D}_1,\ldots,\mathcal{D}_M\subseteq\mathbb{R}^n$.
Suppose that
\begin{align}
	\prod_{w,v}P_{Y^n|X^n=\phi(w,v)}^{\frac{1}{ML}}[\mathcal{D}_w]
	\ge 1-\epsilon.
\label{e64}
\end{align}
Then
\begin{align}
	\ln M \le
	I(W;Y^n)
	+\sqrt{2\ln\frac{1}{1-\epsilon}}\sqrt{n}
	+\ln\frac{1}{1-\epsilon},
\end{align}
 where $Y^n$ is the output of the channel $P_{Y^n|X^n}$ with input
$X^n=\phi(W,V)$, and $(W,V)$ is equiprobable on $[M]\times[L]$. 
\end{thm}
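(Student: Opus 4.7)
The plan is to parallel the proof of Theorem~\ref{thm_fano_maximal}, replacing the simple semigroup by the Gaussian Ornstein--Uhlenbeck (OU) semigroup since the bounded-density hypothesis \eqref{e_assump} fails in the Gaussian setting. For each $x^n \in \mathbb{R}^n$, introduce the product OU semigroup
\[
T_{x^n,t} f(y^n) := \mathbb{E}\bigl[f\bigl(e^{-t} y^n + (1-e^{-t}) x^n + \sqrt{1-e^{-2t}}\, Z^n\bigr)\bigr], \qquad Z^n \sim \mathcal{N}(0, I_n),
\]
a tensor product of one-dimensional OU semigroups with stationary measure $P_{Y^n|X^n=x^n}=\mathcal{N}(x^n, I_n)$. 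By the Gaussian analog of Theorem~\ref{thm_rhc} (Borell's reverse hypercontractivity for OU), the required condition becomes $e^{2t}\ge (1-q)/(1-p)$; letting $q \downarrow 0$ with $p = 1-e^{-2t}$ and using the elementary bound $1 - e^{-2t}\ge 2t/(1+2t)$, for any $f \in \mathcal{H}_{[0,1]}(\mathbb{R}^n)$,
\[
P_{Y^n|X^n=x^n}(\ln T_{x^n,t} f) \;\ge\; -\bigl(1 + \tfrac{1}{2t}\bigr)\ln \tfrac{1}{P_{Y^n|X^n=x^n}(f)}.
\]
The coefficient $\tfrac{1}{2t}$ (rather than $\tfrac{1}{t}$ of the discrete case) is precisely what generates the $\sqrt{2n\ln(1/(1-\epsilon))}$ scaling in the final bound.

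The proof then follows the blueprint of Theorem~\ref{thm_fano_maximal}: applying the Donsker--Varadhan formula \eqref{e_var} to $D(P_{Y^n|X^n=\phi(w,v)}\|P_{Y^n})$ with test functions $g_{(w,v)}(y^n) = T_{\phi(w,v), t} 1_{\mathcal{D}_w}(y^n)$, the first-term bound obtained from RHC combined with the geometric-average criterion~\eqref{e64} is $-(1+\tfrac{1}{2t})\ln\tfrac{1}{1-\epsilon}$. The new technical difficulty lies in the second-term bound: unlike the discrete case, no linear operator $\Lambda_t$ can pointwise dominate $T_{x^n,t}$ uniformly in $x^n$ while having finite total mass (the OU kernel's $x^n$-dependent Gaussian density cannot be uniformly majorized). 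In place of this I would use the ``scaling argument'' foreshadowed in Table~\ref{tab1}: the explicit identity
\[
P_{Y^n}(T_{x^n,t} f) = \tfrac{1}{ML}\sum_{w',v'} P_{Y^n|X^n=(1-e^{-t}) x^n + e^{-t}\phi(w',v')}(f),
\]
obtained by convolving the OU kernel against the Gaussian-mixture output density, combined with the disjointness $\sum_w 1_{\mathcal{D}_w}\le 1$, Jensen's inequality, and a Gaussian-MGF tilted change of measure, should produce $\ln M - nt$ for the averaged second term; the coefficient $1$ (in place of $\alpha-1$) reflects the Gaussian log-Sobolev constant.

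Combining these and optimizing over $t>0$ at $t^\ast = \sqrt{\tfrac{\ln(1/(1-\epsilon))}{2n}}$ produces the advertised $\sqrt{2n\ln(1/(1-\epsilon))}$ second-order term. In the presence of the stochastic encoder, the naive derivation would produce $I(X^n;Y^n) = I(W;Y^n) + I(V;Y^n|W)$ on the right-hand side, which is loose by up to $\ln L$; to obtain the tighter bound $I(W;Y^n)$ as stated, one applies the Donsker--Varadhan formula directly to $I(W;Y^n) = \tfrac{1}{M}\sum_w D(P_{Y^n|W=w}\|P_{Y^n})$ with test functions $g_w$ built as geometric averages of $T_{\phi(w,v), t} 1_{\mathcal{D}_w}$ over the auxiliary randomness~$v$, which preserves the scaling-based second-term bound. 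The main obstacle is then the control of the off-diagonal mixed-center expectations $\mathcal{N}(\phi(w,v'), I_n)(\ln T_{\phi(w,v), t} 1_{\mathcal{D}_w})$ for $v \ne v'$: these must be reduced, via a shift of variable to the centered OU stationary measure $\mathcal{N}(0, I_n)$, to an RHC bound involving the Gaussian mass of $\mathcal{D}_w$ at a convex combination of $\phi(w,v)$ and $\phi(w,v')$, which can then be absorbed into the geometric-average criterion~\eqref{e64}.
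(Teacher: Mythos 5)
There is a genuine gap: you correctly identify the central difficulty --- the OU semigroup $T_{x^n,t}$ depends on the codeword $x^n$ and cannot be dominated by a finite-mass operator uniformly in $x^n$ --- but neither of your proposed resolutions actually closes it. For the second term, your convolution identity $P_{Y^n}(T_{x^n,t}f)=\tfrac{1}{ML}\sum_{w',v'}P_{Y^n|X^n=(1-e^{-t})x^n+e^{-t}\phi(w',v')}(f)$ is correct, but the resulting Gaussian centers depend on $(w,v)$ through $x^n=\phi(w,v)$, so you cannot sum over $w$ and exploit disjointness of the $\mathcal{D}_w$ against a common measure; any ``tilted change of measure'' that reduces all these mixtures to one reference measure incurs a cost growing with $\max\|\phi(w,v)\|$, and the theorem assumes no power constraint. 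For the first term, your geometric-average test function $g_w=\prod_v(T_{\phi(w,v),t}f_w)^{1/L}$ produces off-diagonal terms $P_{Y^n|X^n=\phi(w,v')}(\ln T_{\phi(w,v),t}f_w)$ with $v\ne v'$ to which reverse hypercontractivity does not apply (the measure does not match the semigroup's stationary measure), and the hypothesis \eqref{e64} controls $P_{Y^n|X^n=z}[\mathcal{D}_w]$ only at the actual codewords $z=\phi(w,v)$, not at convex combinations of them, so these terms cannot be ``absorbed into the geometric-average criterion.''

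The paper's proof avoids both problems with a single device you are missing: a dilation of the input. One applies Donsker--Varadhan to $D(P_{\bar Y^n|W=w}\|P_{\bar Y^n})$ where $\bar X^n=e^tX^n$, using the \emph{single, $(w,v)$-independent} test function $T_{0^n,t}f_w$. The change-of-variables identity \eqref{e60}, $P_{Y^n|X^n=e^t\phi(w,v)}(\ln T_{0^n,t}f)=P_{Y^n|X^n=\phi(w,v)}(\ln T_{\phi(w,v),t}f)$, then recenters the semigroup at exactly the right stationary measure for RHC (no off-diagonal terms arise), while the second term is handled for free by Jensen and $\sum_w T_{0^n,t}f_w\le T_{0^n,t}(1)=1$, since $T_{0^n,t}$ is a genuine Markov operator. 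The price $nt$ appears instead when converting $I(W;\bar Y^n)$ back to $I(W;Y^n)$, which requires $h(\bar Y^n)\le h(Y^n)+nt$ (entropy power inequality) together with $h(\bar Y^n|W=w)\ge h(Y^n|W=w)$ (monotonicity of mutual information in SNR, via the I-MMSE relation). This last step --- the entire mechanism by which the linear-in-$t$ cost enters --- has no counterpart in your proposal.
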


In the Gaussian setting, it is natural to work not with simple
semigroups but rather with the \emph{Ornstein-Uhlenbeck semigroup}
with stationary measure $\mathcal{N}(x^n,\mb{I}_n)$:
\begin{align}
	T_{x^n,t}f(y^n)
	:=\mathbb{E}[f(e^{-t}y^n+(1-e^{-t})x^n
	+\sqrt{1-e^{-2t}}V^n)],
	\label{e_tx}
\end{align}
where $V^n\sim\mathcal{N}(0^n,\mb{I}_n)$.
In this setting, Borell \cite{borell1982positivity} showed that
reverse hypercontractivity
\eqref{e_RHC} holds under the even weaker assumption
$t\ge\frac{1}{2}\ln\frac{1-q}{1-p}$.

The proof will proceed a little differently than in Section \ref{sec_fano_d}.
Here, the analogue of $\Lambda_t$ in \eqref{e_semigroupM} is simply
$T_{0^n,t}$. Instead of Lemma~\ref{lem_semigroupmaximal},
we will exploit a simple change-of-variable formula: for any $f\ge 0$, $t>0$ and $x^n\in \mathbb{R}^n$,
we have
\begin{align}
	P_{Y^n|X^n=x^n}(\ln T_{0^n,t} f)
	=P_{Y^n|X^n=e^{-t}x^n}(\ln T_{e^{-t}x^n,t} f),
\label{e60}
\end{align}
which can be verified directly from  \eqref{e_tx}.

\begin{proof}[Proof of Theorem \ref{thm_gaussianfano}]

Let $f_w=1_{\mathcal{D}_w}$ for $w\in[M]$. Define also $\bar{X}^n=e^tX^n$,
and denote by $\bar{Y}^n$ the output of the channel $P_{Y^n|X^n}$ with
input $X^n\leftarrow\bar X^n$. Note that
\begin{align}
	& D(P_{\bar{Y}^n|W=w}\|P_{\bar{Y}^n})
\nonumber
\\
	&\ge
	\frac{1}{L}\sum_{v=1}^L
	P_{Y^n|X^n=e^t\phi(w,v)}
	(\ln T_{0^n,t}f_w)
	-\ln P_{\bar{Y}^n}(T_{0^n,t}f_w)
\label{e67}
\\
	&=\frac{1}{L}\sum_{v=1}^L
	P_{Y^n|X^n=\phi(w,v)}
	(\ln T_{\phi(w,v),t}f_w)
	-\ln P_{\bar{Y}^n}(T_{0^n,t}f_w)\label{e68}
\end{align}
for each $w$, where the key step \eqref{e68} used \eqref{e60}.

The summand in the first term of the right side of \eqref{e68} can be bounded using reverse
hypercontractivity for the Ornstein-Uhlenbeck semigroup \eqref{e_tx} as

\begin{align}
	P_{Y^n|X^n=\phi(w,v)}
	(\ln T_{\phi(w,v),t}f_w)
	&\ge \frac{1}{1-e^{-2t}}\ln
	P_{Y^n|X^n=\phi(w,v)}(f_w).
\end{align}
Therefore, using assumption \eqref{e64}, we obtain
\begin{align}
	\frac{1}{ML}\sum_{w,v} P_{Y^n|X^n=\phi(w,v)}
	(\ln T_{\phi(w,v),t}f_w)
	\ge -\frac{1}{1-e^{-2t}}\ln\frac{1}{1-\epsilon}.
\end{align}
On the other hand, by Jensen's inequality,
\begin{align}
	\frac{1}{M}\sum_{w=1}^M\ln P_{\bar{Y}^n}(T_{0^n,t}f_w)
	&\le
	\ln P_{\bar{Y}^n}\left(
	\frac{1}{M}\sum_{w=1}^M
	T_{0^n,t}f_w\right)
	\le \ln \frac{1}{M},
\end{align}
where we used $\sum_w f_w\le 1$. Thus, averaging over $w$
in \eqref{e68}, we find
\begin{align}
	\ln M \le
	I(W;\bar{Y}^n)
	+\frac{1}{1-e^{-2t}}\ln\frac{1}{1-\epsilon}.
\label{e_47}
\end{align}
We must now bound $I(W;\bar Y^n)$ in terms of $I(W;Y^n)$.
To this end,
let $G^n\sim \mathcal{N}(0^n,\mb{I}_n)$ be independent of $X^n$ and
note that
\begin{align}
	h(\bar{Y}^n)
	&=h(e^tX^n+G^n)
	\\
	&=h(X^n+e^{-t}G^n)+nt
	\\
	&\le h(Y^n)+nt,
\label{e_50}
\end{align}
where \eqref{e_50} can be seen from the entropy power inequality.
On the other hand,
\begin{align}
\nonumber
	& h(\bar{Y}^n|W=w)-h(Y^n|W=w)
	\\ &=I(\bar{Y}^n;X^n|W=w)
	-I(Y^n;X^n|W=w) \ge 0
\label{e_46}
\end{align}
for every $w$,
where the equality follows as $h(\bar{Y}^n|X^n,W=w)=h(Y^n|X^n,W=w)=h(G^n)$
and nonnegativity can be seen from \cite[Theorem~1]{guo2005mutual}.
Combining \eqref{e_47}, \eqref{e_50}, and \eqref{e_46} and using
$e^t\ge 1+t$ yields
\begin{align}
	\ln M \le
	I(W;Y^n)
	+\bigg(\frac{1}{2t}+1\bigg)
	\ln\frac{1}{1-\epsilon} + nt,
\end{align}
and the conclusion follows by optimizing over $t$.
\end{proof}

\begin{rem}
If the geometric average criterion \eqref{e64} is replaced by the
stronger maximal error criterion, a Gaussian Fano inequality with
$\sim\sqrt{n}$ second-order term can also be obtained using the
information spectrum method and the Gaussian Poincar\'e inequality; see
\cite[Theorem~8]{polyanskiy2014empirical}. However, beside the stronger
assumption, the resulting bound is much less explicit and does not recover
the correct dependence of the second-order term on the error probability
as $\epsilon\to 1$.
\end{rem}

\subsection{Application: output distribution of good channel codes}\label{sec_output}

As a very first application, let us note that Theorem 
\ref{thm_fano_maximal} effortlessly gives a sharp non-asymptotic converse 
for the basic data transmission example of Section \ref{sec:exist} for 
discrete memoryless channels. Since by the data processing inequality 
$$
	I(W;Y^n) \leq 
	n\sup_{P_X}I(X;Y) =: nC,
$$ 
the maximal code size satisfies 
\begin{equation}
\label{eq_simple_data}
	\ln M^* (n, \epsilon)  \le nC + 2\sqrt{|\mathcal{Y}|\ln\frac{1}{1-\epsilon}}\sqrt{n}
	+\ln\frac{1}{1-\epsilon},
\end{equation}
where we chose the reference measure $\nu$ in \eqref{e_assump} to be 
uniform on $\mathcal{Y}$. 

It was observed in \cite{han1993approximation,empirical} that if a code 
is close
to achieving the upper bound \eqref{eq_simple_data} , this places a strong
constraint on what the transmitted message looks like. In particular, it 
was shown there that if an $(n,M,\epsilon)$-code satisfies
\begin{equation}
	\ln M = nC + o(n),
\end{equation}
then it is necessarily the case that
\begin{equation}
	\frac{1}{n}D(P_{Y^n}\|P_{Y^n}^\star) = o(\epsilon),
\end{equation}
where 
$P_{Y^n}$ is the channel output distribution induced by the codebook 
and $P_{Y^n}^{\star}:=P_Y^{\star\otimes n}$ and $P_Y^{\star}$ is
the unique maximal mutual information output distribution (note that corresponding optimal input distribution need not be unique).
That is, the output distribution of a good code must approximate the 
capacity-achieving output distribution in the small error probability 
regime. Such a necessary condition for good channel codes sometimes 
provides guidelines for their design. 

Theorem \ref{thm_fano_maximal} enables us to develop a sharp quantitative 
form of this phenomenon for \emph{any} discrete memoryless channel.

\begin{thm}\label{thm_approx}
Consider a discrete memoryless channel $P_{Y|X}$ with capacity $C$. An 
$(n,M,\epsilon)$ code (under the maximal error criterion) satisfies
\begin{align}
	D(P_{Y^n}\|P_{Y^n}^{\star})
	\le
	nC-\ln M +
	2\sqrt{|\mathcal{Y}|\ln\frac{1}{1-\epsilon}}\sqrt{n}
	+\ln\frac{1}{1-\epsilon}
\label{e_approx}
\end{align}
where $P^{\star}_{Y^n}$ is the unique capacity achieving output 
distribution defined above.
\end{thm}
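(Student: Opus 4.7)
The plan is to combine Theorem \ref{thm_fano_maximal} with the standard ``golden identity'' that relates mutual information to divergence from the capacity-achieving output distribution. The starting point is the following decomposition, valid for any joint distribution $P_{WX^nY^n}$ whose $Y^n$-marginal is $P_{Y^n}$:
\begin{equation*}
	I(W;Y^n) = \mathbb{E}\bigl[D(P_{Y^n|X^n}\|P_{Y^n}^{\star})\bigr] - D(P_{Y^n}\|P_{Y^n}^{\star}),
\end{equation*}
which is a consequence of $D(P_{Y^n|X^n}\|P_{Y^n}|P_{X^n})=D(P_{Y^n|X^n}\|P_{Y^n}^{\star}|P_{X^n})-D(P_{Y^n}\|P_{Y^n}^{\star})$. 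This is the essential step that lets us trade $I(W;Y^n)$ for the divergence we want to bound.

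Next I would bound the first term by $nC$. Since $P_{Y^n}^{\star}=(P_Y^{\star})^{\otimes n}$ and the channel is memoryless, $D(P_{Y^n|X^n=x^n}\|P_{Y^n}^{\star})=\sum_{i=1}^n D(P_{Y|X=x_i}\|P_Y^{\star})$. The Kemperman/Topsoe characterization of the capacity-achieving output distribution gives $D(P_{Y|X=x}\|P_Y^{\star})\le C$ for every $x\in\mathcal{X}$ (this is what guarantees uniqueness of $P_Y^{\star}$ as well). Hence $\mathbb{E}[D(P_{Y^n|X^n}\|P_{Y^n}^{\star})]\le nC$, so rearranging gives
\begin{equation*}
	D(P_{Y^n}\|P_{Y^n}^{\star}) \le nC - I(W;Y^n).
\end{equation*}

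It remains to lower bound $I(W;Y^n)$. By hypothesis the code satisfies the maximal error criterion with error $\epsilon$, which is stronger than the geometric average criterion \eqref{e_geo} required by Theorem \ref{thm_fano_maximal}. For a discrete channel we take $\nu$ to be uniform on $\mathcal{Y}$ in \eqref{e_assump}, so $\alpha\le|\mathcal{Y}|$ and in particular $\alpha-1\le|\mathcal{Y}|$. Applying Theorem \ref{thm_fano_maximal} yields
\begin{equation*}
	\ln M \le I(W;Y^n) + 2\sqrt{|\mathcal{Y}|\ln\tfrac{1}{1-\epsilon}}\sqrt{n} + \ln\tfrac{1}{1-\epsilon}.
\end{equation*}
Substituting the resulting lower bound on $I(W;Y^n)$ into the previous display produces \eqref{e_approx} immediately.

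There is no real obstacle here: both ingredients, the golden identity and Theorem \ref{thm_fano_maximal}, do all the work. The only subtle point worth flagging is the appeal to $D(P_{Y|X=x}\|P_Y^{\star})\le C$ for \emph{all} $x\in\mathcal{X}$ (not only those in the support of a capacity-achieving input), which is a classical consequence of the KKT conditions for the capacity optimization and which in fact characterizes $P_Y^{\star}$ uniquely even when the optimal input distribution is not unique.
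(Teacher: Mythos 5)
Your proposal is correct and follows essentially the same route as the paper: the golden identity $D(P_{Y^n}\|P_{Y^n}^{\star}) = D(P_{Y^n|X^n}\|P_{Y^n}^{\star}|P_{X^n}) - I(X^n;Y^n)$, the saddle-point bound $D(P_{Y|X=x}\|P_Y^{\star})\le C$ for all $x$, and Theorem \ref{thm_fano_maximal} with $\nu$ uniform on $\mathcal{Y}$ (so $\alpha\le|\mathcal{Y}|$). The only cosmetic difference is that you phrase the identity with $I(W;Y^n)$ rather than $I(X^n;Y^n)$, which is harmless since the two coincide for a deterministic encoder (and the inequality would go the right way in any case).
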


Two second-order bounds on the approximation error 
$D(P_{Y^n}\|P_{Y^n}^{\star})$ were previously derived in 
\cite{polyanskiy2014empirical}. In 
\cite[Theorem~7]{polyanskiy2014empirical}, an analogue of Theorem 
\ref{thm_approx} is obtained using the blowing-up method; as usual, this 
results in a suboptimal second-order term. However, in 
\cite[Theorem~6]{polyanskiy2014empirical} (see 
\cite[Theorem~3.6.6]{raginsky2014concentration} for a sharper formulation) 
a result analogous to Theorem \ref{thm_approx} is obtained with sharp
second-order term under the following assumption (the 
Burnashev condition \cite{burnashev1976data}):
\begin{align}
	\sup_{x,x'}\left\|\frac{dP_{Y|X=x}}{dP_{Y|X=x'}}
	\right\|_{\infty}<\infty.
\label{e_bur}
\end{align}
Theorem \ref{thm_approx} shows that the Burnashev condition is 
unnecessary: the result holds for any discrete memoryless channel. Its 
proof is completely straightforward; one simply follows the steps in 
\cite[(64)-(66)]{polyanskiy2014empirical} using the optimal second-order
Fano inequality (Theorem~\ref{thm_fano_maximal}) in lieu of the blowing-up
argument.

\begin{proof}[Proof of Theorem \ref{thm_approx}]
It suffices to note that
\begin{align}
	D(P_{Y^n}\|P_{Y^n}^{\star}) 
	&= D(P_{Y^n|X^n}\|P_{Y^n}^{\star}|P_{X^n}) \label{gold}
	-I(X^n;Y^n) \\ &\le nC - I(X^n;Y^n), \label{gold00}
\end{align}
where \eqref{gold} is a well-known identity, and \eqref{gold00} follows 
from 
the saddle point property of the capacity-achieving distribution 
\cite[eq.\ (19)]{polyanskiy2014empirical}.
It remains to bound $I(X^n;Y^n)$ using
Theorem~\ref{thm_fano_maximal}, where we choose $\nu$ in \eqref{e_assump} 
to be uniform on $\mathcal{Y}$.
\end{proof}

\begin{rem}
We have stated the above form of Theorem \ref{thm_approx} for simplicity. 
Without any additional difficulty, the finite alphabet assumption may be 
weakened to a bounded density assumption \eqref{e_assump} and the maximal 
error criterion may be weakened to the geometric average 
criterion \eqref{e_geo}. A Gaussian counterpart of 
Theorem~\ref{thm_approx} can also be obtained in a similar manner using 
Theorem~\ref{thm_gaussianfano}, strengthening the result of \cite[Theorem 
8]{polyanskiy2014empirical} to the geometric average criterion.
\end{rem}
\begin{rem}
By codebook expurgation (cf.\ Remark \ref{rem_expur}), 
one can slightly modify the strong channel coding 
converse \eqref{eq_simple_data} to remain valid even under the average
error criterion. In contrast, the proof of Theorem \ref{thm_approx} relied
on the sharper bound on $I(W;Y^n)$ given in Theorem 
\ref{thm_fano_maximal}, and thus the codebook expurgation method does not
apply in this setting. It is shown in \cite{allerton_lcv2017} that the 
geometric average criterion developed here is (in a particular sense) the 
weakest criterion under which Theorem \ref{thm_approx} may hold, so that 
our approach is optimal also in this sense.
\end{rem}

\subsection{Application: broadcast channels}\label{sec_bc}

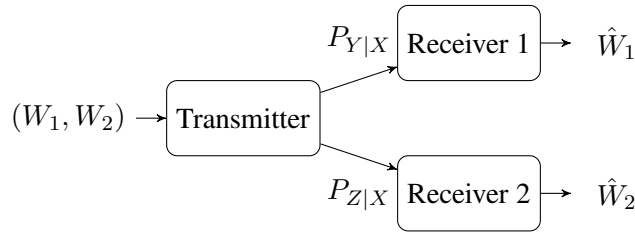
\begin{figure}[t]
  \centering
\begin{tikzpicture}
[node distance=0.6cm,minimum height=10mm,minimum width=12mm,arw/.style={->,>=stealth'}]
  \node[rectangle,draw,rounded corners] at (0,0) (T) {Transmitter};
  \node[rectangle,draw,rounded corners] at (3,1)(Y) {Receiver 1};
  \node[rectangle,draw,rounded corners] at (3,-1)(Z){Receiver 2};
  \node[rectangle] (W) [left =0.4cm of T] {$(W_1,W_2)$};
  \node[rectangle] (W1) [right =0.4cm of Y] {$\hat{W}_1$};
  \node[rectangle] (W2) [right =0.4cm of Z] {$\hat{W}_2$};

  \draw [arw] (W) to node[midway,above]{} (T);
  \draw [arw] (Y) to node[midway,right]{} (W1);
  \draw [arw] (Z) to node[midway,right]{} (W2);
  \draw [arw] (T) to node[midway,above]{$P_{Y|X}$} (Y);
  \draw [arw] (T) to node[midway,below]{$P_{Z|X}$} (Z);
\end{tikzpicture}
\label{f_agk}
\caption{The broadcast channel}
\end{figure}
Determining the capacity region of general broadcast channels is a 
long-standing open problem in information theory \cite{el2011network}. The 
general setting is illustrated in Figure~\ref{f_agk}. A transmitter wishes 
to send messages $(W_1,W_2)$, equiprobable on $[M_1]\times[M_2]$, to two 
respective receivers, using $n$ repetitions of the random transformations 
$P_{Y|X}$ and 
$P_{Z|X}$. A code consisting of codewords $c_{u,v}\in\mathcal{X}^n$ (with 
$u\in[M_1]$, $v\in[M_2]$) is said to be an 
\emph{$(n,M_1,M_2,\epsilon)$-code} under the average error criterion, if the receivers can reconstruct 
$\hat{W}_1$ and $\hat{W}_2$ from the outputs of their respective 
channels with error probabilities
\begin{align}
	\mathbb{P}[\hat{W}_1\neq W_1]\le \epsilon,\qquad
	\mathbb{P}[\hat{W}_2\neq W_2]\le \epsilon.
\label{eq_bavge}
\end{align}
The problem of understanding precisely what codebook sizes $M_1,M_2$ can 
be transmitted with a given error probability remains, in general, an open 
problem. However, we will consider the case of \emph{degraded broadcast 
channels} which is much better understood. The additional structural 
assumption in this case is that $P_{Z|X}$ equals the concatenation of 
$P_{Y|X}$ and a certain random transformation $P_{Z|Y}$ (or the other way 
around). In particular, in the setting of Gaussian broadcast channels, 
this additional assumption is always satisfied.

The aim of this section is to show how to obtain second-order converses for
both Gaussian and discrete degraded broadcast channels using our methods,
improving the best previously known bounds. To avoid digressions
unrelated to the topic of this paper, we will only sketch 
how Theorems \ref{thm_fano_maximal} and \ref{thm_gaussianfano} enter
the proofs. The subsequent manipulations of the first-order terms (using the chain rule and the entropy power 
inequality) are standard; we refer to \cite{el2011network} for the 
omitted steps and for matching achievability results. 
The maximal error criterion for the broadcast channel reads as
\begin{align}
	\max_{w_1,w_2}\mathbb{P}[\hat{W}_1\neq W_1|(W_1,W_2)=(w_1,w_2)]\le \epsilon,
	\\
	\max_{w_1,w_2}\mathbb{P}[\hat{W}_2\neq W_2|(W_1,W_2)=(w_1,w_2)]\le \epsilon.
	\end{align}
The reduction to the average error criterion for the broadcast channel can also be done by a
codebook expurgation argument, albeit more sophisticated than the single-user setting; see \cite[Problem~8.11]{el2011network}.
	
We now introduce the following geometric average decodability criterion for the broadcast channel,
which interpolates the average and the maximal error criteria above (see Remark \ref{rem_expur}).
\begin{align}
	\prod_{w_1,w_2}\mathbb{P}^{\frac{1}{M_1M_2}}[\hat{W}_1= W_1|(W_1,W_2)=(w_1,w_2)]\ge 1-\epsilon,
	\\
	\prod_{w_1,w_2}\mathbb{P}^{\frac{1}{M_1M_2}}[\hat{W}_2= W_2|(W_1,W_2)=(w_1,w_2)]\ge 1-\epsilon.
	\end{align}
The geometric average criterion integrates seamlessly with our proof, which does not involve an expurgation argument.

We begin by stating the Gaussian result. In the stationary 
memoryless Gaussian broadcast channel, it is assumed that 
$P_{Y^n|X^n}=P_{Y|X}^{\otimes n}$ and $P_{Z^n|X^n}=P_{Z|X}^{\otimes n}$
with 
$P_{Y|X=x}=\mathcal{N}(x,\sigma_1^2)$ and 
$P_{Z|X=x}=\mathcal{N}(x,\sigma_2^2)$. We assume moreover that the 
codewords $c\in\mathbb{R}^n$ must satisfy the power constraint 
$\|c\|^2\le nP$. The signal-to-noise ratios (SNR) of the two channels 
are then defined by $S_i:=P/\sigma_i^2$.

\begin{thm}\label{thm_gbc}
Consider a stationary memoryless Gaussian broadcast channel 
with SNRs $S_1,S_2\in (0,\infty)$.
Suppose there exists an $(n,M_1,M_2,\epsilon)$-code (under the geometric average criterion). Then
\begin{align}
	\ln M_1&\le n{\sf C}(\alpha S_1)
	+\sqrt{2n\ln\frac{1}{1-\epsilon}}
	+\ln\frac{1}{1-\epsilon},
\\
	\ln M_2&\le n{\sf C}\left(\frac{(1-\alpha)S_2}{\alpha S_2+1}\right)
	+\sqrt{2n\ln\frac{1}{1-\epsilon}}
	+\ln\frac{1}{1-\epsilon}
\label{e_gaussianBC}
\end{align}
for some $\alpha\in [0,1]$, where ${\sf C}(t):=\frac{1}{2}\ln(1+t)$.
\end{thm}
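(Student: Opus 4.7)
The strategy is to apply the Gaussian second-order Fano inequality (Theorem \ref{thm_gaussianfano}) separately to each receiver, which reduces the problem to the standard first-order degraded Gaussian broadcast converse; the latter follows from the entropy power inequality in the classical manner of Bergmans. The key point is that Theorem \ref{thm_gaussianfano} absorbs the entire price of the strong converse into the $\sqrt{n}$ slack term, so the remainder of the proof is a textbook first-order manipulation.

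To bound $M_1$, I would regard $W_1\in[M_1]$ as the message and $W_2\in[M_2]$ as the auxiliary randomness of a stochastic encoder, so that $L=M_2$ and $\phi(w_1,w_2)=c_{w_1,w_2}$. Under this identification the geometric average hypothesis for receiver~$1$ is exactly condition \eqref{e64}. Theorem \ref{thm_gaussianfano} is stated for unit noise variance, but all relevant information-theoretic and decodability quantities are invariant under the joint rescaling $(x,y)\mapsto(x/\sigma_1,y/\sigma_1)$, so the theorem applies directly. We obtain
\begin{equation}
\ln M_1 \le I(W_1;Y^n) + \sqrt{2n\ln\tfrac{1}{1-\epsilon}} + \ln\tfrac{1}{1-\epsilon},
\end{equation}
and symmetrically $\ln M_2 \le I(W_2;Z^n) + \sqrt{2n\ln\tfrac{1}{1-\epsilon}} + \ln\tfrac{1}{1-\epsilon}$.

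It remains to bound the two mutual informations by $n\,{\sf C}(\alpha S_1)$ and $n\,{\sf C}((1-\alpha)S_2/(\alpha S_2+1))$ for a common $\alpha\in[0,1]$. Assume without loss of generality $\sigma_1\le\sigma_2$ and couple the two channels via $Z^n = Y^n + \tilde N^n$ with $\tilde N^n\sim\mathcal{N}(0,(\sigma_2^2-\sigma_1^2)\mathbf{I}_n)$ independent of $(X^n,W_1,W_2)$. Since $W_1$ and $W_2$ are independent, $I(W_1;Y^n)\le I(W_1;Y^n|W_2)$. The power constraint $\mathbb{E}\|X^n\|^2\le nP$ gives
\begin{equation}
\tfrac{n}{2}\ln(2\pi e\sigma_1^2) = h(Y^n|X^n) \le h(Y^n|W_2) \le h(Y^n) \le \tfrac{n}{2}\ln(2\pi e(P+\sigma_1^2)),
\end{equation}
so there exists $\alpha\in[0,1]$ with $h(Y^n|W_2)=\tfrac{n}{2}\ln(2\pi e(\alpha P+\sigma_1^2))$, whence
\begin{equation}
I(W_1;Y^n|W_2) = h(Y^n|W_2) - h(Y^n|X^n) = n\,{\sf C}(\alpha S_1).
\end{equation}
Applying the conditional entropy power inequality to the coupling,
\begin{equation}
e^{\frac{2}{n}h(Z^n|W_2)} \ge e^{\frac{2}{n}h(Y^n|W_2)} + 2\pi e(\sigma_2^2-\sigma_1^2) = 2\pi e(\alpha P+\sigma_2^2),
\end{equation}
while the Gaussian maximum-entropy property yields $h(Z^n)\le\tfrac{n}{2}\ln(2\pi e(P+\sigma_2^2))$. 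Subtracting,
\begin{equation}
I(W_2;Z^n) = h(Z^n) - h(Z^n|W_2) \le \tfrac{n}{2}\ln\frac{P+\sigma_2^2}{\alpha P+\sigma_2^2} = n\,{\sf C}\!\left(\frac{(1-\alpha)S_2}{\alpha S_2+1}\right),
\end{equation}
completing the proof.

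There is no real obstacle beyond Theorem \ref{thm_gaussianfano} itself: the broadcast converse reduces to a textbook Bergmans-type computation once the second-order Fano inequality is in hand. The only two points that require attention are the invariance of Theorem \ref{thm_gaussianfano} under rescaling to unit noise variance and the observation that the geometric-average decodability \eqref{e64} is automatically inherited when the other message is treated as the stochastic-encoder randomness; both follow by unwinding definitions. The reverse ordering $\sigma_2<\sigma_1$ is handled by relabelling the receivers.
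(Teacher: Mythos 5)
Your proof is correct and follows essentially the same route as the paper: Theorem \ref{thm_gaussianfano} absorbs the strong-converse cost into the $\sqrt{n}$ term, and the remainder is the standard Bergmans EPI computation that the paper delegates to the weak-converse proof in the literature. The only cosmetic difference is that you invoke the stochastic-encoder form of Theorem \ref{thm_gaussianfano} directly with $V=W_2$ and then use $I(W_1;Y^n)\le I(W_1;Y^n|W_2)$, whereas the paper applies the theorem conditionally on each $W_2=w_2$ and averages the resulting $\epsilon_{w_2}$-dependent bounds via Jensen's inequality; both yield the same conclusion.
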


\begin{proof}
We first use Theorem~\ref{thm_gaussianfano} to obtain\footnote{Note that 
	as no power constraint is assumed in Theorem 
	\ref{thm_gaussianfano}, its conclusion extends verbatim to 
	channels with arbitrary positive variance by scaling.}
\begin{align}
	\ln M_1&\le I(W_1;Y^n|W_2)
	+\sqrt{2n\ln\frac{1}{1-\epsilon}}
	+\ln\frac{1}{1-\epsilon},
	\label{e344}
\\
	\ln M_2&\le I(W_2;Z^n)
	+\sqrt{2n\ln\frac{1}{1-\epsilon}}
	+\ln\frac{1}{1-\epsilon}.
\end{align}
To see \eqref{e344}, we first apply Theorem~\ref{thm_gaussianfano} to obtain:
\begin{align}
\ln M_1&\le I(W_1;Y^n|W_2=w_2)
	+\sqrt{2n\ln\frac{1}{1-\epsilon_{w_2}}}
	+\ln\frac{1}{1-\epsilon_{w_2}},
	\label{e349}
\end{align}
where we defined
$$\epsilon_{w_2}=	1-\prod_{w_1}\mathbb{P}^{\frac{1}{M_1}}[\hat{W}_1= W_1|(W_1,W_2)=(w_1,w_2)].$$
Then we obtain \eqref{e344} by averaging \eqref{e349} over $w_2$ and applying Jensen's inequality:	
\begin{align}
\frac{1}{M_2}\sum_{w_2}\sqrt{2n\ln\frac{1}{1-\epsilon_{w_2}}}
&\le
\sqrt{2n\frac{1}{M_2}\cdot\sum_{w_2}\ln\frac{1}{1-\epsilon_{w_2}}}
\\
&=\sqrt{2n\ln\frac{1}{\prod_{w_2}(1-\epsilon_{w_2})^{\frac{1}{M_2}}}}
\\
&\le \sqrt{2n\ln\frac{1}{1-\epsilon}},
\end{align}
where we used the fact that
$$\prod_{w_2}(1-\epsilon_{w_2})^{\frac{1}{M_2}}
=
\prod_{w_1,w_2}\mathbb{P}^{\frac{1}{M_1M_2}}[\hat{W}_1= W_1|(W_1,W_2)=(w_1,w_2)]
\ge 1-\epsilon.
$$
Now it remains to bound the mutual 
informations $I(W_1;Y^n|W_2)$ and $I(W_2;Z^n)$ by the respective 
capacities; the proof of this part of the argument is identical that of 
the weak converse \cite[Theorem~5.3]{el2011network}, and we omit the 
details.
\end{proof}

The analogous result in the discrete case is as follows.

\begin{thm}\label{thm_discreteBC}
Consider a degraded discrete memoryless broadcast channel
$(P_{Y|X},P_{Z|X})$ and $(n,M_1,M_2,\epsilon)$-code
(under the geometric average criterion). Then
\begin{align}
	\ln M_1
	&\le n I(X;Y|U)+2\sqrt{|\mathcal{Y}|n\ln\frac{1}{1-\epsilon}}
	+\ln\frac{1}{1-\epsilon},
\\
	\ln M_2
	&\le
	n I(U;Z)+2\sqrt{|\mathcal{Z}|n\ln\frac{1}{1-\epsilon}}
	+\ln\frac{1}{1-\epsilon}
\end{align}
for some distribution $P_{UX}$.
\end{thm}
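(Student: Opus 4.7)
The plan is to mirror the proof of Theorem \ref{thm_gbc} (the Gaussian case), with Theorem \ref{thm_fano_maximal} in the role of Theorem \ref{thm_gaussianfano} and the classical single-letterization for discrete degraded broadcast channels replacing the Gaussian entropy-power manipulations.

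First, I establish the two non-asymptotic bounds
\begin{align*}
\ln M_1 &\le I(W_1; Y^n \mid W_2) + 2\sqrt{|\mathcal{Y}|\, n \ln\tfrac{1}{1-\epsilon}} + \ln\tfrac{1}{1-\epsilon}, \\
\ln M_2 &\le I(W_2; Z^n) + 2\sqrt{|\mathcal{Z}|\, n \ln\tfrac{1}{1-\epsilon}} + \ln\tfrac{1}{1-\epsilon},
\end{align*}
by applying Theorem \ref{thm_fano_maximal} with $\nu$ chosen uniform on the relevant output alphabet, so that $\alpha - 1 \le |\mathcal{Y}|$ (resp.\ $|\mathcal{Z}|$). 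For the $M_1$ bound I condition on $W_2 = w_2$: conditionally, the code for $W_1$ is deterministic with codewords $\{c_{w_1,w_2}\}_{w_1}$ and satisfies the geometric-average criterion with some error $\epsilon_{w_2}$ for which $\prod_{w_2}(1-\epsilon_{w_2})^{1/M_2} \ge 1-\epsilon$; Theorem \ref{thm_fano_maximal} applies directly, and averaging over $w_2$ via Jensen's inequality (as in the proof of Theorem \ref{thm_gbc}) yields the displayed estimate. For the $M_2$ bound, $W_1$ serves as uniform random bits, so the effective encoder for $W_2$ is stochastic; I would invoke the stochastic-encoder extension of Theorem \ref{thm_fano_maximal}, obtained by inserting an extra average over the auxiliary bits $v$ in the argument that led to Theorem \ref{thm_fano_maximal}, in exact analogy with how Theorem \ref{thm_gaussianfano} extends the Gaussian Fano bound.

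Second, I perform single-letterization by the standard Bergmans argument for degraded broadcast channels (see, e.g., \cite[Thm.~5.2]{el2011network}). Introducing the auxiliary random variables $U_i := (W_2, Z^{i-1})$, the chain rule combined with the degradedness Markov chain $X_i - Y_i - Z_i$ and the Csisz\'ar sum identity yields
\[
I(W_2; Z^n) \le \sum_{i=1}^n I(U_i; Z_i), \qquad I(W_1; Y^n \mid W_2) \le \sum_{i=1}^n I(X_i; Y_i \mid U_i).
\]
Setting $U := (U_Q, Q)$, $X := X_Q$, $Y := Y_Q$, $Z := Z_Q$ with $Q$ uniform on $[n]$ and independent of everything else produces a joint distribution $P_{UX}$ with $U-X-(Y,Z)$ for which $I(W_2; Z^n) \le n I(U; Z)$ and $I(W_1; Y^n \mid W_2) \le n I(X; Y \mid U)$. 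Substituting into the bounds from the first step gives the theorem.

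The main (and minor) obstacle is the stochastic-encoder extension of Theorem \ref{thm_fano_maximal} required for the $M_2$ bound: one must carefully fold the average over the random bits through the reverse-hypercontractivity step \eqref{e37} and through Jensen's inequality in \eqref{e54}, so that the second-order term retains the $\sqrt{|\mathcal{Z}|\,n}$ scaling and the correct $\ln\frac{1}{1-\epsilon}$ dependence; this mirrors exactly the treatment of $\phi:[M]\times[L]\to\mathbb{R}^n$ in Theorem \ref{thm_gaussianfano}. The subsequent single-letterization is entirely classical and introduces no new ideas.
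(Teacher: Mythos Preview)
Your proposal is correct and matches the paper's own proof essentially step for step: the paper likewise says to repeat the argument of Theorem~\ref{thm_gbc} with Theorem~\ref{thm_fano_maximal} (extended to stochastic encoders exactly as Theorem~\ref{thm_gaussianfano} was) in place of the Gaussian Fano bound, and then to invoke the classical single-letterization of \cite[Theorem~5.2]{el2011network}. Your identification of the one nontrivial point---carrying the extra average over random bits through \eqref{e37} and \eqref{e54}---is precisely the ``trivial generalization'' the paper alludes to.
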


\begin{proof}
The steps are identical to those of Theorem \ref{thm_gbc}, modulo the
(trivial) generalization of Theorem~\ref{thm_fano_maximal} to allow 
stochastic encoders as we did in Theorem~\ref{thm_gaussianfano}.
The omitted manipulations of the mutual information terms follow 
identically the proof of the weak converse in this setting
\cite[Theorem~5.2]{el2011network}.
\end{proof}

Theorem~\ref{thm_discreteBC} appears to be the first second-order converse for 
the discrete degraded broadcast channel, improving the suboptimal converse obtained by the 
blowing-up method \cite{ahlswede_bounds_cond1976} (cf.\ 
\cite[Theorem~3.6.4]{raginsky2014concentration}). Our approach further 
extends beyond finite alphabets to the bounded density setting 
\eqref{e_assump}. For the Gaussian broadcast channel, a strong converse 
with $\sim\sqrt{n}$ second-order term was previously obtained in 
\cite{fong2015proof} via the information spectrum method. The present 
bound \eqref{e_gaussianBC} is sharper; in particular, it has the correct 
asymptotic dependence not only on $n \to \infty$ but also on the error probability $\epsilon \to 1$. 

\section{Second-Order Image Size Characterization}\label{sec_change}

The aim of this section is to develop the smoothing-out methodology for 
coding problems in network information theory that are proved by the 
``image size characterization'' method. This powerful machinery was 
introduced in the original work of Ahlswede, G\'acs and K\"orner 
\cite{ahlswede_bounds_cond1976} for proving a strong converse for source 
coding with side information (cf.~Section~\ref{sec_source}), and was 
further developed systematically in the classic monograph of Csisz\'ar and 
K\"orner \cite[chapters 15--16]{csiszar2011information} to enable the 
analysis of a wide variety of source and channel networks.

Unfortunately, the general formulation of the image size method may appear 
rather technical at first sight, particularly if the reader is unfamiliar 
with the classical theory developed in \cite{csiszar2011information}. To 
make the theory more accessible, we will first provide some motivating 
discussion and background in Section \ref{sec_syn}. The main results of 
this section, which give second-order forms of the image-size 
characterization method in discrete and Gaussian cases, are given in 
Sections \ref{sec_change_d} and \ref{sec_change_g}, respectively. Finally, 
we will illustrate the application of our framework in two representative 
applications in Sections \ref{sec_change_hypo} and \ref{sec_source}.

It should be emphasized that in order to keep this paper to a reasonable 
length, we treat here only the more basic setting of the image size 
problem that appears in the original paper 
\cite{ahlswede_bounds_cond1976}. However, our methods extend in full 
generality to the theory developed in \cite{csiszar2011information}, which 
yields the strong converse property of all source-channel networks with 
known first-order rate region. In the terminology of 
\cite{csiszar2011information}, the problem considered here only contains a 
``forward channel'' while the general theory allows for the addition of a 
``reverse channel''; in fact, besides yielding second-order converses 
and extending to general alphabets, our methods have the further advantage 
of simplifying certain technical aspects in dealing with the reverse 
channel. A detailed development of our framework for the most general form 
of the image size characterization problem may be found in \cite[Chapter 
5]{liuthesis}.

\subsection{Synopsis}
\label{sec_syn}

\subsubsection{A motivating example}
\label{sec_change_ex}

Before we introduce the general setting for image size characterization, 
it is instructive to motivate its form through a simple application. To 
this end, let us begin by describing a variant of the binary hypothesis 
testing problem, due to Ahlswede and Csisz\'ar 
\cite{ahlswede1986hypothesis}, in which the image size problem arises 
naturally. After we have developed the general results, we will return to 
this example and provide a second-order converse using our methods 
in Section \ref{sec_change_hypo}.

Suppose 

we are given a joint probability measure $Q_{XY}$, and we obtain $n$ 
independent observations drawn from either $Q_{XY}$ or from 
$Q_X\otimes Q_Y$.
We would like 
to construct an optimal hypothesis test between these two alternatives:
\begin{align}
{\rm H}_0&\colon Q_{XY}^{\otimes n},
\nonumber\\
{\rm H}_1&\colon Q_X^{\otimes n}Q_Y^{\otimes n}.
\end{align}
When stated in this form, this is just a special case of the general 
binary hypothesis testing problem discussed in Section \ref{sec_pre}. In 
particular, for any test 
$\mathcal{B}\subseteq\mathcal{X}^n\times\mathcal{Y}^n$ with error 
probability $\pi_{1|0}:=Q_{XY}^{\otimes n}[\mathcal{B}^c]\le\epsilon$, we 
would like to understand how small an error probability 
$\pi_{0|1}:=Q_X^{\otimes n}Q_Y^{\otimes n}[\mathcal{B}]$ can be achieved 
(we will focus here on deterministic tests for simplicity).
In other words, the aim of the binary hypothesis testing problem is to 
understand the quantity
\begin{equation}
	\min_{Q_{X^nY^n}[\mathcal{B}]\ge 1-\epsilon} 
	Q_{X^n}Q_{Y^n}[\mathcal{B}].
\end{equation}
Both a second-order converse (lower bound) and the matching achievability 
argument (upper bound) follow 
immediately from the results in Section \ref{sec_pre}.

We now, however, add a new ingredient. Suppose only $Y^n$ is directly 
revealed to the statistician; the data $X^n$ is observed elsewhere, and 
must be communicated to the statistician through a (noiseless) 
communication channel with limited capacity. That is, an encoder must 
encode the data $X^n$ into a message $W\in[M]$ that is transmitted 
noiselessly to the statistician; the capacity of the channel limits the 
size of the codebook to $M\in\mathbb{N}$. This situation is illustrated in 
Figure \ref{fig_hypo}.
The communication constraint may be described by 
restricting the original binary hypothesis testing problem to a 
special class of tests of the form
\begin{equation}
	\mathcal{B} = \bigcup_{i=1}^M (\mathcal{A}_i\times\mathcal{B}_i)
	\subseteq \mathcal{X}^n\times\mathcal{Y}^n,
\label{eq_cc}
\end{equation}
where $\mathcal{A}_i=\{W=i\}$ is the set of strings $X^n$ for 
which the message $i$ is transmitted to the statistician, and 
$\mathcal{B}_i$ is the set of strings $Y^n$ for which the statistician 
declares $\mathrm{H}_0$ given that the message $i$ was received.
In this setting, we aim to understand the interplay between the channel 
capacity and the attainable error probabilities of the test.
That is, the aim of the hypothesis testing problem with 
communication constraint is to understand the quantity
\begin{equation}
	\min_{Q_{X^nY^n}[\mathcal{B}]\ge 
	1-\epsilon,~\mathcal{B}\mathrm{~as~in~}\eqref{eq_cc}} 
	Q_{X^n}Q_{Y^n}[\mathcal{B}].
\label{eq_hypocc}
\end{equation}
In accordance with the theme of this paper, we will be concerned 
here with the converse direction only; the achievability argument may be 
found in \cite{ahlswede1986hypothesis}.
\begin{figure}[t]
  \centering
\begin{tikzpicture}
[node distance=0.6cm,minimum height=10mm,minimum width=12mm,arw/.style={->,>=stealth'}]
  \node[draw,rectangle] (D) {Statistician};
  \node[draw,rectangle] (E1) [left=1.4cm of D] {Encoder};
  \node[rectangle] (X) [left =0.5cm of E1] {};
  \node[rectangle] (Yh) [right =0.5cm of D] {$\mathrm{H}_0/\mathrm{H}_1$};
  \node (Y) [below =0.5cm of D,minimum height=0] {};

  \draw[->]  (Y) to node[midway,below]{$Y^n$} (D);
  \draw[->]  (E1) to node[midway,above]{$W$} (D);
  \draw[->]  (D) to node[midway,left]{} (Yh);
  \draw[->]  (X) to node[midway,left]{$X^n$} (E1);

\end{tikzpicture}
\caption{Hypothesis testing with a communication constraint.}
\label{fig_hypo}
\end{figure}
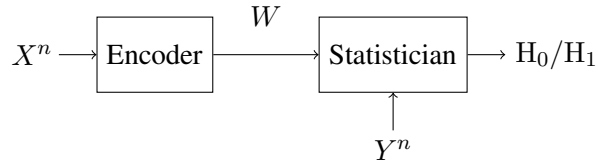

In Section \ref{sec_change_hypo}, we will give a second-order 
converse for the problem in \eqref{eq_hypocc}. For the purposes of the 
present motivating discussion, it will be convenient for simplicity to 
consider the stronger maximum error criterion variant\footnote{%
	In Section \ref{sec_change_hypo} we reduce the original 
	problem to such a variant by expurgation (cf.\ 
	Remark \ref{rem_expur}); however, for reasons explained in
	Section \ref{sec_typical}, this must be done carefully to attain 
	the correct rate.} 
of this problem:
how small can $Q_{X^n}Q_{Y^n}[\mathcal{B}|X^n]$ be made for 
communication-constrained tests \eqref{eq_cc}, given that
$Q_{X^nY^n}[\mathcal{B}|X^n]\ge 1-\epsilon$?
By applying \eqref{eq_cc}, this question can be reformulated as follows: if
we have
\begin{equation}
	\max_{x^n\in\mathcal{A}_i}
	Q_{Y^n|X^n=x^n}[\mathcal{B}_i^c]\le \epsilon,
\end{equation}
how small can $Q_{Y^n}[\mathcal{B}_i]$ be made? In this formulation, it is 
not entirely clear how the codebook size $M$ enters the picture. However, 
note that as there are only $M$ codewords, at least one of the codewords 
must have large probability $Q_{X^n}[\mathcal{A}_i]\ge 1/M$. Therefore, a 
lower bound for the maximal error variant of \eqref{eq_hypocc} naturally 
reduces to the problem of understanding the following quantity:
\begin{equation}
	\min_{Q_{X^n}[Q_{Y^n|X^n}[\mathcal{B}']\ge 1-\epsilon]\ge 1/M}
	Q_{Y^n}[\mathcal{B}'].
\label{eq_imagesize}
\end{equation}
The latter problem, which was originally considered in 
\cite{ahlswede_bounds_cond1976}, is the most basic form of the \emph{image 
size characterization problem} that is studied in this section.\footnote{
	The name arises from the fact that we think of $\mathcal{B}_i$
	as an ``$(1-\epsilon)$-image'' of $\mathcal{A}_i$ under the
	random transformation $Q_{Y^n|X^n}$ \cite{csiszar2011information}.
}

The motivation we have given here for the image size problem may appear 
rather specific to binary hypothesis testing with a communication 
constraint. However, just as ideas from binary hypothesis testing form the 
basis for a wide variety of coding problems, the image size problem turns 
out to arise in a broad range of problems in network information theory. 
After developing the general theory, we will discuss in some detail two 
applications: the present hypothesis testing problem with communication 
constraint (Section \ref{sec_change_hypo}) and source coding with 
compressed side information (Section \ref{sec_source}). Further 
applications are discussed in \cite[Chapter 16]{csiszar2011information}.
Another interesting application to the problem of common 
randomness generation is developed in detail in \cite[Section 4.4]{liuthesis}.

\subsubsection{The image size problem}
\label{sec_syn_img}

We now turn to the formulation of the basic objects and results that 
will be investigated in this section.

In the following, let $Q_{Y|X}$ be a given random transformation, and let 
$\nu$ and $\mu_n$ be positive measures on $\mathcal{Y}$ and 
$\mathcal{X}^n$, respectively. The general problem that will be 
investigated in this section is how to lower bound the $\nu^{\otimes 
n}$-measure of a set $\mathcal{A}\subseteq{Y}^n$ in terms of its
``$(1-\epsilon)$-preimage'' under $Q_{Y^n|X^n}:=Q_{Y|X}^{\otimes n}$.
Instead of the formulation in \eqref{eq_imagesize}, we find it 
convenient to introduce a Lagrange multiplier $c$ and investigate the
following unconstrained version of the problem: given any
$c>0$ and $\epsilon\in(0,1)$, find an upper bound on the quantity
\begin{equation}
	\sup_{\mathcal{A}\subseteq\mathcal{Y}^n}
	\{
	\ln\mu_n[Q_{Y^n|X^n}[\mathcal{A}]>1-\epsilon]
	-
	c\ln\nu^{\otimes n}[\mathcal{A}]	
	\}.
\label{eq_isc}
\end{equation}
To understand the behavior of this quantity, we begin by defining the 
information measure that controls it to first order.

\begin{defn}\label{defn_d}
Given positive measures $\mu$ on $\mathcal{X}$ and $\nu$ on $\mathcal{Y}$, 
a random transformation $Q_{Y|X}$, and constant $c>0$, we define
\begin{align}
	{\rm d}(\mu, Q_{Y|X}, \nu, c):=\sup_{P_X}
	\left\{cD(P_Y\|\nu)-D(P_X\|\mu)\right\}\label{e80}
\end{align}
where the probability measures $P_X,P_Y$ satisfy $P_X\to Q_{Y|X}\to 
P_Y$.
\end{defn}

\begin{rem}
The quantity \eqref{e80} was called 
``Brascamp-Lieb divergence'' in \cite{ISIT_lccv_smooth2016}.
Let us also note that the largest $c>0$ for which 
$d(Q_X,Q_{Y|X},Q_Y,c)=0$ is the reciprocal of the \emph{strong data 
processing constant} of $Q_{Y|X}$, cf.\ 
\cite{lccv2015}.
\end{rem}

To establish the natural connection between \eqref{eq_isc} and 
\eqref{e80}, let us begin by proving a weak converse. Evidently the method 
of proof has much in common with that of Lemma \ref{lem_weakbht} and is 
completely elementary (cf.\ \cite[(19)-(21)]{ahlswede_bounds_cond1976}).

\begin{lem}[\bf weak converse]\label{lem_weakimg}
If $\nu$ is a probability measure, then 
\begin{align}
	&\ln \mu_n[Q_{Y^n|X^n}[\mathcal{A}]>1-\epsilon]
	-
	c(1-\epsilon)\ln\nu^{\otimes n}[\mathcal{A}]
	\nonumber\\
	&\qquad\le {\rm d}(\mu_n, Q_{Y|X}^{\otimes n}, \nu^{\otimes n}, c)
	+c\ln 2
\label{e_change1}
\end{align}
for every set $\mathcal{A}\subseteq\mathcal{Y}^n$ and
$c>0$, $\epsilon\in(0,1)$, $Q_{Y|X}$, and positive measure $\mu_n$.
\end{lem}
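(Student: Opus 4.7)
The plan is to mimic the proof of the weak converse Lemma \ref{lem_weakbht} for binary hypothesis testing, but applied to the auxiliary measures $\mu_n$ restricted to the preimage set and $\nu^{\otimes n}$. The key trick will be to pass from the set inequality \eqref{e_change1} into the variational definition \eqref{e80} of ${\rm d}$ by constructing a specific probability measure $P_{X^n}$ that saturates (up to the slack $c\ln 2$) the supremum.

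First, I would set $\mathcal{C}:=\{x^n\in\mathcal{X}^n : Q_{Y^n|X^n=x^n}[\mathcal{A}]>1-\epsilon\}$. The inequality is trivial when $\mu_n[\mathcal{C}]=0$, so assume $\mu_n[\mathcal{C}]>0$. Then I would define
\begin{equation}
P_{X^n} := \frac{\mu_n|_{\mathcal{C}}}{\mu_n[\mathcal{C}]},
\qquad
P_{Y^n}:=Q_{Y^n|X^n}\circ P_{X^n},
\end{equation}
which is a valid candidate in the supremum \eqref{e80} for the product random transformation $Q_{Y|X}^{\otimes n}$. The Radon--Nikodym derivative $dP_{X^n}/d\mu_n=1_{\mathcal{C}}/\mu_n[\mathcal{C}]$ is constant on $\mathcal{C}$, so $D(P_{X^n}\|\mu_n)=-\ln\mu_n[\mathcal{C}]$. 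Moreover, by construction $P_{Y^n}[\mathcal{A}]\ge 1-\epsilon$, because $Q_{Y^n|X^n=x^n}[\mathcal{A}]>1-\epsilon$ for every $x^n\in\mathcal{C}$.

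Next, since $\nu^{\otimes n}$ is a probability measure, I would apply the data processing inequality for relative entropy to the two-point partition $\{\mathcal{A},\mathcal{A}^c\}$, just as in \eqref{uz}--\eqref{uzweak}:
\begin{align}
D(P_{Y^n}\|\nu^{\otimes n})
&\ge P_{Y^n}[\mathcal{A}]\ln\frac{P_{Y^n}[\mathcal{A}]}{\nu^{\otimes n}[\mathcal{A}]}
+P_{Y^n}[\mathcal{A}^c]\ln\frac{P_{Y^n}[\mathcal{A}^c]}{\nu^{\otimes n}[\mathcal{A}^c]} \\
&\ge (1-\epsilon)\ln\frac{1}{\nu^{\otimes n}[\mathcal{A}]}-\ln 2,
\end{align}
using $\nu^{\otimes n}[\mathcal{A}]\le 1$ (so that $\ln(1/\nu^{\otimes n}[\mathcal{A}])\ge 0$ and the monotonicity in $P_{Y^n}[\mathcal{A}]\ge 1-\epsilon$ is in the right direction) together with $h(P_{Y^n}[\mathcal{A}^c])\le\ln 2$.

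Finally, I would plug the two estimates into \eqref{e80}:
\begin{equation}
{\rm d}(\mu_n,Q_{Y|X}^{\otimes n},\nu^{\otimes n},c)
\ge cD(P_{Y^n}\|\nu^{\otimes n})-D(P_{X^n}\|\mu_n)
\ge -c(1-\epsilon)\ln\nu^{\otimes n}[\mathcal{A}]-c\ln 2+\ln\mu_n[\mathcal{C}],
\end{equation}
which rearranges into \eqref{e_change1}. No real obstacle is anticipated; the only mild subtlety is that $\mu_n$ is only a positive (not necessarily probability) measure, but the identity $D(P_{X^n}\|\mu_n)=-\ln\mu_n[\mathcal{C}]$ handles this cleanly because the density is constant on the support of $P_{X^n}$.
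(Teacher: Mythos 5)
Your proposal is correct and follows essentially the same route as the paper's own proof: define the preimage set, take $P_{X^n}$ to be the normalized restriction of $\mu_n$ to it so that $D(P_{X^n}\|\mu_n)=-\ln\mu_n[\mathcal{C}]$, apply the two-point data processing inequality to bound $D(P_{Y^n}\|\nu^{\otimes n})$, and plug into the supremum defining ${\rm d}(\cdot)$. The handling of the non-probability measure $\mu_n$ via the constant density on $\mathcal{C}$ matches the paper exactly.
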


\begin{proof}
Define the event $\mathcal{B}$ and probability measure $Q_{X^n}$ by
\begin{equation}
	\mathcal{B}=\{x^n\colon Q_{Y^n|X^n=x^n}[\mathcal{A}]>1-\epsilon\},
	\qquad
	Q_{X^n}[\mathcal{C}]=
	\frac{\mu_n[\mathcal{B}\cap\mathcal{C}]}{\mu_n[\mathcal{B}]},
\end{equation}
and let $Q_{Y^n}$ be defined by $Q_{X^n}\to Q_{Y^n|X^n}\to Q_{Y^n}$.
Then
\begin{equation}
	D(Q_{X^n}\|\mu_n)=\ln\frac{1}{\mu_n[\mathcal{B}]},
\end{equation}
while by the data processing inequality
\begin{align}
	D(Q_{Y^n}\|\nu^{\otimes n})
	&\ge Q_{Y^n}[\mathcal{A}]\ln\frac{Q_{Y^n}[\mathcal{A}]}
	{\nu^{\otimes n}[\mathcal{A}]}
	+
	Q_{Y^n}[\mathcal{A}^c]\ln\frac{Q_{Y^n}[\mathcal{A}^c]}
	{\nu^{\otimes n}[\mathcal{A}^c]}
\\
	&\ge
	(1-\epsilon)
	\ln\frac{1}{\nu^{\otimes n}[\mathcal{A}]}
	-h(Q_{Y^n}[\mathcal{A}^c]).
\end{align}
Thus,
\begin{align}
	{\rm d}(\mu_n,Q_{Y|X}^{\otimes n},\nu^{\otimes n})
	&\ge
	cD(Q_{Y^n}\|\nu^{\otimes n})
	-D(Q_{X^n}\|\mu_n)
\\
	&\ge
	c(1-\epsilon)\ln\frac{1}{\nu^{\otimes n}[\mathcal{A}]}-\ln\frac{1}{\mu_n[\mathcal{B}]}
	-c\ln 2,
\end{align}
which is \eqref{e_change1}.
\end{proof}

Lemma \ref{lem_weakimg} is a weak converse due to the extraneous factor 
$1-\epsilon$ in front of $\ln\nu^{\otimes n}[\mathcal{A}]$ in 
\eqref{e_change1}, which prevents us from attaining the optimal rate in 
the regime of nonvanishing error probability. Instead, we aim to 
prove a strong converse 
\begin{align}
	&\ln \mu_n[Q_{Y^n|X^n}[\mathcal{A}]>1-\epsilon]
	-
	c\ln\nu^{\otimes n}[\mathcal{A}]
	\nonumber\\
	&\qquad\le {\rm d}(\mu_n, Q_{Y|X}^{\otimes n}, \nu^{\otimes n}, c)
	+ o_\epsilon(n).
\label{eq_strongimg}
\end{align}
Ahlswede, G\'acs and K\"orner \cite{ahlswede_bounds_cond1976} used
the blowing-up method to establish \eqref{eq_strongimg} for 
finite alphabets. As usual, such an argument can only attain a 
suboptimal second-order term. In the follows subsections, we will use the 
smoothing-out method to obtain new sharp forms of
\eqref{eq_strongimg} that attain both the $O(\sqrt{n})$ 
second-order term, and that may be extended beyond the finite-alphabet 
setting. In particular, we develop the relevant theory for discrete 
channels in Section~\ref{sec_change_d}, and for Gaussian channels in 
Section~\ref{sec_change_g}.

\subsubsection{The first-order term}
\label{sec_typical}

Before we proceed to the main results of this section, we must discuss an 
independent issue that plays an important role in applications of image 
size characterizations. Consider the motivating example 
of hypothesis testing with a communication constraint discussed above.
A lower bound in \eqref{eq_imagesize} is readily obtained by applying
Lemma \ref{lem_weakimg} or \eqref{eq_strongimg} with $\mu_n=Q_X^{\otimes 
n}$. When all measures in ${\rm d}(\cdot)$ are product measures, it is 
readily evaluated by means of the \emph{tensorization property}
\begin{align}
	{\rm d}(Q_X^{\otimes n}, Q_{Y|X}^{\otimes n}, Q_Y^{\otimes n}, c)
	=n\,{\rm d}(Q_X, Q_{Y|X}, Q_Y, c)
\label{e_tens1}
\end{align}
(a simple proof may be found, for example, in \cite{lccv2015}). Thus,
to first order, the quantity defined in \eqref{eq_imagesize} grows 
linearly in $n$ with rate ${\rm d}(Q_X, Q_{Y|X}, Q_Y, c)$ (with the 
optimal choice of Lagrange multiplier $c$). This is in fact the correct 
growth rate of \eqref{eq_imagesize}, as is demonstrated in
\cite[Theorem 1]{ahlswede_bounds_cond1976}. Unfortunately, 
however, this quantity does \emph{not} give the correct first-order rate for 
the hypothesis testing problem with communication constraint: the best 
achievability result for the latter has a strictly smaller rate 
than is suggested by the above arguments.

It turns out that the origin of this inefficiency does not lie in 
\eqref{eq_strongimg} itself, but rather in the initial argument that
reduced the information-theoretic problem to the study of the quantity 
\eqref{eq_imagesize}. When the argument is developed 
in detail, we will see that it is not necessary to consider the probability 
of codewords under the full measure $Q_{X^n}$, but only under the 
restriction
of this measure to any set $\mathcal{C}$ of sufficiently high probability.
While the information-theoretic problem is not affected by throwing out
a small probability event, doing so has a significant effect on the 
resulting bounds. In particular,
as was noted in \cite{ahlswede_bounds_cond1976} in the finite alphabet 
setting, the best possible improvement to the first-order rate is obtained 
by choosing $\mu_n=Q_{X^n}|_{\mathcal{C}_n}$ to be the restriction to a set 
$\mathcal{C}_n$ of \emph{typical sequences}. As we will show in 
Appendices~\ref{app_sbl} and \ref{app_dgaussian} for the discrete and 
Gaussian cases, respectively,\footnote{The claim in \eqref{e_49} is the main theme of the conference paper 
\cite{ISIT_lccv_smooth2016} and the journal paper \cite{lccv2019smooth}. The proof of \eqref{e_49} in the present paper is based on the concentration of the empirical distribution of $X^n$, which is similar in spirit to \cite{ISIT_lccv_smooth2016}. The proof in \cite{lccv2019smooth} adopts a different convex analysis argument which establishes \eqref{e_49} for general distributions and also determines the exact prefactor in the $O(\sqrt{n})$ term. 
However, for conceptual simplicity we do not adopt the approach of \cite{lccv2019smooth} in the present paper.} such a choice of $\mu_n$ in fact gives rise to 
the estimate
\begin{align}
	{\rm d}(\mu_n, Q_{Y|X}^{\otimes n}, \nu^{\otimes n}, c)
	\le
	n\,{\rm d}^{\star}(Q_X, Q_{Y|X}, \nu, c)+O(\sqrt{n})
\label{e_49}
\end{align}
where the modified rate ${\rm d}^{\star}(\cdot)$ is defined as follows.

\begin{defn}\label{defn_dstar}
Given a probability measure $Q_X$ on $\mathcal{X}$, a positive measure
$\nu$ on $\mathcal{Y}$, a random transformation $Q_{Y|X}$, and 
constant $c>0$, we define
\begin{align}
\nonumber
	&{\rm d}^{\star}(Q_X, Q_{Y|X}, \nu, c) := \\
	&\qquad \sup_{P_{UX}\colon P_X=Q_X}
	\left\{cD(P_{Y|U}\|\nu|P_U)-D(P_{X|U}\|Q_X|P_U)\right\}
\label{eq_dstar}
\end{align}
where the joint distribution of $U,X,Y$ is given by 
$P_{UXY}:=P_{UX}Q_{Y|X}$.
\end{defn}

We observe that it follows immediately from Definitions \ref{defn_d} and 
\ref{defn_dstar} that ${\rm d}^{\star}(Q_X, Q_{Y|X}, \nu, c)\le {\rm d}(Q_X, 
Q_{Y|X}, \nu, c)$. Therefore, restricting to a typical set always results in 
an improved bound. It will turn out that the rate \eqref{eq_dstar} captures 
precisely the correct first-order behavior of the information-theoretic 
applications in which image size characterizations will be applied. 
Moreover, as we are able to establish in \eqref{e_49} a second-order term of 
order $O(\sqrt{n})$, the combination of such an estimate with the 
smoothing-out method enables us to attain second-order converses for 
applications of the image size problem. As the methods needed to establish 
\eqref{e_49} are unrelated to the main topic of this paper, we focus 
presently on the smoothing-out method and relegate the proof of
\eqref{e_49} to Appendix \ref{app_proofs}.

\subsection{Discrete case}\label{sec_change_d}

The aim of this section is to established a non-asymptotic form of 
the strong converse for image size characterization \eqref{eq_strongimg} 
in the finite alphabet setting. We will, in fact, initially work in the 
more general bounded density setting as in Section \ref{sec_fano_d}. We 
specialize subsequently to the finite alphabet setting only in order to 
achieve the appropriate characterization \eqref{e_49} of the first-order 
term.

For the time being, let $\mathcal{X},\mathcal{Y}$ be general alphabets, 
let $Q_{Y|X}$ be a random transformation from $\mathcal{X}$ to 
$\mathcal{Y}$, and let $\nu$ be a probability measure on $\mathcal{Y}$.
We define, as usual, $Q_{Y^n|X^n}:=Q_{Y|X}^{\otimes n}$, and introduce
the bounded density assumption
\begin{equation}
	\alpha :=
	\sup_{x\in\mathcal{X}}\left\|\frac{dQ_{Y|X=x}}{d\nu}\right\|_{\infty}
	\in [1,\infty).
\label{e_change_d_aspt}
\end{equation}
The basic result of this section is the following quantitative form of 
\eqref{eq_strongimg}.

\begin{thm}
\label{thm_change_bdd}
For any positive measure $\mu_n$ on $\mathcal{X}^n$,
$f\in \mathcal{H}_{[0,1]}(\mathcal{Y}^n)$, $\eta\in(0,1)$, and $c>0$,
we have
\begin{align}
	&\ln\mu_n[Q_{Y^n|X^n}(f)\ge \eta]
	-
	c\ln\nu^{\otimes n}(f)
	\nonumber\\
	&\qquad\le
	{\rm d}(\mu_n, Q_{Y^n|X^n}, \nu^{\otimes n}, c)+
	2c\sqrt{\ln\frac{1}{\eta}}\sqrt{n(\alpha-1)}
	+c\ln\frac{1}{\eta}.
\label{e14}
\end{align}
\end{thm}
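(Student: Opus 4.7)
The proof will follow the blueprint established in Theorem \ref{thm:hyperBHT} and Theorem \ref{thm_fano_maximal}: it combines the variational representation of the information measure, reverse hypercontractivity applied to input-conditional simple semigroups, and the dominating operator $\Lambda_t$ from \eqref{e_semigroupM} that allows us to sidestep the $x^n$-dependence of $T_{x^n,t}$.

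The plan is as follows. Let $\mathcal{B} := \{x^n : Q_{Y^n|X^n=x^n}(f)\ge \eta\}$; if $\mu_n[\mathcal{B}]=0$ there is nothing to prove, so assume otherwise. Define the probability measure $P_{X^n}:=\mu_n|_{\mathcal{B}}/\mu_n[\mathcal{B}]$, which satisfies $D(P_{X^n}\|\mu_n)=\ln(1/\mu_n[\mathcal{B}])$, and let $P_{Y^n}$ denote the induced output distribution under $Q_{Y^n|X^n}$. Plugging $P_{X^n}$ into the variational definition \eqref{e80} of ${\rm d}(\cdot)$, we obtain
\begin{equation*}
c\,D(P_{Y^n}\|\nu^{\otimes n})\le {\rm d}(\mu_n,Q_{Y^n|X^n},\nu^{\otimes n},c)+\ln\frac{1}{\mu_n[\mathcal{B}]}.
\end{equation*}
It remains to lower bound $D(P_{Y^n}\|\nu^{\otimes n})$ via the Donsker--Varadhan formula \eqref{e_var} applied to the test function $g=\Lambda_t f$, where $\Lambda_t$ is defined as in \eqref{e_semigroupM} and $t>0$ will be optimized later.

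The two ingredients needed to estimate the Donsker--Varadhan bound are the analogues of \eqref{eq_smdown} and \eqref{eq_smup}. For the former, by Lemma \ref{lem_semigroupmaximal} we have $\Lambda_t f\ge T_{x^n,t}f$ pointwise for every $x^n$, so
\begin{equation*}
P_{Y^n}(\ln\Lambda_t f)\ge \int Q_{Y^n|X^n=x^n}(\ln T_{x^n,t}f)\,dP_{X^n}(x^n).
\end{equation*}
Reverse hypercontractivity (Theorem \ref{thm_rhc}) applied coordinatewise to the simple semigroup $T_{x^n,t}$ with stationary measure $Q_{Y^n|X^n=x^n}$, combined with $\|f\|_{L^{1-e^{-t}}(Q_{Y^n|X^n=x^n})}\ge Q_{Y^n|X^n=x^n}(f)^{1/(1-e^{-t})}$ (since $f\in[0,1]$), yields
$Q_{Y^n|X^n=x^n}(\ln T_{x^n,t}f)\ge \frac{1}{1-e^{-t}}\ln Q_{Y^n|X^n=x^n}(f)$. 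On $\mathcal{B}$, the right side is at least $-\frac{1}{1-e^{-t}}\ln(1/\eta)\ge -(\tfrac{1}{t}+1)\ln(1/\eta)$ by $e^t\ge 1+t$. For the latter, a direct single-letter computation on the tensor form of $\Lambda_t$ gives $\nu(e^{-t}g+\alpha(1-e^{-t})\nu(g))=[e^{-t}+\alpha(1-e^{-t})]\,\nu(g)$, hence
\begin{equation*}
\ln\nu^{\otimes n}(\Lambda_t f)\le n(\alpha-1)t+\ln\nu^{\otimes n}(f)
\end{equation*}
via the estimate $e^{-t}+\alpha(1-e^{-t})\le e^{(\alpha-1)t}$ as in \eqref{e8}.

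Putting these pieces together, the Donsker--Varadhan bound gives
$c\,D(P_{Y^n}\|\nu^{\otimes n})\ge -c(\tfrac{1}{t}+1)\ln(1/\eta)-cn(\alpha-1)t-c\ln\nu^{\otimes n}(f)$, so combined with the earlier inequality we arrive at
\begin{equation*}
\ln\mu_n[\mathcal{B}]-c\ln\nu^{\otimes n}(f)\le {\rm d}(\mu_n,Q_{Y^n|X^n},\nu^{\otimes n},c)+c\Bigl(\tfrac{1}{t}+1\Bigr)\ln\tfrac{1}{\eta}+cn(\alpha-1)t.
\end{equation*}
Optimizing over $t>0$ the two $t$-dependent terms $c\ln(1/\eta)/t+cn(\alpha-1)t$ produces $2c\sqrt{n(\alpha-1)\ln(1/\eta)}$ at $t=\sqrt{\ln(1/\eta)/(n(\alpha-1))}$, which yields \eqref{e14}. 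No step presents a serious obstacle, since the key technical tools (reverse hypercontractivity, the dominating operator $\Lambda_t$, and Lemma \ref{lem_semigroupmaximal}) have already been developed; the only real subtlety is the careful choice of $g=\Lambda_t f$ rather than $g=T_{x^n,t}f$ so as to avoid $x^n$-dependence in the Donsker--Varadhan formula, exactly as in the proof of Theorem \ref{thm_fano_maximal}.
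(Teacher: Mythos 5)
Your proof is correct and uses the same essential machinery as the paper's: the dominating operator $\Lambda_t$, reverse hypercontractivity for the conditional semigroups $T_{x^n,t}$, the $e^{(\alpha-1)nt}$ mass bound under $\nu^{\otimes n}$, and the optimization over $t$. The only (cosmetic) difference is that you work on the primal side of the duality in Proposition \ref{prop_equiv} — choosing $P_{X^n}$ to be the normalized restriction of $\mu_n$ to the superlevel set and invoking Donsker--Varadhan with $g=\Lambda_t f$ — whereas the paper applies the dual formula \eqref{e53} directly and extracts the superlevel-set probability via Markov's inequality at the end; the two routes are equivalent and yield the identical bound.
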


The proof of Theorem~\ref{thm_change_bdd} relies on similar ideas to the 
proof of Theorem~\ref{thm_fano_maximal}. As a first step, we must develop 
a variational characterization for the quantity ${\rm d}(\cdot)$, 
generalizing the variational formula \eqref{e_var} for relative entropy.

\begin{prop}\label{prop_equiv}
In the setting of Definition~\ref{defn_d},
\begin{align}
	{\rm d}(\mu, Q_{Y|X}, \nu, c)
	=\sup_{f\in \mathcal{H}_+(\mathcal{Y})}
	\left\{\ln \mu(e^{cQ_{Y|X}(\ln f)})-c\ln \nu(f)\right\}.
\label{e53}
\end{align}
\end{prop}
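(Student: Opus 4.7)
My plan is to derive the identity \eqref{e53} by two applications of the Donsker--Varadhan variational formula \eqref{e_var}, linked by an interchange of suprema. The observation that drives the proof is that the defining formula \eqref{e80} contains two relative-entropy terms with opposite signs, and each of them admits a dual representation involving a supremum over an auxiliary function; handling them in the right order collapses everything to a single sup over $f$.

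First I would apply \eqref{e_var} (with $P\leftarrow P_Y$ and $Q\leftarrow \nu$) to write
\begin{align*}
cD(P_Y\|\nu)=\sup_{f\in\mathcal{H}_+(\mathcal{Y})}\bigl\{c P_Y(\ln f)-c\ln\nu(f)\bigr\}.
\end{align*}
Substituting into \eqref{e80} and swapping the suprema over $f$ and $P_X$ gives
\begin{align*}
{\rm d}(\mu,Q_{Y|X},\nu,c)=\sup_{f\in\mathcal{H}_+(\mathcal{Y})}\Bigl\{-c\ln\nu(f)+\sup_{P_X}\bigl\{c P_Y(\ln f)-D(P_X\|\mu)\bigr\}\Bigr\}.
\end{align*}
Next, since $P_X\to Q_{Y|X}\to P_Y$, the tower property gives $P_Y(\ln f)=P_X(Q_{Y|X}(\ln f))$, so the inner problem becomes $\sup_{P_X}\{P_X(h)-D(P_X\|\mu)\}$ with $h:=cQ_{Y|X}(\ln f)$. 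This is precisely the dual form of \eqref{e_var}: applying \eqref{e_var} with $P\leftarrow P_X$, $Q\leftarrow \mu$, and $g\leftarrow e^h$ yields $D(P_X\|\mu)\ge P_X(h)-\ln\mu(e^h)$, with equality attained by taking $dP_X/d\mu\propto e^h$. Hence the inner sup equals $\ln\mu(e^{cQ_{Y|X}(\ln f)})$, and substituting back produces \eqref{e53}.

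The conceptual structure is entirely classical Legendre duality; the only point requiring care is technical bookkeeping. The main potential obstacle is ensuring that one may legitimately restrict both suprema to the domain where every expression is finite, so that the chain of equalities holds verbatim rather than only as $+\infty=+\infty$. By the convention announced in the Notation section (the suprema in variational formulas implicitly range only over $f$ and $P_X$ for which each term is finite), one may restrict attention to strictly positive $f$ with $\nu(f),\mu(e^{cQ_{Y|X}(\ln f)})<\infty$ and to $P_X\ll\mu$ with $D(P_X\|\mu),D(P_Y\|\nu)<\infty$ and $P_X(Q_{Y|X}(\ln f))$ well-defined. Under these restrictions each of the two applications of \eqref{e_var} is valid and the interchange of suprema is automatic, so the argument is complete.
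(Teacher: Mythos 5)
Your proof is correct and follows essentially the same route as the paper's: both arguments apply the Donsker--Varadhan formula \eqref{e_var} twice, once to $D(P_Y\|\nu)$ (with optimizer $f=\tfrac{dP_Y}{d\nu}$) and once to $D(P_X\|\mu)$ (with the tilted optimizer $\tfrac{dP_X}{d\mu}\propto e^{cQ_{Y|X}(\ln f)}$), linked by the identity $P_Y(\ln f)=P_X(Q_{Y|X}(\ln f))$. The paper merely presents the two directions as separate inequalities rather than as a chain of equalities with an interchange of suprema; the content is the same.
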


\begin{proof}
There are several proofs in the literature (cf.\ \cite{lccv2015}); 
here we give a short proof using \eqref{e_var}. First,
for any $f$, define a probability measure $P_X$ by
\begin{equation}
	\frac{dP_X}{d\mu}=
	\frac{e^{cQ_{Y|X}(\ln f)}}{\mu(e^{cQ_{Y|X}(\ln f)})},
\end{equation}
and define $P_Y$ by $P_X\to Q_{Y|X}\to P_Y$. Then by \eqref{e_var},
we have
\begin{align}
	&\ln\mu(e^{cQ_{Y|X}(\ln f)})-c\ln \nu(f)
	\nonumber\\
	&\le \ln\mu(e^{cQ_{Y|X}(\ln f)})+cD(P_Y\|\nu)-cP_Y(\ln f)
	\\
	&=cD(P_Y\|\nu)-D(P_X\|\mu).
\end{align}
This proves the $\ge$ part of \eqref{e53}.

Conversely, for any $P_X$, we have using \eqref{e_var}
\begin{align}
	&cD(P_Y\|\nu)-D(P_X\|\mu)
	\nonumber\\
	&\le cD(P_Y\|\nu)-P_X(cQ_{Y|X}(\ln f))+\ln \mu(e^{cQ_{Y|X}(\ln f)})
	\\
	&=\ln\mu(e^{cQ_{Y|X}(\ln f)})-c\ln \nu(f)
\end{align}
where we chose $f=\frac{dP_Y}{d\nu}$.
This proves the $\le$ part of \eqref{e53}.
\end{proof}

We can now complete the proof of Theorem \ref{thm_change_bdd}. In the 
following, we will use the same semigroup device as in Section 
\ref{sec_fano_d}: we define the simple semigroup
\begin{align}
        T_{x,t}f:=e^{-t}f+(1-e^{-t})Q_{Y|X=x}(f)
\end{align}
and its product $T_{x^n,t} := T_{x_1,t}\otimes\cdots\otimes T_{x_n,t}$,
and we define the corresponding dominating operator $\Lambda_t$ 
according to \eqref{e_semigroupM}. We recall, in particular, the key 
properties \eqref{e8} and \eqref{e11} that will also be used in the 
present proof.

\begin{proof}[Proof of Theorem \ref{thm_change_bdd}]
We begin by noting that, by the variational principle given in
Proposition~\ref{prop_equiv}, we can estimate
\begin{align}
	\int\|g\|_{L^0(Q_{Y^n|X^n=x^n})}
	d\mu_n(x^n)
	&=\int e^{Q_{Y^n|X^n}(\ln g)} d\mu_n
	\\
	&\le
	e^{{\rm d}(\mu_n,Q_{Y^n|X^n},\nu^{\otimes n},c)}
	\|g\|_{L^{1/c}(\nu^{\otimes n})}
\label{e_daul}
\end{align}
for any function $g\in\mathcal{H}_+(\mathcal{Y}^n)$ for which the 
integrals are finite.

Now take $g=(\Lambda_tf)^c$ and observe that by \eqref{e8},
\begin{align}
	\|g\|_{L^{1/c}(\nu^{\otimes n})}
	&\le 
	e^{c(\alpha-1)nt}[\nu^{\otimes n}(f)]^c.
\label{e26a}
\end{align}
On the other hand,
\begin{align}
	&\int\|g\|_{L^0(Q_{Y^n|X^n=x^n})}
	d\mu_n(x^n)
	\nonumber\\
	&=
	\int\|\Lambda_tf\|_{L^0(Q_{Y^n|X^n=x^n})}^c
	d\mu_n(x^n)
\label{e420}
\\
	&\ge
	\int\|f\|_{L^{1-e^{-t}}(Q_{Y^n|X^n=x^n})}^c
	d\mu_n(x^n)\label{e24}
\\
	&\ge
	\int Q_{Y^n|X^n}(f)^{\frac{c}{1-e^{-t}}}
	d\mu_n\label{e_25}
\\
	&\ge 
	\eta^{\frac{c}{1-e^{-t}}}
	\mu_n[Q_{Y^n|X^n}(f)\ge \eta],
\label{e26}
\end{align}
where \eqref{e24} used \eqref{e11} and reverse hypercontractivity 
(Theorem~\ref{thm_rhc}); \eqref{e_25} used that $f\in[0,1]$; and
\eqref{e26} follows from Markov's inequality. Putting together the 
estimates \eqref{e_daul}, \eqref{e26a}, and \eqref{e26}, we have shown 
that for every $t>0$
\begin{align}
\nonumber
	& \ln\mu_n[Q_{Y^n|X^n}(f)\ge \eta]
	-
	c\ln \nu^{\otimes n}(f) \\
	&\qquad
	\le
	{\rm d}(\mu_n,Q_{Y^n|X^n},\nu^{\otimes n},c)
	+
	\frac{c}{1-e^{-t}}\ln\frac{1}{\eta}
	+
	c(\alpha-1)nt.
\label{e26opt}
\end{align}
The desired inequality \eqref{e14} follows using $\frac{1}{1-e^{-t}}\le
\frac{1}{t}+1$ and optimizing over $t$.
\end{proof}

As was explained in Section \ref{sec_syn_img}, it is essential for 
applications of Theorem~\ref{thm_change_bdd} to make a judicious choice of 
measure $\mu_n$ in order to attain the correct first-order rate of the 
information-theoretic problems of interest. To this end, we now specialize 
to the case of finite alphabets and state a ready-to-use result along 
these lines. The precise construction of $\mu_n$ may be found in Appendix 
\ref{app_sbl}.

\begin{cor}
\label{cor_change_discrete}
Let $|\mathcal{X}|<\infty$, $Q_X$ be a probability measure on 
$\mathcal{X}$, $\nu$ be a probability measure on $\mathcal{Y}$, and
$Q_{Y|X}$ be a random transformation. Define
\begin{align}
	\beta_X :=\frac{1}{\min_x Q_X(x)}\in [1,\infty),\label{e52}
\end{align}
and define $\alpha$ as in \eqref{e_change_d_aspt}. Then for any
$\delta\in (0,1)$\ and $n>3\beta_X\ln\frac{|\mathcal{X}|}{\delta}$,
we may choose a set $\mathcal{C}_n\subseteq\mathcal{X}^n$ with
$Q_X^{\otimes n}[\mathcal{C}_n]\ge 1-\delta$ such that
\begin{align}
	&\ln \mu_n[Q_{Y^n|X^n}(f)\ge \eta]
	-
	c\ln\nu^{\otimes n}(f)
\nonumber
\\
	&\qquad\le
	n\,{\rm d}^{\star}(Q_X,Q_{Y|X},\nu,c)
	+
	A\sqrt{n}
	+c\ln\frac{1}{\eta}
\label{e15}
\end{align}
for all $f\in \mathcal{H}_{[0,1]}(\mathcal{Y}^n)$, $c>0$, 
$\eta\in(0,1)$, where we defined
$\mu_n:=Q_X^{\otimes n}|_{\mathcal{C}_n}$ and
\begin{align}
\label{eq_cor_chg_disc_A}
	A&:=\ln (\alpha^c\beta_X^{c+1})
	\sqrt{3\beta_X\ln\frac{|\mathcal{X}|}{\delta}}
	+
	2c\sqrt{(\alpha-1)\ln\frac{1}{\eta}}.
\end{align}
\end{cor}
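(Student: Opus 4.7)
The argument is essentially a two-step reduction. The first step is to apply Theorem~\ref{thm_change_bdd} directly with $\mu_n := Q_X^{\otimes n}|_{\mathcal{C}_n}$; this immediately produces the terms $2c\sqrt{n(\alpha-1)\ln(1/\eta)}$ and $c\ln(1/\eta)$ appearing on the right-hand side of \eqref{e15}, and leaves only the task of bounding
\begin{equation}
\mathrm{d}(\mu_n, Q_{Y^n|X^n}, \nu^{\otimes n}, c) \le n\,\mathrm{d}^{\star}(Q_X, Q_{Y|X}, \nu, c) + \ln(\alpha^c\beta_X^{c+1})\sqrt{3\beta_X n\ln\tfrac{|\mathcal{X}|}{\delta}}.
\label{eq:hard}
\end{equation}
This is precisely the special case of \eqref{e_49} that the text defers to Appendix~\ref{app_sbl}; combining it with Theorem~\ref{thm_change_bdd} and collecting constants yields \eqref{eq_cor_chg_disc_A}, which is the whole statement of the corollary.

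For the typical set itself I would take
\begin{equation}
\mathcal{C}_n := \bigl\{x^n\in\mathcal{X}^n : \hat{P}_{x^n}(a)\ge(1-\epsilon_n)Q_X(a)\text{ for all }a\in\mathcal{X}\bigr\},\qquad \epsilon_n := \sqrt{\tfrac{3\beta_X\ln(|\mathcal{X}|/\delta)}{n}},
\end{equation}
where $\hat{P}_{x^n}$ denotes the empirical distribution. A multiplicative Chernoff bound applied coordinatewise to $\hat{P}_{X^n}(a)$ under $X^n\sim Q_X^{\otimes n}$, followed by a union bound over $a\in\mathcal{X}$, gives $Q_X^{\otimes n}[\mathcal{C}_n]\ge 1-\delta$; the assumption $n>3\beta_X\ln(|\mathcal{X}|/\delta)$ is what guarantees $\epsilon_n<1$ so that the multiplicative deviation is meaningful.

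The genuine substance of the proof lies in \eqref{eq:hard}, which is outside the smoothing-out machinery and is where the $\beta_X$-constant enters. The natural route is to partition $\mathcal{C}_n$ by empirical type and treat the type as the auxiliary variable $U$ in Definition~\ref{defn_dstar}, so that the mixture representation of $\mu_n$ produces a natural candidate for the $P_{UX}$ that attains $n\,\mathrm{d}^{\star}$. The variational characterization of Proposition~\ref{prop_equiv} then lets one single-letterize the resulting expression: the sensitivity of $cD(P_{Y|U}\|\nu|P_U)$ to perturbations of $P_{X|U}$ is controlled by $\alpha$ (producing the $\alpha^c$ factor), while the cost of tilting a typical empirical distribution away from $Q_X$ is controlled by $\beta_X$ (producing the $\beta_X^{c+1}$ factor and the $\sqrt{3\beta_X\ln(|\mathcal{X}|/\delta)}$ dependence inherited from the width $\epsilon_n$ of $\mathcal{C}_n$). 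This is the most delicate step because it has to trade off the probability loss $\delta$ of the typical set against the slack in \eqref{eq:hard}; however, it is a concentration-of-types calculation rather than a hypercontractive one, and can therefore be cleanly isolated in the appendix.
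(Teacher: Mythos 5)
Your top-level reduction is exactly the paper's: apply Theorem \ref{thm_change_bdd} with $\mu_n = Q_X^{\otimes n}|_{\mathcal{C}_n}$, which produces the $2c\sqrt{n(\alpha-1)\ln\frac{1}{\eta}}$ and $c\ln\frac{1}{\eta}$ terms, and then invoke the bound \eqref{e_49} on ${\rm d}(\mu_n,Q_{Y^n|X^n},\nu^{\otimes n},c)$, which is Theorem \ref{thm_sbl}. Had you stopped there, the proposal would be a faithful match to the paper's (one-line) proof of the corollary.

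The gap is in your concrete choice of typical set. You take $\mathcal{C}_n = \{x^n : \hat{P}_{x^n}(a) \ge (1-\epsilon_n)Q_X(a)\ \forall a\}$, a \emph{lower} bound on the empirical frequencies, whereas the argument requires the \emph{upper} bound $\hat{P}_{x^n} \le (1+\epsilon_n)Q_X$ as in \eqref{e_104}. The direction matters: after single-letterizing, one must compare $\phi(P_{X_I})$ — where $P_{X_I}$ is a mixture of empirical measures of sequences in $\mathcal{C}_n$ and $\phi(\cdot):={\rm d}^\star(\cdot,Q_{Y|X},\nu,c)$ — with $\phi(Q_X)$. Since $\phi$ is concave in its first argument (enlarge $U$ by a mixture index), an upper bound on $\phi(P_{X_I})$ is obtained by writing $Q_X=\frac{1}{1+\epsilon}P_{X_I}+\frac{\epsilon}{1+\epsilon}S$ with $S$ a genuine probability measure; this is precisely the auxiliary-symbol construction in Lemma \ref{lem_us}, and it requires $P_{X_I}\le(1+\epsilon)Q_X$ so that $Q_X-\frac{1}{1+\epsilon}P_{X_I}\ge 0$. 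With only the lower bound, concavity yields $\phi(P_{X_I})\ge(1-\epsilon)\phi(Q_X)+\epsilon\phi(S)$, an inequality in the useless direction. (Your set is contained in $\{\hat{P}_{x^n}\le(1+(\beta_X-1)\epsilon_n)Q_X\}$ by normalization, so the argument can be repaired, but with an extra factor of order $\beta_X$ in the $\sqrt{n}$ term, which no longer matches \eqref{eq_cor_chg_disc_A}.) Separately, your sketch of the hard step via ``types as $U$'' is not how the appendix proceeds — it uses the chain rule for relative entropy, the Markov structure $Y_i - X^{i-1} - Y^{i-1}$, and a uniformly random index $I$ so that effectively $U=(I,X^{I-1})$ — and as stated your sketch is too vague to confirm the constant $\ln(\alpha^c\beta_X^{c+1})$, which in the paper comes entirely from the perturbation Lemma \ref{lem_us} rather than from a type-counting argument.
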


\begin{proof}
This follows immediately by combining Theorem \ref{thm_change_bdd}
and Theorem \ref{thm_sbl} in Appendix \ref{app_sbl} (note that 
$\alpha_Y$ defined in the Appendix satisfies $\alpha_Y\le\alpha$).
\end{proof}

While the formulation of Theorem \ref{thm_change_bdd} and Corollary 
\ref{cor_change_discrete} is closer in spirit to the formulation of the 
image-size characterization problem in 
\cite{ahlswede_bounds_cond1976,csiszar2011information}, it is worth noting 
that our approach naturally gives rise to a somewhat sharper variant of 
these results that can sometimes be used to obtain cleaner bounds in 
converse proofs. In particular, it is not really necessary to apply the 
Markov inequality \eqref{e26}; we may work directly in applications with 
the following smoother version of the problem (stated, for future 
reference, in the setting of Corollary \ref{cor_change_discrete}).

\begin{cor}\label{cor_alternative}
Let $\beta_X$, $\alpha$, $\delta$, $n$ and $\mu_n$ be as in  
Corollary~\ref{cor_change_discrete}. Then
\begin{align}
	&\ln\int Q_{Y^n|X^n}^{c\left(1+\tfrac{1}{t}\right)}(f)
	\,d\mu_n
	-c\ln\nu^{\otimes n}(f)
\nonumber\\
	&\le
	n\,{\rm d}^{\star}(Q_X,Q_{Y|X},\nu,c)
	+ c(\alpha-1)nt
	+\ln(\alpha^c\beta_X^{c+1})
	\sqrt{3n\beta_X\ln\frac{|\mathcal{X}|}{\delta}}
\label{e_425}
\end{align}
for every $c,t>0$ and $f\in\mathcal{H}_{[0,1]}(\mathcal{Y}^n)$.
\end{cor}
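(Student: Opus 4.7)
The plan is to mimic the proof of Theorem~\ref{thm_change_bdd} up to the point where Markov's inequality was invoked, and simply stop one step earlier. Specifically, I will reuse the chain of inequalities \eqref{e_daul}--\eqref{e_25} verbatim, but drop the final application of Markov's inequality in \eqref{e26}; this leaves a bound on the $L^{c/(1-e^{-t})}$-norm of $Q_{Y^n|X^n}(f)$ under $\mu_n$ rather than a bound on the $\mu_n$-probability of a level set.

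More concretely, setting $g=(\Lambda_tf)^c$ in the variational estimate \eqref{e_daul} and using \eqref{e8} to control $\|g\|_{L^{1/c}(\nu^{\otimes n})}$ gives exactly \eqref{e26a}, while applying Lemma~\ref{lem_semigroupmaximal} and reverse hypercontractivity (Theorem~\ref{thm_rhc}) as in \eqref{e420}--\eqref{e_25} yields
\[
  \int Q_{Y^n|X^n}(f)^{\frac{c}{1-e^{-t}}}\,d\mu_n
  \le
  \exp\!\bigl({\rm d}(\mu_n,Q_{Y^n|X^n},\nu^{\otimes n},c)+c(\alpha-1)nt\bigr)\,[\nu^{\otimes n}(f)]^c.
\]
Since $f\in[0,1]$ implies $Q_{Y^n|X^n}(f)\in[0,1]$ pointwise, and since $1+\tfrac{1}{t}\ge\tfrac{1}{1-e^{-t}}$ (the inequality $e^t\ge 1+t$), raising to a larger exponent can only decrease the integrand. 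Hence the left-hand side dominates $\int Q_{Y^n|X^n}^{c(1+1/t)}(f)\,d\mu_n$, and taking logarithms yields
\[
  \ln\!\int Q_{Y^n|X^n}^{c(1+1/t)}(f)\,d\mu_n
  -c\ln\nu^{\otimes n}(f)
  \le
  {\rm d}(\mu_n,Q_{Y^n|X^n},\nu^{\otimes n},c)+c(\alpha-1)nt.
\]

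To finish, I substitute the ``first-order term'' estimate \eqref{e_49}, which (as invoked in the proof of Corollary~\ref{cor_change_discrete}) is supplied by Theorem~\ref{thm_sbl} in Appendix~\ref{app_sbl} for the specific choice $\mu_n=Q_X^{\otimes n}|_{\mathcal{C}_n}$; that theorem gives
\[
  {\rm d}(\mu_n,Q_{Y^n|X^n},\nu^{\otimes n},c)
  \le n\,{\rm d}^{\star}(Q_X,Q_{Y|X},\nu,c)
  +\ln(\alpha^c\beta_X^{c+1})\sqrt{3n\beta_X\ln\tfrac{|\mathcal{X}|}{\delta}},
\]
using $\alpha_Y\le\alpha$ as noted in the proof of Corollary~\ref{cor_change_discrete}. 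Combining the two displays produces exactly the bound \eqref{e_425}.

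There is really no serious obstacle: the entire content is that omitting the final Markov step of Theorem~\ref{thm_change_bdd} yields a statement that is simultaneously smoother and strictly stronger (Corollary~\ref{cor_change_discrete} can be recovered from Corollary~\ref{cor_alternative} by Markov's inequality applied to the integrand, and then optimizing $t$). The only small point requiring care is the monotonicity step $Q_{Y^n|X^n}(f)^{c(1+1/t)}\le Q_{Y^n|X^n}(f)^{c/(1-e^{-t})}$, which relies on the assumption $f\in\mathcal{H}_{[0,1]}$; this is why the corollary is stated with $f\in[0,1]$ rather than for general nonnegative $f$.
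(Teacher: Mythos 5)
Your proof is correct and follows essentially the same route as the paper, whose proof of this corollary is literally the instruction to omit the Markov step \eqref{e26} and the subsequent optimization over $t$ in the proof of Theorem~\ref{thm_change_bdd}. You have in fact filled in the one detail the paper leaves implicit — that passing from the exponent $c/(1-e^{-t})$ to $c(1+\tfrac{1}{t})$ inside the integral requires $Q_{Y^n|X^n}(f)\in[0,1]$ together with $e^t\ge 1+t$ — and that step is handled correctly.
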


\begin{proof}
Simply omit the use of Markov's inequality \eqref{e26} and the subsequent 
optimization over $t$ in the proof of Theorem \ref{thm_change_bdd}.
\end{proof}

In the sequel, we will illustrate both 
Corollaries~\ref{cor_change_discrete} and \ref{cor_alternative} in 
applications.

\subsection{Gaussian case}\label{sec_change_g}

Beside giving rise to $O(\sqrt{n})$ second-order terms, another key 
advantage of the smoothing-out method is that it is applicable beyond the 
finite alphabet setting. We will presently further illustrate this feature 
by developing a Gaussian analogue of the image size theory of the previous 
section, opening the door to systematic extension of many applications of 
this methodology to the Gaussian setting. The basic result of this section 
is the following Gaussian version of Theorem \ref{thm_change_bdd}.

\begin{thm}
\label{thm_change_g}
Let $Q_{Y|X=x}=\mathcal{N}(x,1)$ and 
$\nu$ be Lebesgue on $\mathbb{R}$.
Then we have
\begin{align}
	&\ln\mu_n[Q_{Y^n|X^n}(f)>\eta]
	-c\ln\nu^{\otimes n}(f)
	\nonumber
	\\
	&\qquad\le {\rm d}(\mu_n,Q_{Y^n|X^n},\nu^{\otimes n},c)
	+c\sqrt{2n\ln\frac{1}{\eta}}
	+c\ln\frac{1}{\eta}
\label{e102}
\end{align}
for any positive measure $\mu_n$ on $\mathbb{R}^n$,
$f\in \mathcal{H}_{[0,1]}(\mathbb{R}^n)$, $\eta\in(0,1)$, and $c>0$.
\end{thm}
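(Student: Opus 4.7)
The plan is to parallel the proof of Theorem~\ref{thm_change_bdd}, with the Ornstein--Uhlenbeck semigroup $T_{0^n,t}$ centered at the origin playing the role of the uniform dominating operator $\Lambda_t$. No such $\Lambda_t$ exists in the Gaussian setting because Gaussian densities are unbounded against Lebesgue measure, but the change-of-variable identity \eqref{e60} together with a scaling of $\mu_n$ will serve as an effective substitute.

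Fix $t>0$ to be optimized at the end, and let $\bar\mu_n$ denote the pushforward of $\mu_n$ under $x^n \mapsto e^t x^n$. I apply the variational principle of Proposition~\ref{prop_equiv} to $\bar\mu_n$ with the test function $g=T_{0^n,t}f$, obtaining
\begin{equation*}
\bar\mu_n\bigl(e^{c\,Q_{Y^n|X^n}(\ln T_{0^n,t}f)}\bigr) \le e^{{\rm d}(\bar\mu_n, Q_{Y^n|X^n}, \nu^{\otimes n}, c)}\, \nu^{\otimes n}(T_{0^n,t}f)^c.
\end{equation*}
Translation invariance of Lebesgue measure, together with the change of variables $z^n = e^{-t}y^n + \sqrt{1-e^{-2t}}V^n$ inside the expectation defining $T_{0^n,t}f$, yields $\nu^{\otimes n}(T_{0^n,t}f) = e^{tn}\nu^{\otimes n}(f)$---the Gaussian analogue of the total-mass estimate \eqref{e8} with $\alpha-1$ replaced by $1$. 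For the left-hand side, the pushforward identity followed by \eqref{e60} rewrites the integral as $\mu_n\bigl(e^{c\,Q_{Y^n|X^n}(\ln T_{x^n,t}f)}\bigr)$; the OU reverse hypercontractivity of Theorem~\ref{thm_rhc} applied with stationary measure $Q_{Y^n|X^n=x^n}$, combined with $f\in[0,1]$ and Markov's inequality, then lower-bounds this quantity by $\eta^{c/(1-e^{-2t})}\mu_n[Q_{Y^n|X^n}(f)>\eta]$, exactly as in the chain of inequalities \eqref{e24}--\eqref{e26}.

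The main obstacle, and the only genuinely Gaussian step, is to replace ${\rm d}(\bar\mu_n, \cdots)$ by ${\rm d}(\mu_n, \cdots)$ inside the exponent. Unwinding the supremum \eqref{e80} and substituting $P_{X^n}$ for the pushforward of $P_{\bar X^n}$ under $\bar x^n \mapsto e^{-t}\bar x^n$, the $D(\cdot\|\mu)$ terms agree by the invariance of relative entropy under bijections, so it suffices to show $h(e^t X^n + G) \ge h(X^n + G)$ for an arbitrary random vector $X^n$ independent of $G\sim\mathcal{N}(0,\mathbf{I}_n)$. To see this, decompose $G$ in distribution as $e^{-t}G^* + \sqrt{1-e^{-2t}}G_2$ with $G^*, G_2$ independent standard Gaussians both independent of $X^n$: then $X^n + G$ is a measurable function of the pair $(X^n + e^{-t}G^*, G_2)$ with $G_2$ independent of $X^n$, and the data processing inequality gives $I(X^n; X^n + G) \le I(X^n; X^n + e^{-t}G^*)$. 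Substituting $h(X^n + e^{-t}G^*) = h(e^t X^n + G^*) - tn$ and $h(e^{-t}G^*) = h(G) - tn$, this rearranges to the claimed entropy inequality.

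Combining all the estimates and using the elementary bound $\frac{1}{1-e^{-2t}} \le \frac{1}{2t} + 1$, the resulting inequality reads
\begin{equation*}
\ln\mu_n[Q_{Y^n|X^n}(f)>\eta] - c\ln\nu^{\otimes n}(f) \le {\rm d}(\mu_n, Q_{Y^n|X^n}, \nu^{\otimes n}, c) + cnt + \Bigl(\tfrac{1}{2t}+1\Bigr)c\ln\tfrac{1}{\eta},
\end{equation*}
and optimizing over $t>0$ at $t^* = \sqrt{\ln(1/\eta)/(2n)}$ produces the advertised slack $c\sqrt{2n\ln(1/\eta)} + c\ln(1/\eta)$.
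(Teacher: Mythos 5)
Your proof is correct and follows essentially the same route as the paper's: dilate $\mu_n$ by $e^t$, apply Proposition~\ref{prop_equiv} with $T_{0^n,t}f$, use the change-of-variables identity \eqref{e60} plus Gaussian reverse hypercontractivity and Markov's inequality on one side, translation invariance of Lebesgue measure on the other, and finally ${\rm d}(\bar\mu_n,\cdot)\le{\rm d}(\mu_n,\cdot)$. The only (minor) deviation is that you establish the key entropy monotonicity $h(e^tX^n+G)\ge h(X^n+G)$ by a self-contained degradation/data-processing argument, where the paper invokes the argument of \eqref{e_46} resting on \cite[Theorem~1]{guo2005mutual}; your version is equally valid and arguably more elementary.
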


The proof uses some ideas that were introduced in Section~\ref{sec_gfano}. 
In particular, in this section $T_{x^n,t}$ will denote the 
Ornstein-Uhlenbeck semigroup \eqref{e_tx}, and we will again exploit 
heavily the change-of-variables formula \eqref{e60}.

\begin{proof}[Proof of Theorem \ref{thm_change_g}]
Denote by $\bar{\mu}_n$ the dilation of $\mu_n$ by the factor $e^t$, that 
is,
\begin{equation}
	\bar{\mu}_n[e^t\mathcal{C}]:=\mu_n[\mathcal{C}]
\label{e_98}
\end{equation}
for any set $\mathcal{C}$. Applying Proposition~\ref{prop_equiv}, we can 
estimate
\begin{align}
	\int e^{Q_{Y^n|X^n}(\ln g)}
	d\bar{\mu}_n
	\le
	e^{{\rm d}(\bar{\mu}_n,Q_{Y^n|X^n},\nu^{\otimes n},c)}
	\|g\|_{L^{1/c}(\mathbb{R}^n)}
\label{e62}
\end{align}
for any function $g\in\mathcal{H}_+(\mathbb{R}^n)$ for which the integrals 
are finite.

Now take $g=(T_{0^n,t}f)^c$ and observe that
\begin{align}
	\|g\|_{L^{1/c}(\mathbb{R}^n)}
	&= \|T_{0^n,t}f\|_{L^1(\mathbb{R}^n)}^{c}
	\\
	&= e^{cnt}\|f\|_{L^1(\mathbb{R}^n)}^{c}
	\label{e38}
	\\
	&=e^{cnt}[\nu^{\otimes n}(f)]^c,
	\label{e66}
\end{align}
where \eqref{e38} can be verified using Fubini's theorem and \eqref{e_tx}.
On the other hand,
\begin{align}
	\int e^{Q_{Y^n|X^n}(\ln g)}
	d\bar{\mu}_n
	&=
	\int e^{cQ_{Y^n|X^n=x^n}(\ln T_{0^n,t}f)}
	d\bar{\mu}_n(x^n)
\\
	&=\int
	e^{cQ_{Y^n|X^n=e^{-t}x^n}(\ln T_{{e^{-t}}x^n,t}f)}
	d\bar{\mu}_n(x^n)
\label{e_68}
\\
	&=\int
	\|T_{{e^{-t}}x^n,t}f\|_{L^0(Q_{Y^n|X^n=e^{-t}x^n})}^c
	d\bar{\mu}_n(x^n)
\\
	&\ge\int
	\|f\|_{L^{1-e^{-2t}}(Q_{Y^n|X^n=e^{-t}x^n})}^c
	d\bar{\mu}_n(x^n)
\label{e107}
\\
	&\ge\int
	\|f\|_{L^1(Q_{Y^n|X^n=e^{-t}x^n})}^{\frac{c}{1-e^{-2t}}}
	d\bar{\mu}_n(x^n)
\label{e108}
\\
&\ge
	\eta^{\frac{c}{1-e^{-2t}}}
	\bar{\mu}_n[x^n\colon Q_{Y^n|X^n=e^{-t}x^n}(f)>\eta]
\label{e42}
\\
	&= \eta^{
	\frac{c}{1-e^{-2t}}}
	\mu_n[Q_{Y^n|X^n}(f)>\eta],
\label{e9}
\end{align}
where \eqref{e_68} follows from the change of variables formula 
\eqref{e60};
\eqref{e107} is from \eqref{e_RHC} (with the sharp constant for Gaussian 
hypercontractivity, see secion \ref{sec_gfano});
\eqref{e108} used that $f\in[0,1]$; and
\eqref{e9} follows from the definition \eqref{e_98} of $\bar\mu_n$.

Combining \eqref{e62}, \eqref{e66}, and \eqref{e9}, we obtain
\begin{align}
	&\ln\mu_n[Q_{Y^n|X^n}(f)>\eta]
	-c\ln\nu^{\otimes n}(f)
\nonumber
\\
	&\qquad\le {\rm d}(\bar{\mu}_n,Q_{Y^n|X^n},\nu^{\otimes n},c)
	+\inf_{t>0}\left\{
	\frac{c}{1-e^{-2t}}
	\ln\frac{1}{\eta}
	+cnt\right\}
\\
	&\qquad\le {\rm d}(\bar{\mu}_n,Q_{Y^n|X^n},\nu^{\otimes n},c)
	+c\sqrt{2n\ln\frac{1}{\eta}}
	+c\ln\frac{1}{\eta},
\label{e46}
\end{align}
where \eqref{e46} used $\frac{1}{1-e^{-2t}}\le
\frac{1}{2t}+1$.

To conclude the proof, it remains only to show that
\begin{align}
	{\rm d}(\bar{\mu}_n,Q_{Y^n|X^n},\nu^{\otimes n},c)
	\le{\rm d}(\mu_n,Q_{Y^n|X^n},\nu^{\otimes n},c),
	\label{e48}
\end{align}
for which we use the following modification of the argument that led to
\eqref{e_46}. When $P_{X^n}$ and $\mu_n$ are both scaled by $e^t$, the
relative entropy $D(P_{X^n}\|\mu_n)$ remains unchanged. On the other hand,
as $D(P_{Y^n}\|\nu^{\otimes n})=-h(P_{Y^n})$, the same argument as was 
used in \eqref{e_46} yields $D(P_{\bar{Y}^n}\|\nu^{\otimes n})\le 
D(P_{Y^n}\|\nu^{\otimes n})$. Substituting these observations into 
Definition \ref{defn_d} concludes the proof of \eqref{e48}.
\end{proof}

We finally state a ready-to-use Gaussian analogue of Corollary 
\ref{cor_change_discrete}. The precise construction of $\mu_n$ may be 
found in Appendix \ref{app_dgaussian}.

\begin{cor}\label{cor7}
Let $Q_{XY}$ be any nondegenerate Gaussian measure, and
let $\nu$ be the Lebesgue measure on $\mathbb{R}$. Then for any 
$\delta\in(0,1)$ and $n\ge 20\ln\frac{2}{\delta}$, we may choose a set
$\mathcal{C}_n\subseteq\mathbb{R}^n$ with 
$Q_X^{\otimes n}[\mathcal{C}_n]\ge 1-\delta$ such that
\begin{align}
	&\ln\mu_n[Q_{Y^n|X^n}(f)>\eta]-c\ln\nu^{\otimes n}(f)
\nonumber
\\
	&\qquad\le n\,{\rm d}^{\star}(Q_X,Q_{Y|X},\nu,c)
	+\sqrt{6n\ln\frac{2}{\delta}}
	+c\sqrt{2n\ln\frac{1}{\eta}}
	+c\ln\frac{1}{\eta}
\label{e47}
\end{align}
for all $f\in \mathcal{H}_{[0,1]}(\mathbb{R}^n)$, $c>0$,
$\eta\in(0,1)$, where we defined
$\mu_n:=Q_X^{\otimes n}|_{\mathcal{C}_n}$.
\end{cor}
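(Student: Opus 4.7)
The plan is to obtain Corollary~\ref{cor7} as an immediate consequence of Theorem~\ref{thm_change_g} together with a Gaussian analogue of the smoothing estimate~\eqref{e_49}. Concretely, applying Theorem~\ref{thm_change_g} with $\mu_n := Q_X^{\otimes n}|_{\mathcal{C}_n}$ (for a typical set $\mathcal{C}_n$ to be chosen) immediately yields
\begin{equation}
\ln\mu_n[Q_{Y^n|X^n}(f)>\eta]-c\ln\nu^{\otimes n}(f)
\le {\rm d}(\mu_n,Q_{Y^n|X^n},\nu^{\otimes n},c)
+c\sqrt{2n\ln\tfrac{1}{\eta}}+c\ln\tfrac{1}{\eta},
\end{equation}
so the full statement reduces to establishing the Gaussian single-letterization bound
\begin{equation}
{\rm d}\bigl(Q_X^{\otimes n}|_{\mathcal{C}_n},Q_{Y^n|X^n},\nu^{\otimes n},c\bigr)
\le n\,{\rm d}^{\star}(Q_X,Q_{Y|X},\nu,c)+\sqrt{6n\ln\tfrac{2}{\delta}}
\end{equation}
for some $\mathcal{C}_n$ of $Q_X^{\otimes n}$-measure at least $1-\delta$.

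My choice of $\mathcal{C}_n$ is a Gaussian ``typical set'' defined in terms of the sufficient statistics of the (nondegenerate) Gaussian $Q_X$: writing $m$ and $\Sigma$ for the population mean and covariance, I would let
\begin{equation}
\mathcal{C}_n:=\bigl\{x^n\in\mathbb{R}^n:\|\hat m(x^n)-m\|\le r_n,\ \|\hat\Sigma(x^n)-\Sigma\|_{\mathrm{op}}\le r_n\bigr\}
\end{equation}
with $r_n$ proportional to $\sqrt{\ln(2/\delta)/n}$. Standard chi-square and Hanson--Wright tail bounds give $Q_X^{\otimes n}[\mathcal{C}_n]\ge 1-\delta$ as soon as $n\ge 20\ln(2/\delta)$, the constant $20$ arising from quantitative Gaussian concentration of linear and quadratic functionals.

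The substantive step is then to bound the Brascamp--Lieb divergence $d(\mu_n,\cdot,\cdot,c)$ by $n\,d^\star$ plus a $O(\sqrt{n\ln(2/\delta)})$ correction. For any candidate $P_{X^n}$ in the variational representation of $d(\mu_n,\cdot)$, which must be supported on $\mathcal{C}_n$, I would introduce an auxiliary variable $U$ (the coordinate index $i\in\{1,\dots,n\}$, together with the empirical sufficient statistics), write $P_{X^n}=P_U P_{X^n\mid U}$, and use the chain rule for relative entropy together with tensorization of the marginals of $\nu^{\otimes n}$ and $Q_X^{\otimes n}$. This produces an $n$-fold sum of single-letter quantities whose supremum is exactly $n\,d^\star(Q_X,Q_{Y|X},\nu,c)$ by Definition~\ref{defn_dstar}. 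The residue, which measures the penalty for having conditioned on $\mathcal{C}_n$ rather than on all of $\mathbb{R}^n$, is controlled by the maximum deviation of the empirical moments allowed inside $\mathcal{C}_n$, giving the $\sqrt{6n\ln(2/\delta)}$ term.

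The hardest part will be making the last comparison quantitative: extracting the explicit second-order constant $\sqrt{6}$ in the Gaussian setting requires a careful bookkeeping, since one must translate deviations of empirical mean and covariance into deviations of differential entropy and Brascamp--Lieb rate function along Gaussian input perturbations. This is the Gaussian counterpart of the ``smooth blowing-up'' device developed in Appendix~\ref{app_sbl} for the discrete case, and I would expect the proof in Appendix~\ref{app_dgaussian} to carry it out by combining the variational formula of Proposition~\ref{prop_equiv} with quantitative Gaussian concentration; the rest of Corollary~\ref{cor7} follows by simple assembly.
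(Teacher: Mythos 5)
Your overall architecture (Theorem \ref{thm_change_g} plus a Gaussian analogue of \eqref{e_49}) matches the paper's, and your typical set built from empirical moments with chi-square tails is a workable variant of the spherical shell $\{x^n: |\|x^n\|^2/\sigma^2-n|\le A\sqrt{n}\}$ used in Theorem \ref{smooth:thm_gaussian}. But there is a genuine gap: the corollary is asserted for an \emph{arbitrary} nondegenerate Gaussian $Q_{XY}$, whereas Theorem \ref{thm_change_g} is stated only for the canonical channel $Q_{Y|X=x}=\mathcal{N}(x,1)$, and the single-letterization of Appendix \ref{app_dgaussian} is likewise proved only for $Q_X=\mathcal{N}(0,\sigma^2)$ with that channel. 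Your opening step, ``applying Theorem \ref{thm_change_g} \ldots immediately yields,'' is therefore not available for a general $Q_{XY}$, whose conditional law is $\mathcal{N}(ax+b,\tau^2)$. The paper closes this by a scaling reduction: first treat $Q_{Y|X=x}=\mathcal{N}(ax,a^2)$ via $\tilde Y=Y/a$, $\tilde f(y^n)=f(ay^n)$, checking that the factor $|a|^n$ appearing in $\nu^{\otimes n}(f)$ is exactly compensated by the shift $c\ln|a|$ in ${\rm d}^{\star}$, and then note that simultaneous scaling of $X$ and $Y$ leaves both sides of \eqref{e47} invariant. Some such reduction (or a direct proof of both ingredients for general Gaussian channels) is indispensable and is absent from your proposal.

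A second, smaller issue sits inside your sketch of the single-letterization. After the chain rule, the induced single-letter marginal $P_{X_I}$ is \emph{not} $Q_X$ --- it is only constrained to have second moment at most $(1+O(n^{-1/2}))\sigma^2$ --- so the resulting supremum is not ``exactly $n\,{\rm d}^{\star}$ by Definition \ref{defn_dstar}.'' One must pass through the relaxed functional $F(M)=\sup\{h(X|U)-c\,h(Y|U):\mathrm{Var}(X)\le M\}$, use the scaling of differential entropy to compare $F((1+A/\sqrt{n})\sigma^2)$ with $F(\sigma^2)$, and then invoke the Gaussian extremality result of \cite[Theorem 14]{lccv2015} to identify $F(\sigma^2)-h(Q_X)$ with ${\rm d}^{\star}(Q_X,Q_{Y|X},\nu,c)$. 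That extremality step is the one genuinely nontrivial ingredient of Theorem \ref{smooth:thm_gaussian} and does not follow from bookkeeping of empirical moments alone.
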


\begin{proof}
For $Q_X=\mathcal{N}(0,\sigma^2)$ and $Q_{Y|X=x}=\mathcal{N}(x,1)$, this 
follows immediately by combining Theorem \ref{thm_change_g}
and Theorem \ref{smooth:thm_gaussian} in Appendix \ref{app_dgaussian}.

For a general Gaussian measure $Q_{XY}$, we argue as follows.
Suppose first that
$Q_{Y|X=x}=\mathcal{N}(ax,a^2)$ for some $a\ne 0$.
We can reduce to the case already proved by applying
\eqref{e47} with $\tilde Y:=\frac{Y}{a}$, $\tilde f(y^n)=f(ay^n)$; as
$\nu^{\otimes n}(f)=|a|^n\nu^{\otimes n}(\tilde f)$ and
\begin{equation}
	{\rm d}^{\star}(Q_X,Q_{\tilde Y|X},\nu,c)=
	{\rm d}^{\star}(Q_X,Q_{Y|X},\nu,c)+c\ln|a|
\end{equation}
(the latter follows directly from Definition \ref{defn_dstar}),
it is readily verified that the 
conclusion of Corollary \ref{cor7} remains valid for any $a\ne 0$.
Now suppose that we further scale $X,Y$ simultaneously by some factor 
$b\ne 0$; then clearly both sides of \eqref{e47} remain unchanged.
As any nondegenerate Gaussian measure may be obtained by scaling
$X$ and $Y$ in this manner, the proof is complete.
\end{proof}

\subsection{Application: hypothesis testing with a communication constraint}
\label{sec_change_hypo}

We now revisit the hypothesis testing problem with communication 
constraint introduced in Section \ref{sec_change_ex}, and show 
how the general framework of Section \ref{sec_change_d} enables us to 
achieve a second-order converse in this setting. Let us note that the 
first-order term in the following result gives the correct rate for this 
problem: the matching achievability argument can be found in 
\cite{ahlswede1986hypothesis}.

\begin{thm}
\label{thm_change_hypo}
Let $|\mathcal{X}|<\infty$, and consider the hypothesis testing problem 
with communication constraint defined in Section \ref{sec_change_ex}.
Let $\mathcal{B}$ be any test of the form \eqref{eq_cc} that uses at most 
$M$ codewords and satisfies $\pi_{1|0}:=Q_{XY}^{\otimes n}[\mathcal{B}^c]\le
\epsilon$. Then the error probability $\pi_{0|1}:=Q_X^{\otimes 
n}Q_Y^{\otimes n}[\mathcal{B}]$ satisfies
\begin{align}
\label{eq_hypo_converse}
	\ln \pi_{0|1} 
	\ge &\mbox{} -n\sup_{U\colon U-X-Y}\left\{I(U;Y)\colon
	I(U;X)\le \frac{1}{n}\ln M\right\}
\\
\nonumber
	&
	-\bigg(
	2\ln(\alpha\beta)
	\sqrt{3\beta\ln\frac{4|\mathcal{X}|}{1-\epsilon}}
	+2\sqrt{\alpha\ln\frac{4}{1-\epsilon}}
	\bigg)\sqrt{n}
	-2\ln\frac{4}{1-\epsilon}
\end{align}
for $n>3\beta\ln\frac{4|\mathcal{X}|}{1-\epsilon}$,
where $\beta := \max_x \frac{1}{Q_X(x)}$ and 
$\alpha:=\max_x\left\|\frac{dQ_{Y|X=x}}{dQ_Y}\right\|_{\infty}$.
\end{thm}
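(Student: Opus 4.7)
Plan: The proof reduces the constrained testing problem to an image-size problem and invokes Corollary \ref{cor_change_discrete}. Write $i(x^n)\in[M]$ for the message index with $x^n\in\mathcal{A}_{i(x^n)}$, and $q_i(x^n):=Q_{Y^n|X^n=x^n}[\mathcal{B}_i]$.

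I start with expurgation: the assumption $\pi_{1|0}\le\epsilon$ reads $\int (1-q_{i(x^n)}(x^n))\,dQ_{X^n}\le\epsilon$, so Markov's inequality with threshold $\eta:=(1-\epsilon)/4$ gives $Q_{X^n}[\mathcal{G}]\ge 1-\epsilon/(1-\eta)$, where $\mathcal{G}:=\{x^n:q_{i(x^n)}(x^n)\ge\eta\}$. Intersecting with the typical set $\mathcal{C}_n$ of Corollary \ref{cor_change_discrete} (applied with $\delta:=(1-\epsilon)/4$), and writing $\mu_n:=Q_{X^n}|_{\mathcal{C}_n}$ and $\mathcal{A}_i^{\star}:=\mathcal{A}_i\cap\mathcal{G}\cap\mathcal{C}_n$, I get $\sum_i\mu_n[\mathcal{A}_i^{\star}]\ge\gamma$ for an explicit $\gamma$ of order $(1-\epsilon)$.

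Next, for each $i$ the inclusion $\mathcal{A}_i^{\star}\subseteq\{x^n:q_i(x^n)\ge\eta\}$ makes Corollary \ref{cor_change_discrete} applicable with $f=1_{\mathcal{B}_i}$; rearranging gives
\[
Q_{Y^n}[\mathcal{B}_i]\ge\eta\,\mu_n[\mathcal{A}_i^{\star}]^{1/c}\exp\!\left\{-\tfrac{1}{c}\bigl(n\,{\rm d}^{\star}(Q_X,Q_{Y|X},Q_Y,c)+A\sqrt{n}\bigr)\right\},
\]
with $A$ as in \eqref{eq_cor_chg_disc_A}. Plugging this into $\pi_{0|1}=\sum_iQ_{X^n}[\mathcal{A}_i]Q_{Y^n}[\mathcal{B}_i]\ge\sum_i\mu_n[\mathcal{A}_i^{\star}]Q_{Y^n}[\mathcal{B}_i]$ and applying the Jensen bound $\sum_i\mu_n[\mathcal{A}_i^{\star}]^{1+1/c}\ge M^{-1/c}\gamma^{1+1/c}$ (at most $M$ nonzero summands) yields
\[
\ln\pi_{0|1}\ge\ln(\eta\gamma)-\tfrac{1}{c}\bigl(\ln M+n\,{\rm d}^{\star}(Q_X,Q_{Y|X},Q_Y,c)\bigr)-\tfrac{1}{c}\bigl(A\sqrt{n}+\ln\tfrac{1}{\gamma}\bigr).
\]
The Lagrangian identity $\inf_{c>0}[n\,{\rm d}^{\star}(c)+\ln M]/c=n\sup\{I(U;Y):I(U;X)\le(\ln M)/n\}$, immediate from Definition \ref{defn_dstar}, converts the first bracket into the first-order term of \eqref{eq_hypo_converse}, and $\ln(\eta\gamma)=-2\ln\tfrac{4}{1-\epsilon}$ matches its $O(1)$ constant.

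The main obstacle is extracting the clean $\sqrt{n}$ coefficient of \eqref{eq_hypo_converse} from the residual $(A\sqrt{n}+\ln(1/\gamma))/c$. Here the explicit form \eqref{eq_cor_chg_disc_A} of $A$ is essential: the identity $\ln(\alpha^c\beta^{c+1})/c=\ln\alpha+(1+1/c)\ln\beta\le 2\ln(\alpha\beta)$, valid for $c\ge 1$, converts $A/c$ into exactly $2\ln(\alpha\beta)\sqrt{3\beta\ln\tfrac{4|\mathcal{X}|}{1-\epsilon}}+2\sqrt{\alpha\ln\tfrac{4}{1-\epsilon}}$. The Lagrangian optimization should therefore be carried out under $c\ge 1$, while the $\ln(1/\gamma)/c$ term is absorbed as a Lipschitz perturbation of the supremum's constraint level (the slope being the Lagrange multiplier $c^{\star}$), or via a cruder direct estimate at $c=1$ when the unconstrained optimizer falls below $1$; either way it contributes only to the $-2\ln\tfrac{4}{1-\epsilon}$ constant already present in \eqref{eq_hypo_converse}.
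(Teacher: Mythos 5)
Your reduction is sound and genuinely different from the paper's proof in two places. First, where the paper expurgates the test itself (reordering messages and discarding a low-probability tail so as to obtain the per-message bound $Q_{Y^n}[\mathcal{B}_i']\le\rho/\epsilon'$), you keep the original test and exploit the product structure $\pi_{0|1}=\sum_iQ_{X^n}[\mathcal{A}_i]\,Q_{Y^n}[\mathcal{B}_i]$ directly, lower-bounding $\sum_i\mu_n[\mathcal{A}_i^\star]^{1+1/c}$ by the power-mean inequality over the at most $M$ messages. Second, you invoke Corollary \ref{cor_change_discrete} (with the Markov step built in) rather than Corollary \ref{cor_alternative}, which the paper uses precisely so that the exponent $c(1+\frac{1}{t})$ acts on the average $\int(Q_{X^nY^n}[\mathcal{B}'|X^n])^{c(1+1/t)}\,d\mu_n$ before splitting over messages. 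Both substitutions are legitimate; the chain of inequalities you write is correct, the first-order term and the $\sqrt{n}$ coefficient (including its $\epsilon\to 1$ behavior, via $\alpha-1\le\alpha$ and $\frac{1}{c}\ln(\alpha^c\beta^{c+1})\le 2\ln(\alpha\beta)$ for $c\ge 1$) come out as in \eqref{eq_hypo_converse}, and your Lagrangian step over $c\ge 1$ is the same convex-duality argument as the paper's Step 3 (it is not quite ``immediate'' from Definition \ref{defn_dstar}: one needs concavity of $\theta$, a minimax argument, and the fact that ${\rm d}^{\star}=0$ for $c\le 1$ — but the paper supplies exactly this).

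The one place your argument falls short of the stated bound is the $O(1)$ term. Your constant is $\ln\eta+(1+\frac{1}{c})\ln\gamma$ with $\eta=\frac{1-\epsilon}{4}$ and $\gamma\asymp\frac{1-\epsilon}{2}$, which for $c$ near $1$ is about $-3\ln\frac{1}{1-\epsilon}+O(1)$, whereas \eqref{eq_hypo_converse} claims $-2\ln\frac{4}{1-\epsilon}$. The extra logarithmic factor cannot be tuned away within your scheme: the mass lower bound $\gamma$ enters with exponent $1+\frac{1}{c}$, which (unlike the exponent $1+\frac{1}{t}$ in the paper's route, whose $\frac{1}{t}$ part multiplies an $n$-linear quantity and is absorbed into the $\sqrt{n}$ term upon optimizing $t$) multiplies an $n$-independent quantity and therefore survives in full. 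The paper's message expurgation is exactly the device that replaces your factor $\gamma^{1/c}$ by the single per-message bound $\rho/\epsilon'$, keeping only two logarithms of $\frac{1}{1-\epsilon}$. So your closing claim that the residual ``contributes only to the $-2\ln\frac{4}{1-\epsilon}$ constant already present'' is not justified; as written you prove \eqref{eq_hypo_converse} with that constant weakened to roughly $-3\ln\frac{4}{1-\epsilon}$. This affects only the $n$-independent term, so the second-order converse itself survives, but the theorem in its stated form does not quite follow.
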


\begin{proof}
We proceed in three steps.

\textbf{Step 1.} We begin with an expurgation argument. Suppose the test
$\mathcal{B}$ of the form \eqref{eq_cc} satisfies $\pi_{1|0}\le\epsilon$ 
and $\pi_{0|1}=\rho$.
We claim that for any $\epsilon'\in(0,1-\epsilon)$, we can modify the 
coding scheme such that the error under ${\rm H}_0$ is below 
$\epsilon+\epsilon'$, and the error under ${\rm H}_1$ is below 
$\frac{\rho}{\epsilon'}$ \emph{conditioned on each message}.

Indeed, assume without loss of generality that the messages $i=1,\ldots,M$ 
are ordered such that $Q_{X^n}Q_{Y^n}[\mathcal{B}|W=i]$ is increasing in 
$i$. Let
\begin{equation}
	i^\dagger := \min\{i:Q_{X^n}[W>i]\le\epsilon'\},
\end{equation}
and define a new test $\mathcal{B}'$ that always declares
${\rm H}_1$ upon receiving $i>i^\dagger$, and coincides with $\mathcal{B}$ 
otherwise. Then $\mathcal{B}'$ satisfies
\begin{equation}
	Q_{X^nY^n}[\mathcal{B}^{\prime c}] =
	Q_{X^nY^n}[\mathcal{B}^c\cap\{W\le i^\dagger\}]+
	Q_{X^n}[W>i^\dagger] \le \epsilon+\epsilon'.
\label{e456}
\end{equation}
On the other hand, as $Q_{X^n}Q_{Y^n}[\mathcal{B}'|W=i]=
Q_{X^n}Q_{Y^n}[\mathcal{B}|W=i]$ for $i\le i^\dagger$ and as we assumed 
this quantity is increasing in $i$, we can estimate
\begin{equation}
	\epsilon'Q_{X^n}Q_{Y^n}[\mathcal{B}'|W=i]
	\le
	Q_{X^n}Q_{Y^n}[\mathcal{B}\cap\{W\ge i^\dagger\}]
	\le \rho
\label{e457}
\end{equation}
for all $i=1,\ldots,M$ (this is trivial for $i>i^\dagger$ as
then $Q_{X^n}Q_{Y^n}[\mathcal{B}'|W=i]=0$).
Thus, the claimed properties are satisfied for the modified
test $\mathcal{B}'$.

\textbf{Step 2.} Our aim is to apply Corollary \ref{cor_alternative} to 
the modified test $\mathcal{B}'$. Fix for the time being any 
$\delta\in(0,1-\epsilon-\epsilon')$, and define $\mu_n$ as in Corollary 
\ref{cor_change_discrete}. Then
\begin{equation}
	1-\epsilon-\epsilon'
	\le \int Q_{X^nY^n}[\mathcal{B}'|X^n]\,dQ_{X^n}
	\le 
	\int Q_{X^nY^n}[\mathcal{B}'|X^n]\,
	d\mu_n + \delta
\end{equation}
by \eqref{e456}. Thus, we may bound, for any $c\ge 1$, 
\begin{equation}
	(1-\epsilon-\epsilon'-\delta)^{c(1+\frac{1}{t})}
	\le
	\int (Q_{X^nY^n}[\mathcal{B}'|X^n])^{c(1+\frac{1}{t})}\,
	d\mu_n,
\end{equation}
where we used Jensen's inequality and the fact that $\mu_n$ is a 
sub-probability measure.

Now denote by $\mathcal{B}_i'\subseteq\mathcal{Y}^n$ the set of sequences 
$y^n$ for which the test $\mathcal{B}'$ declares ${\rm H}_0$ upon 
receiving message $i$ (cf.\ \eqref{eq_cc}). Then we may crudely estimate
\begin{align}
\nonumber
	&(1-\epsilon-\epsilon'-\delta)^{c(1+\frac{1}{t})} 
\\ 	&\le 
	\sum_{i=1}^M 
	\int (Q_{Y^n|X^n}[\mathcal{B}_i'])^{c(1+\frac{1}{t})}\, d\mu_n
\\ 
	&\le 
	e^{n{\rm d}^\star(Q_X,Q_{Y|X},Q_Y,c) +
	c(\alpha-1)nt +
	\ln(\alpha^c\beta^{c+1})\sqrt{3n\beta\ln\frac{|\mathcal{X}|}{\delta}}}
	\sum_{i=1}^M
	(Q_{Y^n}[\mathcal{B}_i'])^c
\label{eblabli}
\end{align}
for $n>3\beta\ln\frac{|\mathcal{X}|}{\delta}$ and $t>0$, where we used 
Corollary \ref{cor_alternative} in \eqref{eblabli}. But as
$Q_{Y^n}[\mathcal{B}_i']\le\frac{\rho}{\epsilon'}$ by \eqref{e457},
we may rearrange the above estimate to obtain
\begin{align}
\nonumber
	\ln\rho
	\ge &\mbox{}
	-\frac{n}{c}\,{\rm d}^\star(Q_X,Q_{Y|X},Q_Y,c)
	-\frac{\ln M}{c} \\
\nonumber
&\mbox{}
	-(\alpha-1)nt 
	-\bigg(1+\frac{1}{t}\bigg)\ln\frac{1}{1-\epsilon-\epsilon'-\delta}
	\\
&\mbox{}
	-\frac{\ln(\alpha^c\beta^{c+1})}{c}
	\sqrt{3n\beta\ln\frac{|\mathcal{X}|}{\delta}}
	-\ln\frac{1}{\epsilon'}.
\end{align}
Finally, making the convenient choices
\begin{equation}
	\epsilon' = \frac{1-\epsilon}{2},\qquad\quad
	\delta = \frac{1-\epsilon}{4},
\end{equation}
recalling that $\rho=\pi_{0|1}$, and optimizing over $t>0$ yields
\begin{align}
\label{eq_pre_hypo}
	\ln\pi_{0|1}
	\ge &\mbox{}
	-\frac{n}{c}\,{\rm d}^\star(Q_X,Q_{Y|X},Q_Y,c)
	-\frac{\ln M}{c} \\
\nonumber
&\mbox{}
	-\bigg(
	2\ln(\alpha\beta)
	\sqrt{3\beta\ln\frac{4|\mathcal{X}|}{1-\epsilon}}
	+2\sqrt{(\alpha-1)\ln\frac{4}{1-\epsilon}}
	\bigg)\sqrt{n}
	-2\ln\frac{4}{1-\epsilon}
\end{align}
for $n>3\beta\ln\frac{4|\mathcal{X}|}{1-\epsilon}$ and $c\ge 1$.

\textbf{Step 3.} To achieve the correct first-order rate it remains to 
optimize \eqref{eq_pre_hypo} over $c\ge 1$. To this end, let us denote
the rate parameter that appears in \eqref{eq_hypo_converse} by
\begin{equation}
	\theta(R) := 
	\sup_{U\colon U-X-Y}\{I(U;Y)\colon
	I(U;X)\le R\}.
\end{equation}
When we choose $\nu=Q_Y$ in Definition \ref{defn_dstar}, the latter may be 
written as
\begin{align}
\label{eq_dstarmut}
	\frac{1}{c}{\rm d}^{\star}(Q_X, Q_{Y|X}, Q_Y, c) &=
	\sup_{U\colon U-X-Y}
	\left\{I(U;Y)-\frac{1}{c}I(U;X)\right\} \\
	&=
	\sup_{0\le R\le\ln|\mathcal{X}|}
	\left\{\theta(R)-\frac{R}{c}\right\},
\end{align}
where we used that $I(U;X)\le\ln|\mathcal{X}|$. But is not hard to show that
$\theta(R)$ is a concave function of $R$, cf.\ \cite[Lemma 
1]{ahlswede1986hypothesis}. Thus, we obtain for any $R'\ge 0$
\begin{align}
	&\inf_{c\ge 1}
	\bigg\{
	\frac{1}{c}{\rm d}^{\star}(Q_X, Q_{Y|X}, Q_Y, c) +
	\frac{R'}{c}
	\bigg\}
	=
	\\ &\inf_{c>0}
	\bigg\{
	\frac{1}{c}{\rm d}^{\star}(Q_X, Q_{Y|X}, Q_Y, c) +
	\frac{R'}{c}
	\bigg\}
	= \theta(R'),
\label{eblooper}
\end{align}
where we used in the first line that ${\rm 
d}^{\star}(Q_X, Q_{Y|X}, Q_Y, c)=0$ for $c\le 1$ by the data processing
inequality $I(U;Y)\le I(U;X)$; and we used the minimax theorem in 
\eqref{eblooper}. Setting $R'=\frac{1}{n}\ln M$ completes the proof.
\end{proof}

\subsection{Application: source coding with compressed side information}\label{sec_source}

In this section, we revisit one of the original applications of the 
blowing-up method: the problem of source coding with compressed side 
information \cite{wyner1975side,ahlswede_bounds_cond1976}. Using the 
methods of this section, we will obtain second-order converses for 
the discrete and Gaussian variants of this problem. The present setting is 
a typical example of \emph{side information problems}, whose converses 
have so far proved to be particularly elusive in non-asymptotic information 
theory (see, e.g., \cite[Section~9.2]{CIT-086}). Using the smoothing-out 
method, 
second-order converses can now be obtained by a straightforward 
replacement of the classical blowing-up analysis.
For second-order achievability bounds for side information problems, we 
refer the reader to \cite{watanabe2015}.

The problem that we investigate in this section is illustrated in 
Figure \ref{f_source}. Two correlated memoryless sources $X^n,Y^n$
with distribution $Q_{X^nY^n}=Q_{XY}^{\otimes n}$ are 
encoded into messages $W_1\in[M_1],W_2\in[M_2]$, respectively, and 
transmitted to a decoder. The (noiseless) communication channels are 
rate-constrained
in that they limit the sizes $M_1,M_2$ of the two codebooks. The decoder 
aims to reconstruct the source $Y^n$ with error 
probability at most $\epsilon$ (the side information $X^n$ is not 
reconstructed, but helps decode $Y^n$ due to the correlation between the 
sources). We aim to understand the fundamental interplay between 
$M_1,M_2$ and $\epsilon$.
\begin{figure}[t]
  \centering
\begin{tikzpicture}[node distance=0.6cm,minimum height=10mm,minimum width=12mm,arw/.style={->,>=stealth'}]
  \node[draw,rectangle] (D) {~~~Decoder~~~};
  \node[draw,rectangle] (E1) [left=1.4cm of D] {Encoder 2};
  \node[draw,rectangle] (E2) [below =0.5cm of D] {Encoder 1};
  \node[rectangle] (Y) [left =0.4cm of E1] {$Y^n$};
  \node[rectangle] (X) [left =0.4cm of E2] {$X^n$};
  \node[rectangle] (Yh) [right =0.4cm of D] {$\hat{Y}^n$};

  \draw[->]  (Y) to node[midway,above]{} (E1);
  \draw[->]  (X) to node[midway,right]{} (E2);
  \draw[->]  (E1) to node[midway,above]{$W_2$} (D);
  \draw[->]  (E2) to node[midway,right]{$W_1$} (D);
  \draw[->]  (D) to node[midway,left]{} (Yh);
\end{tikzpicture}
\caption{Source coding with compressed side information}
\label{f_source}
\end{figure}
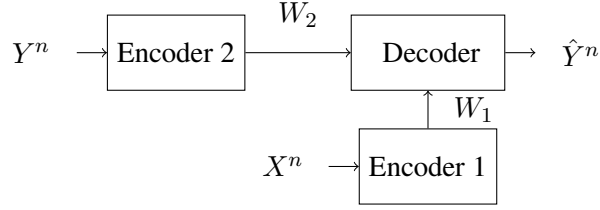

We begin by developing a second-order converse in the finite alphabet 
setting.

\begin{thm}\label{thm_sidediscrete}
Let $|\mathcal{X}|,|\mathcal{Y}|<\infty$, $\epsilon\in (0,1)$, and $n\ge 
3\beta_X\ln\frac{4|\mathcal{X}|}{1-\epsilon}$,
where $\beta_X$ is defined in \eqref{e52}.
Then for any encoders $f\colon \mathcal{X}^n\to[M_1]$,
$g\colon \mathcal{Y}^n\to [M_2]$, and decoder $\hat Y^n\colon 
[M_1]\times[M_2]\to \mathcal{Y}^n$ such that
$\mathbb{P}[Y^n\neq\hat{Y}^n] \le \epsilon$, we have
\begin{align}
\label{e_dsource}
	\ln M_2 
	&\ge 
	n \inf_{U\colon U-X-Y}\bigg\{H(Y|U)\colon 
		I(U;X)\le \frac{1}{n}\ln M_1\bigg\}
	\\
\nonumber
	&\quad
	- \bigg(
		2\ln (|\mathcal{Y}|\beta_X)
		\sqrt{3\beta_X\ln\frac{4|\mathcal{X}|}{1-\epsilon}}
		+ 2\sqrt{|\mathcal{Y}|\ln\frac{2}{1-\epsilon}}
	\bigg)\sqrt{n}
	- 2\ln\frac{4}{1-\epsilon}.
\end{align}
\end{thm}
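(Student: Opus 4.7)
The strategy parallels the proof of Theorem \ref{thm_change_hypo}, with the second-order image size characterization (Corollary \ref{cor_alternative}) as the central tool. For each $w_1 \in [M_1]$, I define the ``decoder slice''
\[
\mathcal{B}_{w_1} := \{\hat{Y}^n(w_1, w_2) : w_2 \in [M_2]\},
\]
which has at most $M_2$ elements. Because the decoder's output $\hat Y^n(f(X^n),g(Y^n))$ lies in $\mathcal{B}_{f(X^n)}$, correct decoding forces $Y^n \in \mathcal{B}_{f(X^n)}$, and so $\mathbb{P}[Y^n \in \mathcal{B}_{f(X^n)}] \ge 1-\epsilon$. I choose $\nu$ to be the uniform probability measure on $\mathcal{Y}$, which gives $\alpha \le |\mathcal{Y}|$ (in the sense of \eqref{e_change_d_aspt}) and $\nu^{\otimes n}(1_{\mathcal{B}_{w_1}}) \le M_2/|\mathcal{Y}|^n$. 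A direct evaluation of Definition~\ref{defn_dstar} then yields
\[
{\rm d}^\star(Q_X, Q_{Y|X}, \nu, c) = c\ln|\mathcal{Y}| - c\,\psi(1/c),\qquad
\psi(\lambda) := \inf_{U\colon U-X-Y}\bigl\{H(Y|U) + \lambda I(U;X)\bigr\}.
\]

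Next, fix $\delta\in(0,1-\epsilon)$ and take $\mu_n = Q_X^{\otimes n}|_{\mathcal{C}_n}$ as supplied by Corollary \ref{cor_change_discrete}. Apply Corollary \ref{cor_alternative} to each $f_{w_1} := 1_{\mathcal{B}_{w_1}}$: after the $cn\ln|\mathcal{Y}|$ terms cancel, one obtains, for every $c,t>0$ with $p := c(1+\tfrac{1}{t})\ge 1$,
\[
\int Q_{Y^n|X^n}^{\,p}(f_{w_1})\,d\mu_n \;\le\; M_2^{c}\,\exp\!\bigl(-cn\,\psi(1/c) + c(\alpha-1)nt + A_0\sqrt{n}\bigr),
\]
where $A_0 := \ln(\alpha^c \beta_X^{c+1})\sqrt{3\beta_X \ln(|\mathcal{X}|/\delta)}$. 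Summing over $w_1 \in [M_1]$ and using the pointwise bound $\sum_{w_1}[Q_{Y^n|X^n=x^n}(\mathcal{B}_{w_1})]^{p} \ge [Q_{Y^n|X^n=x^n}(\mathcal{B}_{f(x^n)})]^{p}$, Jensen's inequality for the convex map $x\mapsto x^p$ against the sub-probability measure $\mu_n$, and the decoding hypothesis, the sum on the left is lower-bounded by $(1-\epsilon-\delta)^p$. Taking logarithms and rearranging produces
\[
\ln M_2 \;\ge\; n\,\psi(1/c) - \tfrac{1}{c}\ln M_1 - (\alpha-1)nt - \tfrac{A_0}{c}\sqrt{n} - (1+\tfrac{1}{t})\ln\tfrac{1}{1-\epsilon-\delta}.
\]
Setting $t = \sqrt{\ln(1/(1-\epsilon-\delta))/((\alpha-1)n)}$ collapses the middle two error terms into $2\sqrt{(\alpha-1)n\ln\tfrac{1}{1-\epsilon-\delta}} \le 2\sqrt{|\mathcal{Y}|n\ln\tfrac{2}{1-\epsilon}}$ when $\delta = (1-\epsilon)/2$, which (together with $\alpha \le |\mathcal{Y}|$) reproduces the second $\sqrt{n}$ term in \eqref{e_dsource}.

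The main substantive obstacle is the final minimax exchange needed to convert the Lagrangian above into the constrained infimum stated in the theorem. Taking $\sup_{c>0}$ on the right gives, by Legendre duality,
\[
\sup_{c>0}\!\left[\psi(1/c) - \tfrac{R}{c}\right] \;=\; \inf_{U\colon I(U;X)\le R}\! H(Y|U),
\]
which relies on the concavity of $\lambda\mapsto\psi(\lambda)$ (as the infimum of affine functions in $\lambda$) and the convexity of $R\mapsto\inf\{H(Y|U):I(U;X)\le R\}$ established by a standard time-sharing argument. The residual $c$-dependence of the error term, $A_0/c = (\ln\alpha + \ln\beta_X + \tfrac{\ln\beta_X}{c})\sqrt{3\beta_X\ln(|\mathcal{X}|/\delta)}$, stays bounded by $2\ln(|\mathcal{Y}|\beta_X)\sqrt{3\beta_X\ln(|\mathcal{X}|/\delta)}$ for $c\ge 1$; since the Lagrangian value is attained on a compact range of $c$, the $\sup_c$ passes through the $O(\sqrt{n})$ slack, exactly as in Step~3 of the proof of Theorem \ref{thm_change_hypo}. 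Substituting back and tracking the constants (with an appropriate choice of $\delta$ of order $1-\epsilon$) yields precisely \eqref{e_dsource}.
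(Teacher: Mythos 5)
Your proof is correct, and it rests on the same machinery as the paper's --- the second-order image size characterization with the typical-set restriction $\mu_n=Q_X^{\otimes n}|_{\mathcal{C}_n}$, the counting bound $\nu^{\otimes n}[\mathcal{B}_{w_1}]\le M_2|\mathcal{Y}|^{-n}$ for uniform $\nu$, and the concluding Lagrangian duality over $c\ge 1$ --- but you execute the reduction differently. The paper applies Corollary~\ref{cor_change_discrete} to a single set: it first converts the average success probability into a high-probability statement via Markov's inequality, $Q_{X^n}[Q_{Y^n|X^n}[\mathcal{B}_{f(X^n)}]\ge 1-\epsilon']\ge 1-\epsilon/\epsilon'$, and then pigeonholes over the $M_1$ messages to extract one index $i^*$ with $\mu_n[\,\cdot\,]\ge (1-\epsilon/\epsilon'-\delta)/M_1$. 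You instead transplant the route used for Theorem~\ref{thm_change_hypo}: apply the smoother Corollary~\ref{cor_alternative} to every slice $\mathcal{B}_{w_1}$, sum over $w_1$ (which is where the $\ln M_1$ enters), and lower-bound the sum by $(1-\epsilon-\delta)^{c(1+1/t)}$ via the pointwise bound and Jensen's inequality for the sub-probability measure $\mu_n$. This bypasses both the Markov step and any expurgation --- the latter is indeed unnecessary here because the bound $\nu^{\otimes n}[\mathcal{B}_{w_1}]\le M_2|\mathcal{Y}|^{-n}$ holds deterministically for every $w_1$, unlike the probabilistic bound needed in the hypothesis-testing application --- and it yields marginally cleaner constants (your residual additive term is $-\ln\frac{2}{1-\epsilon}$ versus the paper's $-\ln\frac{2}{1-\epsilon}-\frac{1}{c}\ln\frac{4}{1-\epsilon}$), all comfortably absorbed by the stated bound. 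The remaining steps --- the identity ${\rm d}^{\star}(Q_X,Q_{Y|X},\nu,c)=c\ln|\mathcal{Y}|-c\,\psi(1/c)$, the harmlessness of restricting to $c\ge 1$ (since $\psi(\lambda)=H(Y)$ for $\lambda\ge 1$ by data processing), and the convex-duality exchange passing through the uniformly bounded $O(\sqrt{n})$ slack --- coincide with the paper's Step~3 of Theorem~\ref{thm_change_hypo}, which the paper's own proof of this theorem also invokes verbatim.
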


Note that the first term on the right-hand side of \eqref{e_dsource} 
corresponds precisely to the rate region for the present problem
(see, e.g., \cite[Theorem~10.2]{el2011network}).

\begin{proof}
The proof follows from Corollary~\ref{cor_change_discrete} using similar
steps as in \cite[Theorem~3]{ahlswede_bounds_cond1976}.
Define for every $i\in[M_1]$ the set
\begin{align}
	\mathcal{B}_i := \{
	y^n\in\mathcal{Y}^n\colon
	y^n=\hat Y^n(i,g(y^n))\}
\end{align}
of correctly decoded sequences $y^n$, given that the side information
message $i$ was received. Then we have by assumption
\begin{align}
	Q_{X^n}[Q_{Y^n|X^n}[\mathcal{B}_{f(X^n)}]]\ge 1-\epsilon.
\end{align}
Thus, for any $\epsilon'\in (\epsilon,1)$, we obtain
\begin{align}
	Q_{X^n}[Q_{Y^n|X^n}[\mathcal{B}_{f(X^n)}]\ge 
	1-\epsilon']
	\ge 1-\frac{\epsilon}{\epsilon'}
\end{align}
by applying Markov's inequality to
$Q_{X^n}[1-Q_{Y^n|X^n}[\mathcal{B}_{f(X^n)}]>\epsilon']$.

Fix for the time being any $\delta\in(0,1-\frac{\epsilon}{\epsilon'})$ 
such that $n>3\beta_X\ln\frac{|\mathcal{X}|}{\delta}$, and define
$\mu_n$ as in Corollary~\ref{cor_change_discrete}. Then
\begin{align}
	\mu_n[Q_{Y^n|X^n}[\mathcal{B}_{f(X^n)}]\ge 
	1-\epsilon']
	\ge 1-\frac{\epsilon}{\epsilon'}-\delta.
\end{align}
As $f$ takes at most $M_1$ possible values, there exists $i^*$ such 
that
\begin{align}
	\mu_n[Q_{Y^n|X^n}[\mathcal{B}_{i^*}]\ge
        1-\epsilon'] \ge
	\frac{1-\frac{\epsilon}{\epsilon'}-\delta}{M_1}
\end{align}
by the union bound. On the other hand, if we let $\nu$ be the uniform 
distribution on $\mathcal{Y}$, then we can estimate by the definition of 
$\mathcal{B}_i$
\begin{align}
	\nu^{\otimes n}[\mathcal{B}_{i}]
	=|\mathcal{Y}|^{-n}|\mathcal{B}_{i}|\le
	M_2|\mathcal{Y}|^{-n}
\end{align}
for every $i$. Applying Corollary~\ref{cor_change_discrete} with
$f=1_{\mathcal{B}_{i^*}}$, $\eta=1-\epsilon'$ yields 
\begin{align}
\nonumber
	&\ln\frac{1-\frac{\epsilon}{\epsilon'}-\delta}{M_1}
	- c\ln M_2|\mathcal{Y}|^{-n} \\
	&
	\le
	\ln \mu_n[Q_{Y^n|X^n}[\mathcal{B}_{i^*}]\ge 1-\epsilon']
	- c\ln\nu^{\otimes n}[\mathcal{B}_{i^*}]
	\\
	&
	\le
	n\,{\rm d}^\star(Q_X,Q_{Y|X},\nu,c) + A\sqrt{n}
	+ c\ln\frac{1}{1-\epsilon'},
\end{align}
where $A$ is defined in \eqref{eq_cor_chg_disc_A}. Rearranging yields
\begin{align}
\nonumber
	\ln M_1 + c\ln M_2 
	\ge &\mbox{}
	-n\{{\rm d}^\star(Q_X,Q_{Y|X},\nu,c)-c\ln|\mathcal{Y}|\}
	\\
\nonumber
	&\mbox{}
	- \sqrt{n}\bigg(
	\ln (|\mathcal{Y}|^c\beta_X^{c+1})
	\sqrt{3\beta_X\ln\frac{4|\mathcal{X}|}{1-\epsilon}}
	+
	2c\sqrt{|\mathcal{Y}|\ln\frac{2}{1-\epsilon}}
	\bigg)
	\\
	&\mbox{}
	- c\ln\frac{2}{1-\epsilon}
	- \ln\frac{4}{1-\epsilon},
\label{eq_source_raw}
\end{align}
for $n>3\beta_X\ln\frac{4|\mathcal{X}|}{1-\epsilon}$,
where we made the choices
$\epsilon'=\frac{1+\epsilon}{2}$ and
$\delta=\frac{1}{2}(1-\frac{\epsilon}{\epsilon'})$.

It remains to manipulate the first-order terms. Note first that
\begin{align}
	{\rm d}^\star(Q_X,Q_{Y|X},\nu,c)-c\ln|\mathcal{Y}|
	&= 
	\sup_{U\colon U-X-Y}\{-cH(Y|U)-I(U;X)\} \\
	&=
	{\rm d}^\star(Q_X,Q_{Y|X},Q_Y,c) - cH(Q_Y),
\end{align}
which follows readily from Definition \ref{defn_dstar} and
\eqref{eq_dstarmut}. We may therefore follow verbatim the arguments
in Step 3 of the proof of Theorem \ref{thm_change_hypo} to optimize
\eqref{eq_source_raw} over $c\ge 1$, which readily completes the proof of
\eqref{e_dsource}.
\end{proof}

We now give a analogue of Theorem~\ref{thm_sidediscrete} for Gaussian 
sources under quadratic distortion.
We note again that the first-order term of \eqref{e_gsource} corresponds 
to the known rate region for this problem (e.g., let $D_2\to\infty$ in 
\cite[Theorem~12.3]{el2011network}).

\begin{thm}\label{thm_sidegaussian}
Let $\mathcal{X}=\mathcal{Y}=\mathbb{R}$,
$Q_{XY}=\mathcal{N}(0,\big(\begin{smallmatrix} 1 & \rho \\ \rho & 1
\end{smallmatrix}\big))$, and $D>0$, $\epsilon\in (0,1)$,
$n\ge 20\ln\frac{8}{1-\epsilon}$. Then for any encoders
$f\colon \mathbb{R}^n\to[M_1]$, $g\colon \mathbb{R}^n\to 
[M_2]$, and decoder $\hat Y^n\colon [M_1]\times[M_2]\to \mathbb{R}^n$
such that $\mathbb{P}[\|Y^n-\hat{Y}^n\|^2>nD] \le \epsilon$, we have
\begin{align}
\label{e_gsource}
	\ln M_2
	\ge 
	\frac{n}{2} \ln\frac{1-\rho^2+\rho^2e^{-\frac{2}{n}\ln M_1}}{D}
	-4\sqrt{\ln\frac{8}{1-\epsilon}}\sqrt{n}
	-2\ln\frac{4}{1-\epsilon}.
\end{align}
\end{thm}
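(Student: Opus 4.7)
The plan is to mirror the proof of Theorem \ref{thm_sidediscrete}, substituting the Gaussian image-size bound Corollary \ref{cor7} for the discrete Corollary \ref{cor_change_discrete} and taking $\nu$ to be Lebesgue measure on $\mathbb{R}$. Two ingredients are Gaussian-specific: the quadratic distortion criterion must be converted into a volume bound on the decoding set via Stirling's formula, and the first-order rate must be evaluated using the conditional entropy power inequality (EPI).

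For each $i\in[M_1]$, define the decoding set $\mathcal{B}_i := \{y^n\in\mathbb{R}^n : \|y^n - \hat Y^n(i,g(y^n))\|^2 \le nD\}$. Since there are only $M_2$ possible reconstructions $\hat Y^n(i,j)$, $\mathcal{B}_i$ is contained in a union of $M_2$ Euclidean balls of radius $\sqrt{nD}$, so Stirling's formula gives $\ln \nu^{\otimes n}(\mathcal{B}_i) \le \ln M_2 + \tfrac{n}{2}\ln(2\pi e D) + O(\ln n)$. The distortion hypothesis combined with Markov's inequality yields $Q_{X^n}[Q_{Y^n|X^n}(\mathcal{B}_{f(X^n)}) \ge 1-\epsilon'] \ge 1-\epsilon/\epsilon'$ for any $\epsilon'\in(\epsilon,1)$; restricting to the typical set $\mathcal{C}_n$ of Corollary \ref{cor7} (at cost $\delta$) and applying the union bound over $i\in[M_1]$ extracts some $i^\star$ with $\mu_n[Q_{Y^n|X^n}(\mathcal{B}_{i^\star})\ge 1-\epsilon'] \ge (1-\epsilon/\epsilon'-\delta)/M_1$, where $\mu_n := Q_X^{\otimes n}|_{\mathcal{C}_n}$.

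Applying Corollary \ref{cor7} to $f = 1_{\mathcal{B}_{i^\star}}$ with $\eta = 1-\epsilon'$ and substituting the volume bound rearranges, uniformly in $c>0$, to
\begin{equation*}
\ln M_1 + c\ln M_2 \ge -n\,{\rm d}^{\star}(Q_X,Q_{Y|X},\nu,c) - \tfrac{cn}{2}\ln(2\pi e D) - O(\sqrt{n}),
\end{equation*}
with the $O(\sqrt{n})$ absorbing the slack $\sqrt{6n\ln(2/\delta)} + c\sqrt{2n\ln(1/\eta)} + c\ln(1/\eta)$ of Corollary \ref{cor7} together with the Stirling remainder. Since $\nu$ is Lebesgue, $D(P_{Y|U}\|\nu|P_U) = -h(Y|U)$ and $D(P_{X|U}\|Q_X|P_U) = I(U;X)$, so ${\rm d}^{\star}(Q_X,Q_{Y|X},\nu,c) = \sup_{U-X-Y}\{-ch(Y|U) - I(U;X)\}$. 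Dividing by $c$ and optimizing over $c>0$ as in Step 3 of the proof of Theorem \ref{thm_change_hypo} reduces the first-order term, via Lagrangian duality (using that the resulting function of the rate constraint is convex and decreasing for Gaussian data), to $n\,\phi(R_1) - \tfrac{n}{2}\ln(2\pi e D)$, where $R_1 := \tfrac{1}{n}\ln M_1$ and $\phi(R) := \inf_{U:\,I(U;X)\le R} h(Y|U)$.

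The Gaussian-specific step is to evaluate $\phi(R_1)$ explicitly using the conditional EPI. Writing $Y = \rho X + \sqrt{1-\rho^2}\,W$ with $W\sim\mathcal{N}(0,1)$ independent of $(X,U)$ (using $W\perp X$ by construction, and $U-X-Y$ giving $W\perp U\mid X$, hence $W\perp(X,U)$ jointly), the conditional EPI yields
\begin{equation*}
e^{2h(Y|U)} \ge \rho^2 e^{2h(X|U)} + (1-\rho^2)e^{2h(W)} \ge 2\pi e\bigl(\rho^2 e^{-2R_1} + 1 - \rho^2\bigr),
\end{equation*}
where the second inequality uses $h(X|U) = h(X) - I(U;X) \ge \tfrac{1}{2}\ln(2\pi e) - R_1$, with equality attained in the limit by Gaussian $U$. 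Hence $\phi(R_1) = \tfrac{1}{2}\ln(2\pi e(1-\rho^2+\rho^2 e^{-2R_1}))$, and combining with $-\tfrac{n}{2}\ln(2\pi e D)$ cancels the $2\pi e$ factor to deliver the stated first-order rate $\tfrac{n}{2}\ln\tfrac{1-\rho^2+\rho^2 e^{-2R_1}}{D}$. The main remaining obstacle is bookkeeping: consolidating the various slack contributions into the clean second-order bound $-4\sqrt{n\ln(8/(1-\epsilon))} - 2\ln(4/(1-\epsilon))$ requires a judicious choice of $\epsilon',\delta$ (analogous to $\epsilon'=(1+\epsilon)/2,\delta=(1-\epsilon)/4$ in Theorem \ref{thm_sidediscrete}) and absorbing the $O(\ln n)$ Stirling remainder into the overall $O(\sqrt{n})$ term.
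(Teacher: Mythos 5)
Your proposal follows the paper's proof almost step for step: the same decoding sets $\mathcal{B}_i$, the same Markov-inequality/union-bound extraction of a good message $i^\star$ on the typical set of Corollary \ref{cor7}, the same application of that corollary with $\eta=1-\epsilon'$, and the same conjugate-duality reduction of $\inf_{c}\frac1c\{{\rm d}^{\star}(Q_X,Q_{Y|X},\nu,c)+\frac{c}{2}\ln(2\pi eD)+\frac1n\ln M_1\}$. The one genuinely different ingredient is the evaluation of the first-order term: you derive $h(Y|U)\ge\frac12\ln\bigl(2\pi e(1-\rho^2+\rho^2e^{-2R_1})\bigr)$ directly from the conditional entropy power inequality, whereas the paper invokes the Gaussian-extremality result of \cite[Theorem~14]{lccv2015} to reduce ${\rm d}^{\star}$ to a one-parameter optimization over $\sigma^2\in[0,1]$ and then applies convex duality to $\vartheta(x)=\ln\frac{1-\rho^2+\rho^2e^{-x}}{D}$. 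Your route is more self-contained and, for the converse direction, needs only the EPI inequality (not its tightness); the two are equivalent in substance, since the cited extremality result is itself EPI-based, and both reduce to the same one-dimensional duality.

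Two loose ends are worth tightening. First, there is no $O(\ln n)$ Stirling remainder to absorb: $\mathrm{Vol}(\mathrm{Ball}(0,\sqrt{nD}))\le(2\pi eD)^{n/2}$ holds exactly for every $n$ (it is equivalent to $\Gamma(\tfrac n2+1)\ge(\tfrac{n}{2e})^{n/2}$), and this clean form is what lets you land on the stated constants. Second, to keep the $c$-dependent slack of Corollary \ref{cor7}, after division by $c$, below $4\sqrt{n\ln\frac{8}{1-\epsilon}}+2\ln\frac{4}{1-\epsilon}$, the optimization must effectively be restricted to $c\ge1$, and one must then check that this restriction costs nothing at first order. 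The argument you borrow from Step 3 of Theorem \ref{thm_change_hypo} (that ${\rm d}^{\star}$ vanishes for $c\le 1$ by data processing) does not transfer when $\nu$ is Lebesgue; the paper instead observes that the one-dimensional objective is monotone in $\sigma^2$ for $c<1$, so the infimum is never attained there.
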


\begin{proof}
Define for every $i\in[M_1]$ the set
\begin{align}
	\mathcal{B}_i := \{
	y^n\in\mathbb{R}^n\colon
	\|y^n-\hat Y^n(i,g(y^n))\|^2\le nD
	\}
\end{align}
of correctly decoded sequences $y^n$, given that the side information 
message $i$ was received. Let $\epsilon'\in(\epsilon,1)$, 
$\delta\in(0,1-\frac{\epsilon}{\epsilon'})$ such that $n\ge 
20\ln\frac{2}{\delta}$, define $\mu_n$ as in Corollary \ref{cor7}, and let 
$\nu$ be the Lebesgue measure on $\mathbb{R}$.
Repeating verbatim the arguments in the proof of Theorem 
\ref{thm_sidediscrete}, we find that there exists $i^*$ such that
\begin{align}
	\mu_n[Q_{Y^n|X^n}[\mathcal{B}_{i^*}]\ge
        1-\epsilon'] \ge
	\frac{1-\frac{\epsilon}{\epsilon'}-\delta}{M_1}.
\end{align}
On the other hand, in the present setting, we estimate by the union bound
\begin{align}
	\nu^{\otimes n}[\mathcal{B}_i] \le
	M_2\mathrm{Vol}(\mathrm{Ball}(0,\sqrt{nD}))
	\le M_2(2\pi e D)^{n/2}.
\end{align}
Applying Corollary \ref{cor7} with 
$f=1_{\mathcal{B}_{i^*}}$, $\eta=1-\epsilon'$ and rearranging yields
\begin{align}
\nonumber
	\ln M_1+c\ln M_2
	\ge &\mbox{}
	-n\{{\rm d}^{\star}(Q_X,Q_{Y|X},\nu,c)
	+\tfrac{c}{2}\ln(2\pi e D)\}
	\\
\nonumber
	&\mbox{}
	-\sqrt{n}\bigg(\sqrt{6\ln\frac{8}{1-\epsilon}}
	+c\sqrt{2\ln\frac{2}{1-\epsilon}}\bigg)
	\\
	&\mbox{}
	-c\ln\frac{2}{1-\epsilon}
	-\ln\frac{4}{1-\epsilon}
\label{eq_sidegconv}
\end{align}
for $n\ge 20\ln\frac{8}{1-\epsilon}$,
where we made the choices
$\epsilon'=\frac{1+\epsilon}{2}$ and
$\delta=\frac{1}{2}(1-\frac{\epsilon}{\epsilon'})$.

It remains to manipulate the first-order terms. To this end, we first
note that using \eqref{eq_dstargauss} in
Appendix \ref{app_dgaussian} and the subsequent discussion, we may compute
\begin{align}
\nonumber
	&{\rm d}^{\star}(Q_X,Q_{Y|X},\nu,c) + \tfrac{c}{2}\ln(2\pi e D)
	\\
	&=
	\sup_{\sigma\in[0,1]}\bigg\{\frac{1}{2}\ln \sigma^2
	-\frac{c}{2}\ln\frac{1-\rho^2+\rho^2\sigma^2}{D}\bigg\}
\label{eq_insidesup}
	\\
	&=
	\frac{c}{2}\, \vartheta^*\bigg(-\frac{1}{c}\bigg),
\end{align}
where $\vartheta^*$ is the convex conjugate of the convex function 
$\vartheta(x) := \ln\frac{1-\rho^2+\rho^2e^{-x}}{D}$ for $x\ge 0$ and
$\vartheta(x) := +\infty$ for $x<0$. Therefore, by convex duality
\begin{align}
\label{eq_insideinf}
	&\inf_{c>0}
	\frac{1}{c}\bigg\{
	{\rm d}^{\star}(Q_X,Q_{Y|X},\nu,c) + \tfrac{c}{2}\ln(2\pi e D)
	+ \frac{1}{n}\ln M_1
	\bigg\} 
	\\
	&= - \frac{1}{2} \vartheta\bigg(\frac{2}{n}\ln M_1\bigg)
	= - \frac{1}{2} \ln\frac{1-\rho^2+\rho^2e^{-\frac{2}{n}\ln M_1}}{D}.
\label{eq_sidegrate}
\end{align}
We claim that the same conclusion follows if we take the infimum 
over $c\ge 1$ only. Indeed, it is readily verified that the function inside 
the supremum in \eqref{eq_insidesup} is increasing in $\sigma^2$ for any
$c<1$, so that ${\rm d}^{\star}(Q_X,Q_{Y|X},\nu,c) + \tfrac{c}{2}\ln(2\pi e 
D) = -\frac{c}{2}\ln\frac{1}{D}$ for $c<1$. Thus, the infimum in 
\eqref{eq_insideinf} cannot be attained for $c<1$. The proof is now readily 
completed using \eqref{eq_sidegrate} by optimizing
\eqref{eq_sidegconv} over $c\ge 1$.
\end{proof}

\section*{Acknowledgements}

Suggestions on the presentation of the paper by Yury Polyanskiy are
greatly appreciated. We thank Wei Yang for discussions on multiple
access channels.

\appendix

\section{Data processing cannot yield second-order converses}

\label{app_data}

The aim of this appendix is to explain the claim made in Section 
\ref{sec_bul} that second-order converses are fundamentally outside 
the reach of any method that is based on the data processing inequality: 
in particular, the inefficiency of the blowing-up method already appears 
in the very first step in the argument. This claim is little more than the 
Neyman-Pearson lemma in the following disguise.

\begin{lem}
\label{lem_impo}
Let $P,Q$ be probability measures on $\mathcal{Y}$ such that
$\big\|\ln\frac{dP}{dQ}\big\|_\infty<\infty$.
Then for every $\delta>0$, there exists $C_\delta>0$ independent
of $n$ so that any set
$\mathcal{\tilde{A}}\subseteq\mathcal{Y}^n$ with $P^{\otimes
n}[\mathcal{\tilde{A}}]\ge 1-n^{-\delta}$ must satisfy $\ln Q^{\otimes
n}[\mathcal{\tilde{A}}]\ge -nD(P\|Q)+C_\delta\sqrt{n\ln
n}$.
\end{lem}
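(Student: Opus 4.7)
The plan is to exploit the Neyman--Pearson structure of the problem through a direct change-of-measure argument combined with a moderate-deviation estimate for the log-likelihood ratio. Let $\imath_n := \ln\tfrac{dP^{\otimes n}}{dQ^{\otimes n}}$, which under $P^{\otimes n}$ is a sum of $n$ i.i.d.\ random variables with mean $nD := nD(P\|Q)$, variance $nV := nV(P\|Q)$, and (by hypothesis) per-term bounded in absolute value by $M := \|\ln\tfrac{dP}{dQ}\|_\infty$. I will tacitly assume $V>0$, since $V=0$ forces $P=Q$ and renders the conclusion vacuously incorrect. The single workhorse is the pointwise identity $dQ^{\otimes n}/dP^{\otimes n} = e^{-\imath_n}$, which for any threshold $\tau\in\mathbb{R}$ yields
\begin{equation*}
Q^{\otimes n}[\tilde{\mathcal{A}}] \ge \mathbb{E}_{P^{\otimes n}}\!\bigl[e^{-\imath_n} 1_{\tilde{\mathcal{A}}\cap\{\imath_n\le\tau\}}\bigr] \ge e^{-\tau}\, P^{\otimes n}\!\bigl[\tilde{\mathcal{A}}\cap\{\imath_n\le\tau\}\bigr].
\end{equation*}

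Next I will set $\tau = nD - C\sqrt{n\ln n}$ for a constant $C>0$ to be fixed and lower bound $P^{\otimes n}[\imath_n\le\tau]$ via Berry--Esseen. Since the summands of $\imath_n$ are bounded, Berry--Esseen gives $P^{\otimes n}[\imath_n \le nD - s] \ge \mathrm{Q}(s/\sqrt{nV}) - C_{\mathrm{BE}}/\sqrt{n}$, and the elementary tail estimate $\mathrm{Q}(x)\ge (c/x)e^{-x^2/2}$ for $x\ge 1$ gives, with $s=C\sqrt{n\ln n}$,
\begin{equation*}
\mathrm{Q}\!\bigl(C\sqrt{\ln n/V}\bigr) \ge \frac{c\sqrt{V}}{C\sqrt{\ln n}}\, n^{-C^2/(2V)}.
\end{equation*}
I pick $C>0$ small enough that $C^2/(2V) < \min(\delta,\tfrac12)$; the inequality $C^2/(2V)<\tfrac12$ ensures this dominates the $O(1/\sqrt{n})$ Berry--Esseen error, while $C^2/(2V)<\delta$ ensures it will dominate the exceptional mass $n^{-\delta}$ in the next step. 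For such $C$ one obtains $P^{\otimes n}[\imath_n\le\tau] \ge c'\, n^{-C^2/(2V)}/\sqrt{\ln n}$ for all sufficiently large $n$.

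Combining with the union bound $P^{\otimes n}[\tilde{\mathcal{A}}\cap\{\imath_n\le\tau\}] \ge P^{\otimes n}[\imath_n\le\tau] - n^{-\delta}$ (so the right-hand side is at least $\tfrac12 c'n^{-C^2/(2V)}/\sqrt{\ln n}$ for $n$ large) and substituting into the key inequality yields
\begin{equation*}
\ln Q^{\otimes n}[\tilde{\mathcal{A}}] \ge -\tau + \ln\!\bigl(\tfrac{c'}{2}\,n^{-C^2/(2V)}/\sqrt{\ln n}\bigr) = -nD + C\sqrt{n\ln n} - \tfrac{C^2}{2V}\ln n + O(1),
\end{equation*}
whence the $\sqrt{n\ln n}$ term dominates the $O(\ln n)$ correction and the conclusion follows with, say, $C_\delta := C/2$ for all sufficiently large $n$; the finitely many small $n$ are absorbed by decreasing $C_\delta$ further. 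The main technical ingredient is the moderate-deviation lower bound on the left tail of $\imath_n$, but under the bounded-density hypothesis this reduces to an elementary Berry--Esseen calculation and no appeal to finer moderate-deviation theorems is needed. Conceptually the estimate is tight: by Neyman--Pearson the extremal $\tilde{\mathcal{A}}$ is a threshold set $\{\imath_n\ge\tau^*\}$, and the same computation produces a matching upper bound, confirming that $\sqrt{n\ln n}$ is the correct order of the slack.
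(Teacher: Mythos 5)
Your proof is correct and it arrives at the same $\sqrt{n\ln n}$ lower bound as the paper, but by a somewhat different route. The paper first invokes the Neyman--Pearson lemma to reduce to the extremal threshold set $\{\imath_{P^{\otimes n}\|Q^{\otimes n}}\ge nD-\gamma\}$ and then applies Kolmogorov's exponential lower bound for tails of sums of bounded independent variables twice: once to force $\gamma\gtrsim\sqrt{\delta V\, n\ln n}$, and once to lower bound the $P$-probability of a slice of the likelihood ratio on which $e^{-\imath}$ is controlled. You dispense with the Neyman--Pearson reduction entirely: the change-of-measure inequality $Q^{\otimes n}[\tilde{\mathcal{A}}]\ge e^{-\tau}P^{\otimes n}[\tilde{\mathcal{A}}\cap\{\imath_n\le\tau\}]$ applies to an arbitrary admissible set, and a union bound disposes of the exceptional mass $n^{-\delta}$; in place of Kolmogorov's bound you use Berry--Esseen plus the elementary Gaussian tail lower bound, which suffices because the deviation $C\sqrt{n\ln n}$ is only $O(\sqrt{\ln n})$ standard deviations, so the Gaussian tail $\asymp n^{-C^2/(2V)}$ dominates the $n^{-1/2}$ Berry--Esseen error once $C^2<V$. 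Your bookkeeping of the constraints $C^2/(2V)<\min(\delta,\tfrac12)$ is exactly what is needed, and the resulting constant is of the same quality as the paper's. Two caveats, both shared with the paper rather than gaps in your argument: the statement implicitly requires $V(P\|Q)>0$, which you correctly flag; and the conclusion really holds only for $n$ sufficiently large (depending on $P$, $Q$, $\delta$). Your remark that the finitely many small $n$ can be absorbed by shrinking $C_\delta$ is not quite right --- for a fixed small $n$ one can make $Q^{\otimes n}[\tilde{\mathcal{A}}]$ far smaller than $e^{-nD}$ by taking $\|\ln\frac{dP}{dQ}\|_\infty\gg D(P\|Q)$ --- but the paper's own proof carries the same restriction to large $n$, and the lemma is only used asymptotically.
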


Let us first explain why this lemma rules out obtaining sharp bounds by
data processing. Suppose we start our argument by applying the data processing
inquality \eqref{uz} to an arbitrary set $\mathcal{\tilde A}$. Then we
obtain (cf.\ \eqref{eq_weakblup})
\begin{equation}
	\ln Q^{\otimes n}[\mathcal{\tilde A}] \ge
	-n\frac{D(P\|Q)}{P^{\otimes n}[\mathcal{\tilde A}]} - \ln 2.
\end{equation}
In order to deduce from this a non-asymptotic converse with $\sim\sqrt{n}$
second-order term, we must introduce some operation
$\mathcal{A}\mapsto\mathcal{\tilde A}$ that associates to \emph{any}
deterministic test
$\mathcal{A}$ with $P^{\otimes n}[\mathcal{A}]\ge 1-\epsilon$ a
set $\mathcal{\tilde A}$ with the following properties:
\begin{equation}
	P^{\otimes n}[\mathcal{\tilde A}]\ge 1-O(n^{-1/2}),\qquad
	Q^{\otimes n}[\mathcal{\tilde A}]\le e^{O(\sqrt{n})}
	Q^{\otimes n}[\mathcal{A}].
\end{equation}
But Lemma \ref{lem_impo} shows such an operation cannot exist. Indeed, let
$\mathcal{A}$ be a test with $P^{\otimes n}[\mathcal{A}]=1-\epsilon$ and
$\ln Q^{\otimes n}[\mathcal{A}]\le -nD(P\|Q) + O(\sqrt{n})$ as in
Lemma \ref{lem_bhcach}. Then by Lemma \ref{lem_impo}, any set
$\mathcal{\tilde A}$ such that $P^{\otimes n}[\mathcal{\tilde A}]\ge
1-O(n^{-1/2})$ satisfies
\begin{equation}
	Q^{\otimes n}[\mathcal{\tilde A}] \ge
	e^{-nD(P\|Q)+C\sqrt{n\ln n}}
	\ge
	e^{C\sqrt{n\ln n}+O(\sqrt{n})}Q^{\otimes n}[\mathcal{A}],
\end{equation}
which contradicts the desired property of the mapping
$\mathcal{A}\mapsto\mathcal{\tilde A}$. The same argument shows
that converses obtained from the data processing inequality can
attain at best a second-order term of order no smaller than
$\sim\sqrt{n\log n}$.

We conclude this appendix with the proof of Lemma \ref{lem_impo}.

\begin{proof}[Proof of Lemma \ref{lem_impo}]
By the Neyman-Pearson lemma, among all sets $\mathcal{\tilde A}$ such that
$P^{\otimes n}[\mathcal{\tilde A}]\ge 1-n^{-\delta}$, the probability
$Q^{\otimes n}[\mathcal{\tilde A}]$ is minimized by
\begin{equation}
	\mathcal{\tilde A} = \{y^n\in\mathcal{Y}^n:
	\imath_{P^{\otimes n}\|Q^{\otimes n}}(y^n) \ge
	nD(P\|Q) - \gamma\},
\end{equation}
where $\gamma$ is chosen so that $P^{\otimes n}[\mathcal{\tilde A}]=
1-n^{-\delta}$. It therefore suffices to prove a suitable lower bound on
$Q^{\otimes n}[\mathcal{\tilde A}]$ for this particular choice of
$\mathcal{\tilde A}$.

We will twice use the classical Kolmogorov lower bound for the tail
of sums of independent random variables \cite[p.\ 266]{pollard}.
First, this bound implies
\begin{equation}
\label{eq_kolmo}
	P^{\otimes n}[\imath_{P^{\otimes n}\|Q^{\otimes n}}\ge
	nD(P\|Q)-\sqrt{\delta V(P\|Q)}\sqrt{n\ln n}]
	\le 1-n^{-\delta}
\end{equation}
for $n$ sufficiently large (depending on
$\delta,V(P\|Q),\|\ln\frac{dP}{dQ}\|_\infty$), where
$V(P\|Q):=\mathrm{Var}_P(\ln\frac{dP}{dQ})$. This shows that we must have
$\gamma \ge \gamma_n := \sqrt{\delta V(P\|Q)}\sqrt{n\ln n}$.

On the other hand, we can now estimate
\begin{align}
	Q^{\otimes n}[\mathcal{\tilde A}] &=
	P^{\otimes n}\big(e^{-\imath_{P^{\otimes n}\|Q^{\otimes n}}}
	1_{\{\imath_{P^{\otimes n}\|Q^{\otimes n}}\ge
        nD(P\|Q)-\gamma\}}\big)
	\\
	&\ge
	\nonumber
	e^{-nD(P\|Q)+\sqrt{\delta V(P\|Q)/4}\sqrt{n\ln n}}\times
	\\
	&\qquad
	P^{\otimes n}[nD(P\|Q)-\tfrac{1}{2}\gamma_n \ge \imath_{P^{\otimes n}\|Q^{\otimes n}}
		\ge nD(P\|Q)-\gamma].
\end{align}
But applying again \eqref{eq_kolmo}, we find that
\begin{align}
	&
	P^{\otimes n}[nD(P\|Q)-\tfrac{1}{2}\gamma_n \ge \imath_{P^{\otimes n}\|Q^{\otimes n}}
		\ge nD(P\|Q)-\gamma]\\
	&\ge
	1-n^{-\delta}
	- P^{\otimes n}[
	\imath_{P^{\otimes n}\|Q^{\otimes n}} \ge nD(P\|Q)-\tfrac{1}{2}\gamma_n]
	\ge
	n^{-\delta/4}-n^{-\delta}.	
\end{align}
Consequently $\ln Q^{\otimes n}[\mathcal{\tilde A}]
\ge -nD(P\|Q) + \sqrt{\delta V(P\|Q)/4}\sqrt{n\ln n} + O(\ln n)$.
\end{proof}

\section{Brascamp-Lieb divergence: auxiliary results}

\label{app_proofs}

\subsection{Proof of \eqref{e_49} in the discrete case}\label{app_sbl}

The aim of this section is to prove the following result.

\begin{thm}\label{thm_sbl}
Let $|\mathcal{X}|<\infty$, $Q_X$ be a probability measure on 
$\mathcal{X}$, $\nu$ be a probability measure on $\mathcal{Y}$,
and $Q_{Y|X}$ be a random transformation.
Define
\begin{align}
	\beta_X & := \frac{1}{\min_x Q_X(x)}, \\
	\alpha_Y & :=
	\left\|\frac{dQ_Y}{d\nu}\right\|_{\infty}.
\end{align}
Then for every $\delta\in (0,1)$ and 
$n>3\beta_X\ln\frac{|\mathcal{X}|}{\delta}$,
we may choose a set $\mathcal{C}_n\subseteq\mathcal{X}^n$ with
$Q_X^{\otimes n}[\mathcal{C}_n]\ge 1-\delta$ such that 
\begin{align}
\nonumber
	&{\rm d}(\mu_n,Q_{Y|X}^{\otimes n},\nu^{\otimes n},c)
	\\ &\qquad\le
	n\, {\rm d}^{\star}(Q_X,Q_{Y|X},\nu,c)
	+\ln (\alpha_Y^c\beta_X^{c+1})
	\sqrt{3n\beta_X\ln\frac{|\mathcal{X}|}{\delta}}
\label{e147}
\end{align}
for every $c>0$, where we defined $\mu_n:=Q_X^{\otimes n}|_{\mathcal{C}_n}$. 
\end{thm}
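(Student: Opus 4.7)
The proof plan combines concentration of the empirical type of a $Q_X^{\otimes n}$-distributed sample (to justify restricting to a typical set) with a single-letterization via an auxiliary time-and-past variable, followed by a mild perturbation that rebalances the $X$-marginal to match the constraint in Definition \ref{defn_dstar}.

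For the typical set, set $\epsilon := \sqrt{3\beta_X n^{-1}\ln(|\mathcal{X}|/\delta)}$ and define
\[
\mathcal{C}_n := \{x^n\in\mathcal{X}^n : |\hat P_{x^n}(x)-Q_X(x)|\le \epsilon\,Q_X(x)\text{ for every }x\in\mathcal{X}\},
\]
where $\hat P_{x^n}$ is the empirical distribution. A multiplicative Chernoff bound applied coordinatewise to $\hat P_{X^n}(x)$ and a union bound over $\mathcal{X}$ yield $Q_X^{\otimes n}[\mathcal{C}_n]\ge 1-\delta$ under the hypothesis $n > 3\beta_X\ln(|\mathcal{X}|/\delta)$. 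Take $\mu_n := Q_X^{\otimes n}|_{\mathcal{C}_n}$; any $P_{X^n}$ approaching the supremum in Definition \ref{defn_d} must be supported on $\mathcal{C}_n$, on which $D(P_{X^n}\|\mu_n)=D(P_{X^n}\|Q_X^{\otimes n})$.

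Next I would introduce the auxiliary $U := (T,X^{T-1})$ with $T$ uniform on $\{1,\dots,n\}$ independent of $X^n$. The chain rule for relative entropy gives $D(P_{X^n}\|Q_X^{\otimes n}) = n\,D(P_{X|U}\|Q_X|P_U)$. For the output side, the memoryless structure $P_{Y^n|X^n}=\prod_i Q_{Y_i|X_i}$ forces the Markov chain $Y_i - X^{i-1} - Y^{i-1}$, so $H(Y^n)\ge\sum_i H(Y_i|X^{i-1})$, which in turn gives $D(P_{Y^n}\|\nu^{\otimes n}) \le n\,D(P_{Y|U}\|\nu|P_U)$. Combining,
\[
c\,D(P_{Y^n}\|\nu^{\otimes n}) - D(P_{X^n}\|Q_X^{\otimes n}) \le n[c\,D(P_{Y|U}\|\nu|P_U) - D(P_{X|U}\|Q_X|P_U)].
\]
The $X$-marginal of the auxiliary joint is $P_X(x)=\mathbb{E}\hat P_{X^n}(x)$, and $\mathrm{supp}(P_{X^n})\subseteq\mathcal{C}_n$ forces the pointwise sandwich $P_X(x)\in[(1-\epsilon)Q_X(x),(1+\epsilon)Q_X(x)]$, so $P_X$ is close to $Q_X$ but does not in general satisfy the marginal constraint $P_X=Q_X$ of Definition \ref{defn_dstar}.

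To close the gap I would construct $P'_{UX}$ from $P_{UX}$ via composition with a TV-optimal coupling that pushes $P_X$ to $Q_X$, producing a perturbation of total variation at most $\epsilon$. Since $P'_X=Q_X$, the perturbed functional is dominated by ${\rm d}^\star(Q_X,Q_{Y|X},\nu,c)$; the bounded-density hypotheses $dQ_{Y|X=x}/d\nu\le\alpha_Y$ and $1/Q_X\le\beta_X$ should then control the change in $c\,D(P_{Y|U}\|\nu|P_U)-D(P_{X|U}\|Q_X|P_U)$ caused by the perturbation by $\epsilon\log(\alpha_Y^c\beta_X^{c+1})$. Multiplying by $n$ and using $n\epsilon = \sqrt{3n\beta_X\ln(|\mathcal{X}|/\delta)}$ then yields \eqref{e147}. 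The main obstacle is precisely this last quantitative perturbation estimate: while it is intuitively clear that rebalancing an $O(\epsilon)$ amount of mass perturbs both relative entropies by a comparable amount, obtaining the precise prefactor $\log(\alpha_Y^c\beta_X^{c+1})$ requires careful accounting of how the transported mass interacts with the density bounds---both directly on the $X$-side (contributing a factor involving $\beta_X$) and after pushforward through $Q_{Y|X}$ on the $Y$-side (contributing factors involving $\alpha_Y^c$ from the $c$-weighted $Y$-divergence together with an additional power $\beta_X^c$ from the relative density of the perturbed joint).
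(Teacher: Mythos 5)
Your skeleton matches the paper's: a typical set defined by multiplicative closeness of the empirical distribution to $Q_X$, the identity $D(P_{X^n}\|\mu_n)=D(P_{X^n}\|Q_X^{\otimes n})$ on the support, and single-letterization via $U=(T,X^{T-1})$ with the chain rule and the Markov chain $Y_i-X^{i-1}-Y^{i-1}$, reducing everything to bounding ${\rm d}^{\star}(P_{X_T},Q_{Y|X},\nu,c)$ for a marginal $P_{X_T}$ that is only close to $Q_X$. Up to that point the argument is sound.

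The genuine gap is exactly the step you flag as ``the main obstacle'': the quantitative continuity of $\phi(P_X):={\rm d}^{\star}(P_X,Q_{Y|X},\nu,c)$ in $P_X$. Your proposed route --- composing $P_{UX}$ with a TV-optimal coupling pushing $P_X$ to $Q_X$ and then bounding the change in $c\,D(P_{Y|U}\|\nu|P_U)-D(P_{X|U}\|Q_X|P_U)$ --- is not carried out, and it is not a routine estimate: relative entropy is not controlled by total variation in general, so one must exploit the density bounds in a specific way, and it is not clear the coupling route yields the prefactor $\ln(\alpha_Y^c\beta_X^{c+1})$. The paper avoids transport altogether. It proves (Lemma \ref{lem_us}) that if $P_X\le(1+\epsilon)Q_X$ then $\phi(P_X)\le\phi(Q_X)+\epsilon\ln(\beta_X^{c+1}\alpha_Y^c)$ by \emph{augmenting the auxiliary variable}: one adds a new symbol $\star$ to $\mathcal{U}$ with weight $\tfrac{\epsilon}{1+\epsilon}$ and conditional law $P_{\tilde X|\tilde U=\star}=\tfrac{1+\epsilon}{\epsilon}\bigl(Q_X-\tfrac{1}{1+\epsilon}P_X\bigr)$, which is a bona fide probability measure precisely because of the one-sided domination $P_X\le(1+\epsilon)Q_X$. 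Then $P_{\tilde X}=Q_X$ exactly, the conditional relative entropies decompose \emph{linearly} over the mixture in $U$, and the two correction terms are bounded by $\ln\beta_X$ and $\ln(\beta_X\alpha_Y)$ via $D(\cdot\|Q_X)\le\ln\beta_X$ and data processing. Note this is also why the paper's typical set is one-sided, $\mathcal{C}_n=\{x^n:\hat P_{x^n}\le(1+\epsilon_n)Q_X\}$: only the upper bound is needed, and your two-sided set costs an extra factor of $2$ in the union bound that is incompatible with the stated hypothesis $n>3\beta_X\ln\tfrac{|\mathcal{X}|}{\delta}$. To complete your proof you would either need to supply the missing perturbation estimate for the coupling construction, or adopt the $\star$-augmentation device.
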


Let us fix in the sequel $(Q_X,Q_{Y|X},\nu)$ and $c>0$, and define
\begin{equation}
	\phi(P_X) :=
	{\rm d}^{\star}(P_X,Q_{Y|X},\nu,c)
\label{eq_dstarphi}
\end{equation}
for any $P_X\ll Q_X$.
The idea behind the proof Theorem \ref{thm_sbl} is roughly as follows. 
Using the chain rule of relative entropy, we will show that ${\rm 
d}(\mu_n,Q_{Y|X}^{\otimes n},\nu^{\otimes n},c)$ can be bounded by
a quantity of the form $n\phi(P_X)$, where $P_X$ is a mixture of empirical 
measures of sequences in the support of $\mu_n$. 
Thus, if we choose $\mathcal{C}_n$ to be the set of sequences with 
empirical measure close to $Q_X$, then $Q_X^{\otimes 
n}[\mathcal{C}_n]$ will be large by the law of large numbers and 
$\phi(P_X)\approx\phi(Q_X)={\rm d}^{\star}(Q_X,Q_{Y|X},\nu,c)$,
completing the proof.

To make these ideas precise, we require quantitative forms of the 
continuity of $\phi$ and of the law of large numbers. The former is 
provided by the following lemma. Here we adopt the same notations as
in Theorem \ref{thm_sbl}.

\begin{lem}\label{lem_us}
If $P_X\le (1+\epsilon)Q_X$ for some $\epsilon\in [0,1)$, then
\begin{align}
	\phi(P_X)\le \phi(Q_X)+
	\epsilon \ln(\beta_X^{c+1}\alpha_Y^c).
\end{align}
\end{lem}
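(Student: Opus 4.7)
My plan is to interpret Lemma~\ref{lem_us} as a continuity estimate for ${\rm d}^{\star}$ in its first argument, and to exploit a hidden concavity in that argument. First I would establish the key decomposition
\[
\phi(M) \;=\; \Phi(M) + D(M\|Q_X),
\]
where $\Phi(M)$ is defined by replacing the second relative entropy in Definition~\ref{defn_dstar} by $D(P_{X|U}\|Q_X|P_U)$ (rather than $D(P_{X|U}\|M|P_U)$). This follows from the elementary identity $D(P_{X|U}\|M|P_U) = D(P_{X|U}\|Q_X|P_U) - D(M\|Q_X)$, taken outside the supremum. The point of this rewriting is that $\Phi(M)$ is the upper concave envelope, in $M$, of the functional
\[
\psi'(P) \;:=\; c D(PQ_{Y|X}\|\nu) - D(P\|Q_X),
\]
evaluated at $M$, and so is automatically concave in $M$.

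Second, since $P_X \le (1+\epsilon)Q_X$, the measure $Q_X' := ((1+\epsilon)Q_X - P_X)/\epsilon$ is a bona fide probability measure, and $Q_X$ admits the convex decomposition $Q_X = \tfrac{1}{1+\epsilon}P_X + \tfrac{\epsilon}{1+\epsilon}Q_X'$. Concavity of $\Phi$ then yields
\[
\Phi(P_X) - \Phi(Q_X) \;\le\; \epsilon\bigl[\Phi(Q_X) - \Phi(Q_X')\bigr],
\]
and combining with the trivial estimate $D(P_X\|Q_X) \le \ln(1+\epsilon) \le \epsilon$ reduces the proof to bounding $\Phi(Q_X) - \Phi(Q_X')$ by $\ln(\beta_X^{c+1}\alpha_Y^c)$ (up to an $O(1)$ correction absorbed in the other pieces).

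Third, the last step rests on two uniform pointwise estimates: for any probability measure $M$ on $\mathcal{X}$, one has $M(x) \le 1$, so
\[
\frac{(M Q_{Y|X})(y)}{\nu(y)} \;\le\; \frac{\sum_x Q_{Y|X}(y|x)}{\nu(y)} \;\le\; \beta_X\,\frac{Q_Y(y)}{\nu(y)} \;\le\; \beta_X\alpha_Y,
\]
using $Q_Y(y) \ge (\min_x Q_X(x))\sum_x Q_{Y|X}(y|x)$ in the middle step. This bound controls $D(M Q_{Y|X}\|\nu)$ in terms of $\ln(\beta_X\alpha_Y)$ for every probability measure $M$ arising in the variational formulations of $\Phi(Q_X)$ and $\Phi(Q_X')$, and feeds into the $c D(\cdot\|\nu)$ contribution. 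The $D(\cdot\|Q_X)$ contribution is controlled similarly through $Q_X(x) \ge 1/\beta_X$.

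The main obstacle is a careful accounting of cancellations: taken in isolation, both $\epsilon D(Q_X'\|Q_X)$ and $\epsilon cD(Q_Y'\|\nu)$ carry undesirable $\epsilon\ln(1/\epsilon)$ contributions, since $Q_X'/Q_X$ can be as large as $(1+\epsilon)/\epsilon$. These must be paired so that the $\ln(1/\epsilon)$ pieces cancel, leaving only the clean constants $c\ln\alpha_Y$ and $(c+1)\ln\beta_X$ that appear in the statement. An alternative route I would also try is to apply the supergradient bound $\Phi(P_X) - \Phi(Q_X) \le \langle \lambda,P_X - Q_X\rangle$ at a supergradient $\lambda$ of the concave function $\Phi$ at $Q_X$, combined with the elementary observation that $\sum(P_X - Q_X)=0$ and $(P_X - Q_X)_+ \le \epsilon Q_X$ yield $\langle \lambda,P_X - Q_X\rangle \le \epsilon(\max\lambda - \min\lambda)$; the range of $\lambda$ can then be computed from the KKT conditions for $\Phi(Q_X)$ and bounded explicitly by $\ln(\beta_X^{c+1}\alpha_Y^c)$ using the same pointwise estimates.
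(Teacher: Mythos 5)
Your argument is correct, and at bottom it is the same proof as the paper's: the concavity inequality $\Phi(Q_X)\ge\tfrac{1}{1+\epsilon}\Phi(P_X)+\tfrac{\epsilon}{1+\epsilon}\Phi(Q_X')$ is realized in the paper by an explicit witness --- take a (near-)maximizer $P_{UX}$ for $\phi(P_X)$ and augment $\mathcal{U}$ with one extra symbol $\star$ carrying mass $\tfrac{\epsilon}{1+\epsilon}$ and conditional law $Q_X'=\tfrac{1}{\epsilon}((1+\epsilon)Q_X-P_X)$, so that the new $X$-marginal is exactly $Q_X$ --- and the two uniform estimates you isolate ($D(P\|Q_X)\le\ln\beta_X$ for every probability measure $P$, and $D(MQ_{Y|X}\|\nu)\le\ln(\beta_X\alpha_Y)$) are precisely the ones used there. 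So the concave-envelope framing buys nothing new, but it is a clean way to see why the $\star$-augmentation is the right construction.

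Two remarks. First, the ``careful accounting of cancellations'' you anticipate as the main obstacle is not actually needed: although $dQ_X'/dQ_X$ can be of order $1/\epsilon$, one has $D(Q_X'\|Q_X)=\sum_xQ_X'(x)\ln\tfrac{Q_X'(x)}{Q_X(x)}\le\sum_xQ_X'(x)\ln\tfrac{1}{Q_X(x)}\le\ln\beta_X$ using only $Q_X'(x)\le 1$, uniformly in $\epsilon$; likewise $0\le D(Q_X'Q_{Y|X}\|\nu)\le\ln(\beta_X\alpha_Y)$ since $\nu$ is a probability measure. Hence $\Phi(Q_X)-\Phi(Q_X')\le c\ln(\beta_X\alpha_Y)+\ln\beta_X=\ln(\beta_X^{c+1}\alpha_Y^c)$ directly, with no $\epsilon\ln(1/\epsilon)$ terms to cancel. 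Second, your decomposition $\phi=\Phi+D(\cdot\|Q_X)$ leaves a residual $D(P_X\|Q_X)\le\ln(1+\epsilon)\le\epsilon$ that is not covered by the stated constant. The cleanest fix is to drop the decomposition and prove the lemma for $\Phi$ itself, i.e.\ with the reference measure frozen at $Q_X$: that is all that is used downstream, since the chain-rule estimate \eqref{e105} in the proof of Theorem \ref{thm_sbl} is naturally an upper bound by $n\Phi(P_{X_I})$, and the endpoint $\Phi(Q_X)={\rm d}^{\star}(Q_X,Q_{Y|X},\nu,c)$ is unchanged. (Read literally, the paper's own proof carries the same $D(P_X\|Q_X)$ slack, for the same reason.)
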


\begin{proof}
Let $P_X$ be as in the statement, and let $P_{U|X}$ be a maximizer for 
\eqref{eq_dstar} (we assume its existence for notational simplicity only;
if a maximizer does not exist, the argument is readily adapted to work
with near-maximizers). We now modify this distribution by allowing $U$ to 
take an additional value as follows: let
$\tilde{\mathcal{U}}=\mathcal{U}\cup\{\star\}$,
$\tilde{\mathcal{X}}=\mathcal{X}$, and define
$P_{\tilde{U}\tilde{X}}$ by
\begin{align}
	P_{\tilde{U}}
	&:=\frac{1}{1+\epsilon}P_U+\frac{\epsilon}{1+\epsilon}\delta_{\star};
\\
	P_{\tilde{X}|\tilde{U}=u}&:=P_{X|U=u},\quad \forall u\in\mathcal{U};
\\
	P_{\tilde{X}|\tilde{U}=\star}&:=
	\frac{1+\epsilon}{\epsilon}\left(Q_X-\frac{1}{1+\epsilon}P_X\right)
\end{align}
where $\delta_{\star}$ is a point mass on $\star$.
Observe that $P_{\tilde{X}}=Q_X$,
\begin{align}
	D(P_{\tilde{X}|\tilde{U}}\|Q_X|P_{\tilde{U}})
	&=\frac{1}{1+\epsilon}D(P_{X|U}\|Q_X|P_U)
\nonumber
\\
	&\quad
	+\frac{\epsilon}{1+\epsilon}D
	\left(\left.\frac{1+\epsilon}{\epsilon}Q_X-\frac{1}{\epsilon}P_X
	\right\|Q_X\right)
\\
	&\le D(P_{X|U}\|Q_X|P_U) + \epsilon\ln\beta_X
\label{eq_phi1}
\end{align}
where we used $D(P\|Q_X)\le \ln\beta_X$ for every probability measure 
$P$, and
\begin{align}
	cD(P_{\tilde{Y}|\tilde{U}}\|\nu|P_{\tilde{U}})
	&=\frac{c}{1+\epsilon}D(P_{Y|U}\|\nu|P_U)
\nonumber
\\
	&\quad+\frac{c\epsilon}{1+\epsilon}D
	\left(\left.\frac{1+\epsilon}
	{\epsilon}Q_Y-\frac{1}{\epsilon}P_Y
	\right\|\nu\right) \\
	&\ge
	cD(P_{Y|U}\|\nu|P_U) -
	c\epsilon \ln(\beta_X\alpha_Y),
\label{eq_phi2}
\end{align}
where we used $\frac{1}{1+\epsilon}\ge 1-\epsilon$ and
\begin{align}
	D(P_{Y|U}\|\nu|P_U)
	&=
	D(P_{Y|U}\|Q_Y|P_U)
	+ 
	P_Y(\ln\tfrac{dQ_Y}{d\nu})
	\\
	&\le 
	D(P_{X|U}\|Q_X|P_U)
	+P_Y(\ln\tfrac{dQ_Y}{d\nu})
	\label{e_b14}	
	\\
	&\le
	\ln(\beta_X\alpha_Y),
	\label{e_b15}
\end{align}
where \eqref{e_b14} follows from the data processing inequality and \eqref{e_b15} follows since $D(P_{X|U}\|Q_X|P_U)\le\ln\beta_X$. 
The proof is concluded by subtracting 
\eqref{eq_phi1} from \eqref{eq_phi2}.
\end{proof}

As a quantitative form of the law of large numbers, we will use the 
following standard result (see, e.g., \cite{boucheron2013}).

\begin{lem}[Chernoff Bound for Bernoulli variables]
\label{lem_BE}
Assume that $X_1,\dots,X_n$ are i.i.d.~${\rm Ber}(p)$.
Then for any $\epsilon\in (0,\infty)$
\begin{align}
	\mathbb{P}\left[\sum_{i=1}^nX_i\ge (1+\epsilon)np\right]
	\le e^{-\frac{1}{3}\min\{\epsilon^2,\epsilon\}np}.
\end{align}
\end{lem}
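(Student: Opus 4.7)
The plan is a standard Cram\'er--Chernoff bound, followed by a short deterministic check to recover the $\min\{\epsilon^2,\epsilon\}$ form of the exponent.

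First I would apply the exponential Markov inequality: for any $t>0$,
\begin{align*}
\mathbb{P}\bigg[\sum_{i=1}^n X_i \ge (1+\epsilon)np\bigg]
\le \mathbb{E}[e^{t X_1}]^n\, e^{-t(1+\epsilon)np}
= \bigl(1-p+pe^t\bigr)^n e^{-t(1+\epsilon)np}.
\end{align*}
Invoking the pointwise inequality $1-p+pe^t \le \exp(p(e^t-1))$, which is immediate from $\ln(1+x)\le x$, the right-hand side is at most $\exp\!\bigl(np[e^t-1-(1+\epsilon)t]\bigr)$. Minimizing in $t>0$ is straightforward: the optimizer is $t^\star=\ln(1+\epsilon)$, and substitution yields the classical Chernoff estimate
\begin{align*}
\mathbb{P}\bigg[\sum_{i=1}^n X_i \ge (1+\epsilon)np\bigg] \le e^{-np\,\psi(\epsilon)},
\qquad \psi(\epsilon):=(1+\epsilon)\ln(1+\epsilon)-\epsilon.
\end{align*}

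The remaining task is to verify the purely deterministic inequality $\psi(\epsilon) \ge \tfrac{1}{3}\min\{\epsilon^2,\epsilon\}$ for all $\epsilon>0$. I would split at $\epsilon=1$. On $\epsilon\in(0,1]$, set $g(\epsilon):=\psi(\epsilon)-\tfrac{\epsilon^2}{3}$; then $g(0)=g'(0)=0$ and $g''(\epsilon)=\tfrac{1}{1+\epsilon}-\tfrac{2}{3}$, which is nonnegative on $[0,\tfrac12]$ and nonpositive on $[\tfrac12,1]$. Hence $g'$ is unimodal on $[0,1]$ with $g'(0)=0$ and $g'(1)=\ln 2-\tfrac23>0$, whence $g'\ge 0$ and therefore $g\ge 0$ on $[0,1]$. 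On $\epsilon\ge 1$, set $h(\epsilon):=\psi(\epsilon)-\tfrac{\epsilon}{3}$; then $h'(\epsilon)=\ln(1+\epsilon)-\tfrac13>0$ and $h(1)=2\ln 2-\tfrac43>0$, so $h\ge 0$ on $[1,\infty)$.

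The argument is essentially classical and I foresee no real obstacle. The only mild subtlety, purely arithmetic, is calibrating the universal constant $\tfrac13$ in the exponent so that the single expression $\tfrac{1}{3}\min\{\epsilon^2,\epsilon\}$ is a valid lower bound both for small $\epsilon$ (where the sharp Taylor coefficient of $\psi$ at the origin is $\tfrac12$) and for large $\epsilon$ (where $\psi(\epsilon)\sim\epsilon\ln\epsilon$). The conservative choice $\tfrac13$ achieves this with room to spare.
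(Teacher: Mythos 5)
Your proof is correct. The paper does not actually prove Lemma \ref{lem_BE} --- it is stated as a standard fact with a citation to \cite{boucheron2013} --- and your Cram\'er--Chernoff derivation (exponential Markov, the MGF bound $1-p+pe^t\le e^{p(e^t-1)}$, optimization at $t^\star=\ln(1+\epsilon)$ to get the exponent $\psi(\epsilon)=(1+\epsilon)\ln(1+\epsilon)-\epsilon$, and the elementary verification that $\psi(\epsilon)\ge\tfrac13\min\{\epsilon^2,\epsilon\}$ by splitting at $\epsilon=1$) is precisely the standard argument the reference supplies; all the numerical checks ($g'(1)=\ln 2-\tfrac23>0$, $h(1)=2\ln 2-\tfrac43>0$) are accurate.
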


We can now conclude the proof of Theorem \ref{thm_sbl}.

\begin{proof}[Proof of Theorem~\ref{thm_sbl}]
We denote by $\widehat{P}_{X^n}$ the empirical measure of $X^n\sim 
Q_X^{\otimes n}$. Let $n>3\beta_X\ln\frac{|\mathcal{X}|}{\delta}$ and 
define
\begin{align}
	\epsilon_n &:=
	\sqrt{\frac{3\beta_X}{n}\ln\frac{|\mathcal{X}|}{\delta}}
	\in
	(0,1),
\label{e_epn}
	\\
	\mathcal{C}_n &:=\{x^n\colon \hat{P}_{x^n}\le (1+\epsilon_n)Q_X\}.
\label{e_104}
\end{align}
As for each $x\in\mathcal{X}$
\begin{align}
	\mathbb{P}[\widehat{P}_{X^n}(x)> (1+\epsilon_n)Q_X(x)]
	\le
	e^{-\frac{n}{3}Q_X(x)\epsilon_n^2} 
	\le
	\frac{\delta}{|\mathcal{X}|}
\end{align}
by Lemma~\ref{lem_BE}, it follows by the union bound that
$Q_X^{\otimes n}[\mathcal{C}_n]\ge 1-\delta$.

Now consider any probability measure $P_{X^n}\ll\mu_n:=Q_X^{\otimes 
n}|_{\mathcal{C}_n}$. Then we can estimate, following essentially
\cite[Lemma~9]{ISIT_lccv_smooth2016},
\begin{align}
	&cD(P_{Y^n}\|\nu^{\otimes n})-D(P_{X^n}\|\mu_n)
\nonumber\\
&=
	cD(P_{Y^n}\|\nu^{\otimes n})-D(P_{X^n}\|Q_X^{\otimes n})
\label{e115}
\\
&=
	c\sum_{i=1}^{n}D(P_{Y_i|Y^{i-1}}\|\nu|P_{Y^{i-1}})
	-\sum_{i=1}^{n}D(P_{X_i|X^{i-1}}\|Q_X|P_{X^{i-1}})
\label{e116bis}
\\
&\le
	c\sum_{i=1}^{n}D(P_{Y_i|X^{i-1}}\|\nu|P_{X^{i-1}})
	-\sum_{i=1}^{n}D(P_{X_i|X^{i-1}}\|Q_X|P_{X^{i-1}})
\label{e117}
\\
	&=n[c\,D(P_{Y_I|IX^{I-1}}\|\nu|P_{IX^{I-1}})
	-D(P_{X_I|IX^{I-1}}\|Q_X|P_{IX^{I-1}})]
\label{e118}
\\
&\le
	n\phi(P_{X_I}).
\label{e105}
\end{align}
\eqref{e115} follows from the fact that $P_{X^n}$ is supported on
$\mathcal{C}_n$; \eqref{e116bis} is the chain rule of relative entropy;
\eqref{e117} follows from the convexity of relative entropy since 
$Y_i-X^{i-1}-Y^{i-1}$ under $P_{X^nY^n}$; in \eqref{e118} we defined $I$ to 
be a random variable uniformly distributed on $\{1,\dots,n\}$ and 
independent of $X^n,Y^n$; and in \eqref{e105} $\phi$ is defined in 
\eqref{eq_dstarphi}. But note that $P_{X_I} = P_{X^n}(\hat P_{X^n})\le 
(1+\epsilon_n)Q_X$ as $P_{X^n}$ is supported on $\mathcal{C}_n$.
Thus, \eqref{e105}, Lemma~\ref{lem_us} and \eqref{eq_dstarphi} yield
\begin{align}
	{\rm d}(\mu_n,Q_{Y|X}^{\otimes n},\nu^{\otimes n},c)
	&=
	\sup_{P_{X^n}\ll\mu_n}\{cD(P_{Y^n}\|\nu^{\otimes n})-D(P_{X^n}\|\mu_n)\}
	\\
	&
	\le
	n{\rm d}^{\star}(Q_X,Q_{Y|X},\nu,c) +
	\ln(\beta_X^{c+1}\alpha_Y^c)n\epsilon_n,
\end{align}
and the proof is complete.
\end{proof}

\subsection{Proof of \eqref{e_49} in the Gaussian case}
\label{app_dgaussian}

We now prove the following Gaussian analogue of Theorem \ref{thm_sbl}.

\begin{thm}\label{smooth:thm_gaussian}
Let $Q_X=\mathcal{N}(0,\sigma^2)$, let $Q_{Y|X=x}=\mathcal{N}(x,1)$, and 
let $\nu$ be the Lebesgue measure on $\mathbb{R}$.
Then for any $\delta\in(0,1)$ and $n\ge 20\ln\frac{2}{\delta}$, we may 
choose a set $\mathcal{C}_n\subseteq\mathbb{R}^n$ with
$Q_X^{\otimes n}[\mathcal{C}_n]\ge 1-\delta$ such that
\begin{align}
	{\rm d}(\mu_n,Q_{Y|X}^{\otimes n},\nu^{\otimes n},c)\le
	n{\rm d}^{\star}(Q_X,Q_{Y|X},\nu,c)
	+\sqrt{6n\ln\frac{2}{\delta}}
\label{smooth:e55}
\end{align}
for any $c>0$, where we defined $\mu_n:=Q_X^{\otimes n}|_{\mathcal{C}_n}$.
\end{thm}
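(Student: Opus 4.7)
The strategy mirrors the discrete proof of Theorem~\ref{thm_sbl}, with two key substitutions: chi-squared concentration replaces empirical-measure concentration in the construction of the typical set, and a Gaussian second-moment continuity lemma replaces the pointwise Lemma~\ref{lem_us}. Concretely, I would first define
\begin{align}
\mathcal{C}_n := \{x^n\in\mathbb{R}^n : \tfrac{1}{n}\|x^n\|^2 \le (1+\epsilon_n)\sigma^2\},
\end{align}
with $\epsilon_n := \sqrt{6\ln(2/\delta)/n}$ and $\mu_n := Q_X^{\otimes n}|_{\mathcal{C}_n}$. A standard Laurent--Massart-type Chernoff bound for $\chi^2_n$ yields $Q_X^{\otimes n}[\mathcal{C}_n]\ge 1-\delta$, and the hypothesis $n\ge 20\ln(2/\delta)$ is precisely what makes $\epsilon_n$ small enough for the purely quadratic form of the Chernoff tail to apply cleanly.

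Next I would repeat the chain-rule manipulation \eqref{e115}--\eqref{e118} verbatim. For any probability measure $P_{X^n}\ll\mu_n$, introducing an independent uniform index $I$ on $\{1,\ldots,n\}$ and setting $U:=(I,X^{I-1})$, convexity of relative entropy combined with the Markov structure $Y_i - X^{i-1} - Y^{i-1}$ gives
\begin{align}
cD(P_{Y^n}\|\nu^{\otimes n}) - D(P_{X^n}\|\mu_n) \le n\bigl[cD(P_{Y_I|U}\|\nu|P_U) - D(P_{X_I|U}\|Q_X|P_U)\bigr].
\end{align}
Decomposing $D(P_{X_I|U}\|Q_X|P_U) = D(P_{X_I|U}\|P_{X_I}|P_U) + D(P_{X_I}\|Q_X)$, the right-hand side is at most $n[\phi(P_{X_I}) - D(P_{X_I}\|Q_X)]$ with $\phi := {\rm d}^\star(\cdot, Q_{Y|X},\nu,c)$. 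The crucial consequence of $\supp P_{X^n} \subseteq \mathcal{C}_n$ is that the marginal $P_{X_I}$ satisfies $\mathbb{E}[X_I^2]\le (1+\epsilon_n)\sigma^2$.

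The main obstacle is establishing the Gaussian analogue of Lemma~\ref{lem_us}, namely
\begin{align}
\phi(P_X) - D(P_X\|Q_X) \le \phi(Q_X) + \epsilon
\qquad\text{whenever } \mathbb{E}_{P_X}[X^2]\le(1+\epsilon)\sigma^2.
\end{align}
The discrete pointwise mixing construction does not transfer because a second-moment bound does not give $P_X\le(1+\epsilon)Q_X$ pointwise; moreover, one cannot hope for the cleaner inequality $\phi(P_X)\le\phi(Q_X)+O(\epsilon)$, since Gaussian extremality of $\phi$ can fail (e.g.\ for discrete $P_X$, $\phi(P_X)$ can exceed $\phi(\mathcal{N}(0,\sigma^2))$, but then $D(P_X\|Q_X)=\infty$ and the correction absorbs the excess). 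Retaining the $-D(P_{X_I}\|Q_X)$ term is therefore essential. For $P_X$ close to $Q_X$ in relative entropy, one exploits that Gaussian $\phi$ is smooth as a function of the second moment via the explicit formula derivable from \eqref{eq_insidesup}; for Gaussian $P_X=\mathcal{N}(\mu,s^2)$ the inequality reduces to a direct computation after using the translation-invariance of $\phi$, and the general case follows by a Gaussianization argument based on the entropy power inequality.

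Combining the three ingredients and taking the supremum over $P_{X^n}\ll\mu_n$ gives
\begin{align}
{\rm d}(\mu_n, Q_{Y^n|X^n}, \nu^{\otimes n}, c) \le n\,{\rm d}^\star(Q_X, Q_{Y|X}, \nu, c) + n\epsilon_n = n\,{\rm d}^\star(Q_X, Q_{Y|X}, \nu, c) + \sqrt{6n\ln(2/\delta)},
\end{align}
completing the proof.
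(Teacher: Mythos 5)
Your overall architecture---restrict to a typical set, single-letterize via the chain rule, and finish with a continuity estimate for the single-letter functional---matches the paper's, and your chain-rule step together with the decomposition $D(P_{X_I|U}\|Q_X|P_U)=I(U;X_I)+D(P_{X_I}\|Q_X)$ is a correct (and equivalent) reformulation of the paper's differential-entropy computation. However, the construction of $\mathcal{C}_n$ contains a genuine error. The typical set must be a \emph{two-sided} shell $(1-\epsilon_n)n\sigma^2\le\|x^n\|^2\le(1+\epsilon_n)n\sigma^2$; with your one-sided upper constraint the conclusion \eqref{smooth:e55} is false. Indeed, for $m<1$ the measure $P_{X^n}$ obtained by restricting $\mathcal{N}(0,m\sigma^2\mathbf{I}_n)$ to your $\mathcal{C}_n$ is admissible (the restriction removes only exponentially little mass), and writing $F(M):=\sup\{h(X|U)-c\,h(Y|U):\mathrm{Var}(X)\le M\}$ one computes
\begin{align}
cD(P_{Y^n}\|\nu^{\otimes n})-D(P_{X^n}\|\mu_n)
= n\,{\rm d}^{\star}(Q_X,Q_{Y|X},\nu,c)
+ n\Big(F(m\sigma^2)-F(\sigma^2)+\tfrac{1-m}{2}\Big)+o(n).
\end{align}
By the Gaussian extremality result \cite[Theorem~14]{lccv2015} invoked at the end of the paper's proof, when $c>1+\sigma^{-2}$ the supremum defining $F$ is attained at variance $\tfrac{1}{c-1}<\sigma^2$, so $F$ is constant on $[\tfrac{1}{c-1},\infty)$ and the excess equals $\tfrac{1-m}{2}n=\Omega(n)$ for $m\in(\tfrac{1}{(c-1)\sigma^2},1)$. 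Equivalently, your proposed lemma ``$\phi(P_X)-D(P_X\|Q_X)\le\phi(Q_X)+\epsilon$ whenever $\mathbb{E}_{P_X}[X^2]\le(1+\epsilon)\sigma^2$'' is false: for $P_X=\mathcal{N}(0,m\sigma^2)$ the left side exceeds $\phi(Q_X)$ by exactly $F(m\sigma^2)-F(\sigma^2)+\tfrac{1-m}{2}$, and the correction $D(\mathcal{N}(0,m\sigma^2)\|\mathcal{N}(0,\sigma^2))=\tfrac12(m-1-\ln m)$ is too small to absorb the gain from shrinking the variance. The paper's lower shell constraint is precisely what excludes these measures: it is used to bound $P_{X^n}(\ln\tfrac{dQ_X^{\otimes n}}{d\nu^{\otimes n}})\le -nh(Q_X)+\tfrac12\sqrt{6n\ln\tfrac{2}{\delta}}$, and the probability budget $\delta$ must then cover both tails (Lemma \ref{lem_chisq}).

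Second, even after repairing the set, your continuity lemma---which carries the entire weight of the argument---is asserted rather than proved. The corrected statement (with $\mathbb{E}_{P_X}[X^2]\in[(1-\epsilon)\sigma^2,(1+\epsilon)\sigma^2]$) does hold, but its proof reduces to exactly the two nontrivial facts the paper establishes: the scaling estimate $F((1+\epsilon)\sigma^2)\le F(\sigma^2)+\tfrac{\epsilon}{2}$ and the identity $F(\sigma^2)-h(Q_X)={\rm d}^{\star}(Q_X,Q_{Y|X},\nu,c)$, the latter resting on the Gaussian saddle point of \cite{lccv2015}. The remarks about ``smoothness via \eqref{eq_insidesup}'' and ``Gaussianization by the entropy power inequality'' gesture at this but do not constitute a proof, and the constant matters: to land on $\sqrt{6n\ln\tfrac{2}{\delta}}$ each half of the shell may contribute only $\tfrac12\epsilon_n$ per coordinate, which is exactly what the two estimates above deliver and what your unquantified sketch does not.
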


In the Gaussian case, the choice of typical set $\mathcal{C}_n$ is rather 
simple: it is a spherical shell of radius $\sim\sqrt{n}$ and width 
$O(1)$. Controlling the probability of such a spherical shell is a 
standard exercise (cf.\ \cite[p.\ 43]{boucheron2013}).

\begin{lem}[Chi-square tail bound]
\label{lem_chisq}
For $X^n\sim\mathcal{N}(0,\mathbf{I}_n)$
and any $t>0$
\begin{equation}
	\mathbb{P}[|\|X^n\|^2-n|\ge 2\sqrt{nt}+2t]\le 
	2e^{-t}.
\end{equation}
\end{lem}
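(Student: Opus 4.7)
Since $X^n\sim\mathcal{N}(0,\mathbf{I}_n)$, the squared norm $\|X^n\|^2=\sum_{i=1}^n X_i^2$ is $\chi^2_n$-distributed, and the desired inequality is the classical two-sided tail bound of Laurent and Massart. The plan is to establish separately the one-sided bounds
\begin{equation*}
\mathbb{P}[\|X^n\|^2 - n \ge 2\sqrt{nt}+2t]\le e^{-t},\qquad
\mathbb{P}[\|X^n\|^2 - n \le -2\sqrt{nt}]\le e^{-t},
\end{equation*}
and to combine them by a union bound. The claim then follows because $-2\sqrt{nt}-2t\le -2\sqrt{nt}$, so the event $\{\|X^n\|^2-n\le -2\sqrt{nt}-2t\}$ is contained in the event $\{\|X^n\|^2-n\le -2\sqrt{nt}\}$ bounded above.

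For each one-sided estimate I would use the Chernoff (exponential Markov) method based on the explicit moment generating function $\mathbb{E}[e^{\lambda\|X^n\|^2}]=(1-2\lambda)^{-n/2}$, valid for $\lambda<1/2$ and obtained by an elementary Gaussian integral. For the upper tail this gives
\begin{equation*}
\mathbb{P}[\|X^n\|^2\ge n+s]\le \exp\!\left(-\lambda(n+s)-\tfrac{n}{2}\ln(1-2\lambda)\right),
\end{equation*}
after which I would invoke the elementary bound $-\ln(1-u)\le u + u^2/[2(1-u)]$ with $u=2\lambda$ and optimize over $\lambda\in(0,1/2)$. The analogous lower-tail estimate comes from applying Markov's inequality to $e^{-\lambda\|X^n\|^2}$ and using the tighter one-sided inequality $\ln(1+u)-u\ge -u^2/2$, which is why the lower deviation needs only the $2\sqrt{nt}$ term (without the $2t$ correction).

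The remaining step is purely algebraic: verify that with the choice $s=2\sqrt{nt}+2t$ (resp.\ $s=2\sqrt{nt}$) the optimized Chernoff exponent is at most $-t$. The main (minor) technical obstacle is packaging the optimization to produce precisely the clean form stated; in practice one avoids solving the quadratic explicitly by substituting a guess such as $\lambda = \tfrac{1}{2}(1-\sqrt{n/(n+s)})$ and checking the resulting inequality directly. Since this is a textbook calculation (see, e.g., \cite[p.~43]{boucheron2013}), no conceptual novelty is required, and the proof is mostly a matter of bookkeeping.
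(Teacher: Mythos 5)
Your proposal is correct: it is exactly the Laurent--Massart/Chernoff argument that the paper itself does not spell out but delegates to the cited reference \cite[p.~43]{boucheron2013}, including the correct asymmetry between the upper tail (which needs the $+2t$ correction via $-\ln(1-u)\le u+u^2/(2(1-u))$) and the lower tail (which needs only $2\sqrt{nt}$ via $\ln(1+u)\ge u-u^2/2$). The union-bound packaging into the two-sided form stated in the lemma is also handled correctly, so there is nothing to add.
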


We now turn to the proof of Theorem \ref{smooth:thm_gaussian}.

\begin{proof}[Proof of Theorem \ref{smooth:thm_gaussian}]
Define
\begin{align}
	A &:= \sqrt{6\ln\frac{2}{\delta}},\\
	\mathcal{C}_n &:=
	\{x^n: n-A\sqrt{n}\le \tfrac{1}{\sigma^2} \|x^n\|^2 \le 
	n+A\sqrt{n}\}.
\end{align}
Then we have $Q_X^n[\mathcal{C}_n]\ge 1-\delta$ for
$n\ge 20\ln\frac{2}{\delta}$ by Lemma \ref{lem_chisq}.

We proceed similarly to the proof of Theorem \ref{thm_sbl} 
(following essentially \cite[Theorem~13]{ISIT_lccv_smooth2016}).
Consider any probability measure 
$P_{X^n}\ll\mu_n:=Q_X^{\otimes n}|_{\mathcal{C}_n}$. Then
\begin{align}
\nonumber
	&cD(P_{Y^n}\|\nu^{\otimes n}) - D(P_{X^n}\|\mu_n) \\
\label{e0e0}
	&= cD(P_{Y^n}\|\nu^{\otimes n}) - D(P_{X^n}\|Q_X^{\otimes n}) \\
\label{e0e1}
	&= cD(P_{Y^n}\|\nu^{\otimes n}) - D(P_{X^n}\|\nu^{\otimes n})
		+ P_{X^n}(\ln\tfrac{dQ_X^{\otimes n}}{d\nu^{\otimes n}}) \\
\label{e0e2}
	&\le h(P_{X^n}) - ch(P_{Y^n}) 
		- nh(Q_X)
		+ \tfrac{1}{2}A\sqrt{n},
\end{align}
where in \eqref{e0e0} we used that $P_{X^n}$ is supported on 
$\mathcal{C}_n$; \eqref{e0e1} follows from the definition of relative
entropy; and \eqref{e0e2} follows by substituting the explicit form of the 
Gaussian density $\frac{dQ_X}{d\nu}$ and using the definition of 
$\mathcal{C}_n$.

We now proceed to estimate using the chain rule
\begin{align}
\nonumber
	&h(P_{X^n}) - ch(P_{Y^n}) \\
	&=\sum_{i=1}^n \{h(X_i|X^{i-1}) - ch(Y_i|Y^{i-1})\} \\
\label{e1e0}
	&\le\sum_{i=1}^n \{h(X_i|X^{i-1}) - ch(Y_i|X^{i-1})\} \\
\label{e1e1}
	& = n\{h(X_I|IX^{I-1}) - ch(Y_I|IX^{I-1})\} \\
\label{e1e2}
	&\le nF((1+\tfrac{A}{\sqrt{n}})\sigma^2),
\end{align}
where in \eqref{e1e0} we used concavity of differential entropy;
in \eqref{e1e1} we defined $I$ to be uniformly distributed on 
$\{1,\ldots,n\}$ and independent of everything else; and in
\eqref{e1e2} we used that $\mathrm{Var}(X_I)\le 
(1+\tfrac{A}{\sqrt{n}})\sigma^2$
by the definition of $\mathcal{C}_n$, with
\begin{equation}
\label{eq_fmfm}
	F(M) := \sup_{P_{UX}\colon \mathrm{Var}(X)\le M}
	\{h(X|U)-c\,h(Y|U)\}
\end{equation}
(this quantity plays the role of $\phi(P_X)$ in the present setting).
But note that by the scaling property of the differential entropy, we 
readily obtain
\begin{equation}
\label{e2e2}
	nF((1+\tfrac{A}{\sqrt{n}})\sigma^2) =
	nF(\sigma^2) + \tfrac{1-c}{2}n\ln(1+\tfrac{A}{\sqrt{n}})
	\le 
	nF(\sigma^2)+
	\tfrac{1}{2}A\sqrt{n}.
\end{equation}
Combining \eqref{e0e2}, \eqref{e1e2}, and \eqref{e2e2}, we have shown
\begin{align}
	{\rm d}(\mu_n,Q_{Y|X}^{\otimes n},\nu^{\otimes n},c)
	&=
	\sup_{P_{X^n}\ll\mu_n}\{
	cD(P_{Y^n}\|\nu^{\otimes n}) - D(P_{X^n}\|\mu_n)\} \\
	&\le
	nF(\sigma^2) - nh(Q_X) + A\sqrt{n}.
\end{align}
To conclude the proof, it remains to show that
\begin{equation}
	F(\sigma^2) - h(Q_X) =
	{\rm d}^\star(Q_X,Q_{Y|X},\nu,c).
\label{eq_finalequationofthepaper}
\end{equation}
To this end, we note that arguing as in \eqref{e0e1}, we may
write Definition \ref{defn_dstar} as
\begin{equation}
\label{eq_dstargauss}
	{\rm d}^\star(Q_X,Q_{Y|X},\nu,c) =
	\sup_{P_{UX}\colon P_X=Q_X}
	\{h(X|U)-c\,h(Y|U)\} - h(Q_X).
\end{equation}
Thus, the inequality $\ge$ in \eqref{eq_finalequationofthepaper}
is immediate from the definition \eqref{eq_fmfm}. For the converse
direction we require the fact, proved in \cite[Theorem 14]{lccv2015}, that 
for Gaussian channels the supremum in \eqref{eq_fmfm} is achieved for
$P_{UX}$ such that $U=0$ and $X$ is Gaussian. In particular, 
$F(\sigma^2) = h(X)-ch(Y)$ with $X\sim\mathcal{N}(0,a^2)$ for some
$a\le\sigma$. Thus, choosing $P_{UX}$ in \eqref{eq_dstargauss} 
so that $X=U+Z$ with $Z\sim\mathcal{N}(0,a^2)$, 
$U\sim\mathcal{N}(0,\sigma^2-a^2)$, and $U,Z$ independent yields
$\le$ in \eqref{eq_finalequationofthepaper}.
\end{proof}

\small

\bibliographystyle{abbrv}
\bibliography{ref}

\end{document}